\newcommand{\email}[1]{\href{mailto:#1}{\texttt{#1}}}
\DeclareMathSymbol{:}{\mathord}{operators}{"3A}
\newcommand{\F}{\mathbb{F}}
\newcommand{\E}{\mathbb E}
\newcommand{\rk}{\mathrm{rk}}
\newcommand{\supp}{\mathrm{supp}}
\newcommand{\C}{\mathcal{C}}
\newcommand{\D}{\mathcal{D}}
\newcommand{\mcX}{\mathcal X}
\newcommand{\rowsp}{\mathrm{rowsp}}
\newcommand{\im}{\mathrm{im}}
\renewcommand{\epsilon}{\varepsilon}
\newcommand{\NN}{\mathbb N}
\newcommand{\Bil}{\mathrm{Bil}}
\newcommand{\mcB}{\mathcal B}
\newcommand{\qbinom}[2]{\genfrac[]{0pt}{0}{#1}{#2}}
\renewcommand{\Pr}{\mathbb{P}}
\newcommand\restr[2]{{
  \left.\kern-\nulldelimiterspace 
  #1 
  \vphantom{\big|} 
  \right|_{#2} 
  }}
\renewcommand\@makefnmark{}
\theoremstyle{plain} \numberwithin{equation}{section}
\newtheorem{theorem}{Theorem}[section]
\newtheorem{corollary}[theorem]{Corollary}
\newtheorem{Lemma}[theorem]{Lemma}
\newtheorem{proposition}[theorem]{Proposition}
\theoremstyle{definition}
\newtheorem{definition}[theorem]{Definition}
\newtheorem{remark}[theorem]{Remark}
\newtheorem{example}[theorem]{Example}
\newtheorem{notation}[theorem]{Notation}
\newtheorem{proposition/definition}[theorem]{Proposition/Definition}
\renewcommand{\phi}{\varphi}
\title{\textbf{The Star Product of Uniformly Random Codes}}
\author{Johan Vester Dinesen\thanks{Department of Mathematics and Systems Analysis, Aalto University, Espoo, Finland (\email{johan.v.dinesen@aalto.fi},\email{ragnar.freij@aalto.fi},\email{camilla.hollanti@aalto.fi})} \and Ragnar Freij-Hollanti$^*$ \and  Camilla Hollanti$^*$ \and Benjamin Jany\thanks{Department of Mathematics and Computer Science, Eindhoven University of Technology, Eindhoven, The Netherlands (\email{b.jany@tue.nl}, \email{a.ravagnani@tue.nl})} \and Alberto Ravagnani$^\dagger$}
\date{}
\begin{document}
\footnotetext{\textbf{Funding:}  J.V.~Dinesen was supported by the European Union MSCA Doctoral Networks (HORIZON-MSCA-2021-DN-01, Project 101072316). C. Hollanti was supported by the Business Finland Co-Innovation Consortium (Grant No. BFRK/473/31/2024, 5845483) and by the TUM Global Visiting Professor Program as part of the Excellence Strategy of the federal and state governments of Germany. B.~Jany was supported by the Casimir Institute at the Eindhoven University of Technology and by the Dutch Research Council via grant VI.Vidi.203.045. A.~Ravagnani was  supported by the Dutch Research Council via grant VI.Vidi.203.045.
}
\maketitle

\abstract{{
We consider the problem of determining the expected dimension of the star product of two uniformly random linear codes that are not necessarily of the same dimension. We use a correspondence between the star product and the evaluation of bilinear forms to provide an explicit lower bound on the expected star product dimension. We prove that the expected dimension asymptotically reaches its maximum possible value as the field size increases. Furthermore, we show that the same maximal dimension is achieved asymptotically as the code dimensions increase, subject to a condition bounding their relative growth rates. We also analyze the variance of the star product dimension, providing explicit asymptotic upper bounds. Finally, we discuss the implications of these results for private information retrieval, secure distributed matrix multiplication, quantum error correction, and cryptanalysis.}}

\section{Introduction}
The \textit{star product}, also called the \textit{Schur} or \textit{Hadamard product}, between subspaces $U,V\leq \F_q^n$ is the subspace spanned by the component-wise products of vectors from $U$ and $V$; see~\cite{PELLIKAAN1992369}. This operation is attracting increasing attention because of its connections with several areas of coding theory, information theory, and cryptography.

In the context of quantum error correction, star products have been used to construct stabilizer codes \cite{CSST,rengaswamy2020optimality,bodur2025schur} and to give an algebraic characterization of binary CSS-T codes \cite{camps2023algebraic}. In secure distributed matrix multiplication (SDMM) and private information retrieval (PIR), the algebraic structure of star products controls communication complexity and security guarantees \cite{PIR2017,SDMM2024}. The star product also appears naturally in secret sharing and secure multiparty computation \cite{Cramer1,MPC}, and  in linear exact repair schemes for distributed storage \cite{guruswamiwooters,matthewsrepair}. In code-based cryptography, star products appear in attacks through distinguishers \cite{10.1007/s10623-014-9967-z,Hormann2023Distinguishing,Mora2022dual}, mostly via squares of codes, i.e., a star product of a code with itself. These rely on certain structured codes like generalized Reed-Solomon (GRS) codes  typically having a much lower dimension than random codes when starred with themselves.  A well-known exception to this is the (binary) Goppa code that can withstand such distinguishing attacks with carefully chosen parameters \cite{Mora2022dual}.

The properties of star products have been studied in various algebraic settings \cite{randriambololona:hal-02287120}. For special classes of codes, particularly evaluation codes such as Reed-Solomon codes and algebraic geometry codes, star products are well understood and have proven useful in the construction of decoding algorithms \cite{PELLIKAAN1996229,7942048}. A notable contribution in this direction is the analysis of the square of a random linear code, $\mathcal{C}^2 = \mathcal{C} \star \mathcal{C}$ \cite{cascudo2015squares}, where asymptotic estimates of its dimension were derived. Alongside this, earlier work \cite{7282444} investigated the dimension of the star product of two random codes by analyzing the linear independence of rank 1 matrices, establishing that the expected dimension equals $\min\{n, k_1k_2\}$ with high probability for certain restricted parameter regimes.

{ 
In this paper, we study the expected dimension of the star product of two independent, uniformly random codes. More precisely, we investigate the typical behavior of $\dim (\mathcal{C}_1\star\mathcal{C}_2)$, where $\mathcal{C}_1,\mathcal{C}_2\leq \mathbb{F}_q^n$ are independent, uniformly random linear codes of prescribed dimensions. Extending the ideas of \cite{cascudo2015squares}, our approach establishes an explicit formula for the expected size of the kernel of a linear evaluation map whose image is exactly the star product of the two codes. While the approach in \cite{7282444} estimates the expected dimension by bounding the probability that random rank-1 matrices fail to achieve maximal span, our alternate perspective yields an explicit lower bound on $\mathbb{E}[\dim (\mathcal{C}_1\star\mathcal{C}_2)]$ for all parameter choices. We then determine the asymptotics of this bound both as the field size $q$ and the dimensions of $\mathcal{C}_1$ and $\mathcal{C}_2$ grow. Crucially, our approach yields asymptotic results in the field size that are independent of the relationship between the dimensions of the two codes. Furthermore, we leverage these bounds to upper bound the variance of the star product dimension, showing that it decays asymptotically to $0$ in certain parameter regimes. We provide a detailed technical comparison between our results and the results of \cite{7282444} in Remark \ref{remark:otherpaper}.
}

In the second part of the paper, we discuss the implications of these results for private information retrieval applications and secure distributed matrix multiplication, as well as the potential of the results obtained to be utilized in distinguisher-type attacks in code-based cryptography.

The remainder of this paper is outlined as follows. Section~\ref{sec:prelim} recalls the coding theory concepts needed throughout the paper. Section~\ref{sec:bil} develops the connection between star products and bilinear forms. In Section~\ref{sec:asympq} and Section~\ref{sec:asympindim}, we analyze the asymptotic behavior of the star product, first as the field size grows and then as the code dimensions increase under specific admissibility conditions. In both cases, we conclude that the expected dimension attains its maximum value and that the variance vanishes asymptotically. Finally, in Section~\ref{sec:app} we discuss the implications of these results for certain applications.

\section{Preliminaries on linear codes}
\label{sec:prelim}

Throughout this paper, $\F$ denotes a field, $q$ a prime power, and $\F_q$ the finite field of $q$ elements. We denote subsets by $\subseteq$ and subspaces by $\leq$. Vectors are row vectors unless otherwise specified. 
We let $A^{(i)}$ denote the $i$th column
of a matrix $A$. The natural numbers are  $\NN = \{0,1,2,\ldots\}$. For $n,k\in \NN$, the \emph{$q$-binomial coefficient} is
\[
\qbinom{n}{k}_q = \frac{(1-q^n)(1-q^{n-1})\ldots(1-q^{n-k+1})}{(1-q)(1-q^2)\ldots(1-q^k)},
\]
which evaluates to 0 when $k>n$. When both the numerator and denominator are empty products, we set the quantity to 1. We will also allow negative inputs for both $q$-binomials and binomials, but in that case they are defined to be 0.

We start by recalling some notions from coding theory. For  general reference, we refer the reader to~\cite{vanLint}. \\

\begin{definition} A \emph{linear code} is a non-zero $\F_q$-linear subspace $\C\leq \F_q^n$. The \emph{minimum Hamming distance} of $\C$ is 
$d(\C) = \min\{ w(c) \mid c\in \C, \; c\neq 0\},$
where $w$ denotes the Hamming weight on $\F_q^n$.
\end{definition}

For the remainder of this paper, all codes are assumed to be linear. We recall some other coding theory concepts in passing.

The \emph{support} of a code $\C$ is $\supp(\C) = \{ i\in [n] \mid \text{ there exists } c\in \C \text{ with } c_i \neq 0\}.$ $\C$ is called \emph{degenerate} if $\supp (\C) \neq [n]$. For $I\subseteq [n]$, $\pi_I(\C)$ is the code obtained by projecting $\C$ onto the coordinates indexed by $I$, that is, $\pi_I(\C) = \{(c_i)_{i\in I}\mid c\in \C\}$.

A \emph{generator matrix} of a code $\C\leq \F_q^n$ with dimension $\dim \C = k$ is a $k\times n$ matrix $G$ over~$\F_q$ such that $\rowsp \,G = \C$. A generator matrix is in \emph{systematic form} if $G=[I_k \mid A]$ for some matrix $A$. A \emph{parity-check matrix} of $\C$ is a matrix $H\in \F_q^{(n-k)\times n}$ such that $\C = \ker H^\top$. We say that codes $\C_1,\C_2\leq \F_q^n$ are \emph{monomially equivalent} if there exists a monomial $n\times n$ matrix~$M$ such that if $G_1$ is a generator matrix of $\C_1$, then $G_1M$ is a generator matrix of $\C_2$.

All codes $\C\leq \F_q^n$ satisfy the \emph{Singleton bound}: $\dim\C \leq n - d(\C) +1$. Codes that meet the Singleton bound with equality are called \textit{Maximum Distance Separable} (\textit{MDS}) codes. Note that MDS codes are always non-degenerate.

The key player of this paper is the star product.

\begin{definition}
   The \emph{star product} of
vectors $x,y\in \F_q^n$ is their component-wise multiplication, denoted by $x\star y = (x_1,\ldots,x_n) \star (y_1,\ldots,y_n) = (x_1y_1,\ldots,x_ny_n)$. 
The  \emph{star product}
of linear codes $\C_1,\C_2\leq \F_q^n$
is the $\F_q$-linear span of the pairwise star products of two codewords, that is, $$\C_1 \star \C_2 =\langle x\star y\mid x\in \C_1,y\in \C_2\rangle_{\F_q}.$$
\end{definition}

The star product is also called
\emph{Schur product} or \emph{Hadamard product}. 
In general, it is difficult to determine the dimension of the star product of two codes. A trivial upper bound is $\dim (\C_1\star\C_2)\leq \min\{\dim \C_1 \dim \C_2,n\}$ since fixing bases of $\C_1$ and $\C_2$, and then taking all pairwise star products between these basis vectors yields a spanning set for $\C_1 \star \C_2$ (and in general not a basis). 
Other bounds were derived with additional assumptions on either of the two codes.

\begin{Lemma}[\cite{Singletonupper}, Lemma 5] \label{lemma:singletonupper}
Let $\C_1,\C_2 \leq \F_q^n$ be non-degenerate codes. Then,
\[
    \dim (\C_1 \star \C_2) \geq \min\{n,\dim \C_1+ d(\C_2^\perp)-2, \dim \C_2 + d(\C_1^{\perp}) -2\}.
\]
\end{Lemma}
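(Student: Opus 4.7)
The plan is to prove the asymmetric form $\dim(\C_1\star\C_2)\ge\min\{n,\, k_1+d(\C_2^\perp)-2\}$; the other summand $k_2+d(\C_1^\perp)-2$ then follows by interchanging $\C_1$ and $\C_2$. Set $d_2=d(\C_2^\perp)$ and $V=(\C_1\star\C_2)^\perp$. The identity $\langle v,\,c_1\star c_2\rangle=\langle v\star c_1,\,c_2\rangle$ identifies $V$ with $\{v\in\F_q^n : v\star c_1\in\C_2^\perp \text{ for every } c_1\in\C_1\}$, so the target becomes $\dim V\le\max\{0,\, n-k_1-d_2+2\}$.

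The heart of the argument is to sandwich $\dim\pi_S(\C_1)$ for $S:=\supp(v)$ with $v\in V\setminus\{0\}$. The linear map $\phi_v:\C_1\to\C_2^\perp\cap\F_q^S$, $c_1\mapsto v\star c_1$, has image of dimension $\dim\pi_S(\C_1)$; Singleton applied to the codomain (a code of length $|S|$ with minimum distance at least $d_2$) gives $\dim\pi_S(\C_1)\le|S|-d_2+1$. On the other hand, fixing an information set $I_1\subseteq[n]$ of $\C_1$ (so that $|I_1|=k_1$ and, in a suitable systematic generator matrix, the columns indexed by $I_1$ are the standard basis), the columns of $G_1$ indexed by $S\cap I_1$ are linearly independent, yielding the matching lower bound $\dim\pi_S(\C_1)\ge|S\cap I_1|$.

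Comparing the two inequalities produces $|\supp(v)\cap([n]\setminus I_1)|\ge d_2-1$ for every nonzero $v\in V$. Since non-degeneracy of $\C_2$ forces $d_2\ge 2$, the restriction $\pi_{[n]\setminus I_1}:V\to\F_q^{[n]\setminus I_1}$ is injective on $V$, and its image is a linear code of length $n-k_1$ with minimum distance at least $d_2-1$. A final Singleton yields
\[
\dim V=\dim\pi_{[n]\setminus I_1}(V)\le(n-k_1)-(d_2-1)+1=n-k_1-d_2+2,
\]
which delivers the asymmetric bound; taking the minimum with $n$ and symmetrizing closes the proof.

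The main subtlety is arranging the two estimates for $\dim\pi_S(\C_1)$ in exactly the right form so that their comparison yields the weight-in-$[n]\setminus I_1$ control needed for the concluding Singleton step. The combinatorial lower bound from the information set $I_1$ is what pulls the extra $k_1-1$ contribution out of the bare support inequality $|\supp(v)|\ge d_2$, which by itself would only yield $\dim V\le n-d_2+1$; once this alignment is in place, the rest is routine Singleton bookkeeping.
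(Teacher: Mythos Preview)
The paper does not prove this lemma itself; it is quoted from the cited reference~\cite{Singletonupper} without argument. So there is no in-paper proof to compare your approach against, and the question is simply whether your argument stands on its own.

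It does, with one small omission. When you apply the Singleton bound to the codomain $\C_2^\perp\cap\F_q^S$ (viewed inside $\F_q^S$ as a code of length $|S|$ with minimum distance at least $d_2$), you tacitly assume this code is nonzero; otherwise the conclusion $\dim\pi_S(\C_1)\le|S|-d_2+1$ can fail (take $|S|<d_2-1$). This is exactly where the non-degeneracy of $\C_1$ enters, and you never invoke it. The fix is one line: if the image of $\phi_v$ were $\{0\}$, then every $c_1\in\C_1$ would vanish on $S=\supp(v)$, so $S\cap\supp(\C_1)=\emptyset$; non-degeneracy of $\C_1$ gives $\supp(\C_1)=[n]$, forcing $S=\emptyset$ and contradicting $v\neq 0$. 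Hence $\mathrm{im}\,\phi_v$ is a nonzero subcode of $\C_2^\perp$ supported on $S$, its minimum distance is at least $d_2$, and Singleton legitimately yields $\dim\pi_S(\C_1)=\dim(\mathrm{im}\,\phi_v)\le|S|-d_2+1$. With this addition your chain $|S\cap I_1|\le\dim\pi_S(\C_1)\le|S|-d_2+1$ is justified for every nonzero $v\in V$, and the remainder of the argument (injectivity of $\pi_{[n]\setminus I_1}$ on $V$ via $d_2\ge 2$, then a second Singleton on the projected code) is correct as written.
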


This lower bound highlights the role of the dual distance of one code in affecting the dimension of the star product. In the special case where one of the two codes is MDS, the dual distance attains its maximum possible value, allowing for a stronger bound. Since the dual of an MDS code is also MDS, the following result holds.

\begin{proposition}[\cite{randriambololona:hal-02287120}]\label{prop:MDSnondeg}
    Let $\C_1,\C_2 \leq \F_q^n$ be non-degenerate codes. If at least one of the codes is MDS, then
    \[\dim (\C_1\star\C_2) \geq \min \{ n, \dim \C_1 + \dim \C_2 -1 \}.\]
\end{proposition}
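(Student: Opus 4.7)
The plan is to read this off from Lemma~\ref{lemma:singletonupper}, using only the duality properties of MDS codes. By the symmetry of the star product, we may assume without loss of generality that $\C_1$ is the MDS code; write $k_1 = \dim\C_1$ and $k_2 = \dim\C_2$.

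The first step is to compute the relevant dual distance. As recalled in the paragraph preceding the statement, the dual of an MDS code is again MDS, so $\C_1^\perp$ is an MDS code of dimension $n-k_1$. Its minimum distance therefore meets the Singleton bound, giving
\[
d(\C_1^\perp) \;=\; n - \dim(\C_1^\perp) + 1 \;=\; n - (n-k_1) + 1 \;=\; k_1 + 1.
\]

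The second step is substitution into the lower bound from Lemma~\ref{lemma:singletonupper} that involves $d(\C_1^\perp)$, yielding
\[
\dim(\C_1\star\C_2) \;\geq\; \min\bigl\{n,\, \dim\C_2 + d(\C_1^\perp) - 2\bigr\} \;=\; \min\{n,\, k_1 + k_2 - 1\},
\]
which is precisely the claim. I do not foresee a serious obstacle: the MDS hypothesis saturates the dual-distance term in Lemma~\ref{lemma:singletonupper}, and the symmetric form of that lemma (which provides two analogous estimates obtained by swapping the roles of $\C_1$ and $\C_2$) ensures that the relevant summand is at our disposal even when $\C_2$ is not itself MDS. The whole argument is thus essentially a one-line specialization, with the only conceptual content being the MDS-duality identity $d(\C_1^\perp) = k_1 + 1$.
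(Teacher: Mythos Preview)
The paper does not supply its own proof of this proposition; it is quoted from the cited reference, and the surrounding discussion makes clear that the intended derivation is exactly yours: specialize Lemma~\ref{lemma:singletonupper} using the MDS identity $d(\C_1^\perp)=k_1+1$.

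One point deserves tightening. As Lemma~\ref{lemma:singletonupper} is literally stated in the paper, it is a \emph{single} minimum over three terms,
\[
\dim(\C_1\star\C_2)\;\ge\;\min\bigl\{\,n,\;k_1+d(\C_2^\perp)-2,\;k_2+d(\C_1^\perp)-2\,\bigr\},
\]
and in your second step you silently discard the middle term. That is not innocent: for a merely non-degenerate $\C_2$ one only knows $d(\C_2^\perp)\ge 2$, so the middle term can be as small as $k_1$, which is in general strictly less than $\min\{n,k_1+k_2-1\}$. What you actually need, and what does hold in the underlying references, is the pair of \emph{separate} estimates
\[
\dim(\C_1\star\C_2)\ge\min\{n,\,k_1+d(\C_2^\perp)-2\}\quad\text{and}\quad\dim(\C_1\star\C_2)\ge\min\{n,\,k_2+d(\C_1^\perp)-2\},
\]
from which you may select the second. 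Your closing remark shows you have this in mind, but the displayed inequality you write does not follow from Lemma~\ref{lemma:singletonupper} exactly as phrased here; either invoke the stronger two-bound form directly, or note explicitly that the paper's single-minimum formulation is a slight weakening of it.
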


For sufficiently large field sizes, we expect the conditions of Proposition~\ref{prop:MDSnondeg} to hold with high probability, since random codes are then likely to be MDS. This result will be necessary to prove Proposition~\ref{prop:expdim-mds-rnd}.

\section{Star product and bilinear forms}%
\label{sec:bil}

Star products and bilinear forms are closely related to one another, as shown in \cite{randriambololona:hal-02287120}. In this section, we show that the star product of two codes, denoted by $\C_1 \star \C_2$, is the evaluation of all bilinear forms at points corresponding to the columns of the generator matrices of~$\C_1$ and~$\C_2$. To make this relation precise, we first recall the definition of a bilinear form.

\begin{definition}
    Let $V$ and $W$ be $\F$-vector spaces. A map $\varphi \, :\,  V \times W \rightarrow \F$ is a bilinear form if for all $v_1, v_2 \in V$,  $w_1, w_2 \in W$, and $\lambda \in \F$, the following  hold:
    \begin{itemize}
        \item[1)] $\varphi(v_1 + v_2, w_1) = \varphi(v_1, w_1) + \varphi(v_2, w_1)$,
        \item[2)] $\varphi(v_1 , w_1 + w_2) = \varphi(v_1, w_1) + \varphi(v_1, w_2)$,
        \item[3)] $\varphi(\lambda v_1, w_1) = \varphi(v_1, \lambda w_1) = \lambda \varphi(v_1, w_1)$.
    \end{itemize}
We let $\Bil(V,W)$ denote the set of  bilinear forms on $V \times W$. It is well known that $\Bil(V, W)$ is an $\F$-vector space of dimension $\dim(V)\dim(W)$.
\end{definition}

In the sequel we work with positive integers $k_1$ and $k_2$ and let $\mathcal{B} \coloneqq \Bil(\F_q^{k_1},\F_q^{k_2})$. Recall that there is a one-to-one correspondence between bilinear forms $\varphi \in \mathcal{B}$ and matrices $B \in \F_q^{k_1 \times k_2}$. We now state the relation between star products and bilinear forms, which was established in~\cite{randriambololona:hal-02287120}. We provide a proof for completeness.

\begin{theorem}[\cite{randriambololona:hal-02287120}]\label{thm:starbil}
    Let $\C_1, \C_2 \leq \F_q^n$ with generator matrices $G_1 \in \F_q^{k_1 \times n}$ and $G_2 \in \F_q^{k_2 \times n}$. Define the linear map
    $$\psi_{G_1, G_2} \, : \, \mathcal{B} \rightarrow \F_q^n, \quad \psi_{G_1, G_2}(\varphi) = \left(
        \varphi\left(G_1^{(1)}, G_2^{(1)}\right),\ldots, \varphi\left(G_1^{(n)}, G_2^{(n)}\right)
    \right).$$
Then, $\C_1 \star \C_2 = \im(\psi_{G_1,G_2})$ and  $$\dim (\C_1\star\C_2) = k_1 k_2 - \log_q (|\ker \psi_{G_1,G_2}|).$$
\end{theorem}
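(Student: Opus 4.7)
The plan is to prove both assertions by making explicit how a bilinear form acts through $\psi_{G_1,G_2}$, and then invoking rank--nullity.

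First, I would use the standard identification of $\mcB$ with $\F_q^{k_1\times k_2}$: each $\varphi\in\mcB$ corresponds to a unique matrix $B$ via $\varphi(u,v)=uBv^\top$. The crucial special case is when $B=x^\top y$ has rank at most $1$, with $x\in\F_q^{k_1}$ and $y\in\F_q^{k_2}$; then $\varphi(u,v)=(ux^\top)(yv^\top)^\top = (u\cdot x)(v\cdot y)$ (up to the usual row/column conventions). Evaluating this rank-one form at the columns of $G_1$ and $G_2$, for $i\in[n]$ I would compute
\[
\varphi\!\left(G_1^{(i)},G_2^{(i)}\right) = \left(x G_1\right)_i\cdot\left(y G_2\right)_i,
\]
so that $\psi_{G_1,G_2}(\varphi) = (xG_1)\star(yG_2)$. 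Since $xG_1$ and $yG_2$ range over all codewords of $\C_1$ and $\C_2$ as $x$ and $y$ vary, the image of $\psi_{G_1,G_2}$ restricted to rank-one forms is exactly the set of pairwise star products of codewords in $\C_1$ and $\C_2$.

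Next, I would extend to arbitrary $\varphi\in\mcB$. Every matrix $B\in\F_q^{k_1\times k_2}$ decomposes as a sum of rank-one matrices (e.g., $B=\sum_{i,j}B_{ij}\,e_i^\top e_j$), so every bilinear form is a sum of rank-one forms. By linearity of $\psi_{G_1,G_2}$, the value $\psi_{G_1,G_2}(\varphi)$ is then a sum of pairwise star products of codewords, and hence lies in $\C_1\star\C_2$. Combined with the previous paragraph, this gives $\im(\psi_{G_1,G_2})=\langle xG_1\star yG_2:x\in\F_q^{k_1},y\in\F_q^{k_2}\rangle_{\F_q}=\C_1\star\C_2$.

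Finally, for the dimension identity I would apply the rank--nullity theorem to the $\F_q$-linear map $\psi_{G_1,G_2}:\mcB\to\F_q^n$. Since $\dim_{\F_q}\mcB = k_1 k_2$ and $\dim_{\F_q}\ker \psi_{G_1,G_2}=\log_q|\ker\psi_{G_1,G_2}|$, we obtain
\[
\dim(\C_1\star\C_2) \;=\; \dim\im(\psi_{G_1,G_2}) \;=\; k_1k_2 - \log_q|\ker\psi_{G_1,G_2}|,
\]
which is the claimed formula. There is no real technical obstacle here; the only non-automatic step is the rank-one unpacking in paragraph one, which is the conceptual heart of the statement and the reason the kernel of $\psi_{G_1,G_2}$ becomes the natural object to study in later sections.
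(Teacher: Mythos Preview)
Your proposal is correct and follows essentially the same route as the paper. The paper works with the coordinate basis $\varphi_{ij}(v,w)=v_iw_j$ of $\mcB$ and observes $\psi_{G_1,G_2}(\varphi_{ij})=\alpha_i\star\beta_j$ for the rows $\alpha_i,\beta_j$ of $G_1,G_2$, whereas you use arbitrary rank-one forms $B=x^\top y$ to obtain $(xG_1)\star(yG_2)$; these are the same computation (the paper's basis is your $x=e_i$, $y=e_j$), and both conclude by linearity and rank--nullity.
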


\begin{proof}
It is immediate from the definition that $\psi_{G_1,G_2}$ is an $\F_q$-linear map. Furthermore, $\mathcal{B}$ is generated by the bilinear forms $\Omega=\{\varphi_{ij} \, : \,  1 \leq i \leq  k_1, \, 1 \leq j \leq k_2\}$, where $\varphi_{ij}(v,w) = v_iw_j$. Let
\[
    G_1 = \begin{bmatrix} \alpha_1 \\ \alpha_2 \\ \vdots \\ \alpha_{k_1} \end{bmatrix},\quad G_2 = \begin{bmatrix} \beta_1 \\ \beta_2 \\\vdots \\ \beta_{k_2} \end{bmatrix},
\]
where $\alpha_i$ and $\beta_j$ denote the $i$th row of $G_1$ and $j$th row of $G_2$, respectively. Then for any $\varphi_{i,j}\in \Omega$, we have
\begin{align}
\psi_{G_1, G_2}(\varphi_{ij}) = \begin{bmatrix}
       \alpha_{i,1} \beta_{j,1} &  \alpha_{i,2}\beta_{j,2} & \ldots &  \alpha_{i,n}\beta_{i,n}
   \end{bmatrix} = \alpha_i \star \beta_j \in \C_1 \star \C_2, \label{eq:correspondence1}
\end{align}
so that $\im(\psi_{G_1,G_2}) \subseteq \C_1 \star \C_2$.

Now consider $c\in \C_1 \star \C_2$. Let $\alpha\in \C_1$ and $\beta\in \C_2$ be such that $c=\alpha \star \beta$. Then, for some scalars $\lambda_{i,j}\in \F_q$, we can write
\begin{align}
    c &= \sum_{1 \leq i \leq k_1} \sum_{1 \leq j \leq k_2} \lambda_{ij} (\alpha_i \star \beta_j ) \notag \\
                &= \sum_{1 \leq i \leq k_1} \sum_{1 \leq j \leq k_2} \lambda_{ij} \psi_{G_1,G_2}(\varphi_{ij}) \label{eq:correspondence2}\\
                &= \psi_{G_1,G_2}\left(\sum_{1 \leq i \leq k_1} \sum_{1 \leq j \leq k_2} \lambda_{ij} \varphi_{ij}\right),\label{eq:correspondence3}
\end{align}
where \eqref{eq:correspondence2} follows by \eqref{eq:correspondence1}, and \eqref{eq:correspondence3} follows by $\F_q$-linearity of $\psi_{G_1, G_2}$.
Hence $c \in \im(\psi_{G_1, G_2})$, which shows that $\C_1 \star \C_2 \subseteq \im(\psi_{G_1, G_2})$ and thus the two spaces are equal.
\end{proof}

\begin{remark}
   Note that  different pairs of generator matrices $(G_1, G_2)$ and $(G'_1, G'_2)$ for the codes $(\C_1, \C_2)$ induce in general different maps $\psi_{G_1, G_2}$ and $\psi_{G'_1,G'_2}$, as defined in Theorem \ref{thm:starbil}. However, the image of those two maps remains the same since they are both equal to $\C_1 \star\C_2$. Furthermore, if $\C_1$ is monomially equivalent to $\C_1'$, then $\C_1 \star \C_2$ is also monomially equivalent to $\C_1'\star \C_2$. Therefore, in the rest of the article we always choose a generator matrix in systematic form and denote the corresponding map introduced in Theorem \ref{thm:starbil} by $\psi_{\C_1, \C_2}$.
\end{remark}

By Theorem \ref{thm:starbil}, one can determine the dimension of the star product between two codes by counting the number of elements in the kernel of the map $\psi_{G_1, G_2}$, for some given generator matrices $G_1 \in \F_q^{k_1 \times n}$ and $ G_2 \in \F_q^{k_2 \times n}$ both in systematic form. To estimate this kernel, we can count the number of bilinear forms $\varphi\in \mathcal{B}$ of rank $r$ such that $\varphi(e_i^1,e_i^2)=0$ for $i\in [k_1]$, or equivalently the number of $k_1\times k_2$ matrices over $\F_q$ of rank $r$ with zero diagonal. This is a known problem in enumerative combinatorics~\cite{lewis2011matrices,ravagnani2018duality}.
We introduce notation that
will be used in the remainder of the article. 

\begin{definition}\label{not2}
Let $r,k_1,k_2 \in \mathbb{N}$ such that $r\leq k_1\leq k_2$, $I\subseteq K \coloneqq  [k_2]\setminus [k_1]$ and $\alpha\in \mathbb{R}$. We then define the following sets:
\begin{align*}
 S_r^{k_1,k_2} &\coloneqq \left\{A\in \F_q^{k_1\times k_2} \,\middle|\, \rk(A) = r, \; A_{i,i}=0 \text{ for all } i\in [k_1] \right\}, \\
    S_{r,I}^{k_1,k_2} &\coloneqq \left\{A \in S_{r}^{k_1,k_2}\, \middle|\, A^{(i)}=0\text{ for all } i\in I, A^{(\ell)}\neq 0\text{ for all } \ell\in K\setminus I \right\},\\
    S^{k_1,k_2} &\coloneqq \bigcup_{r=0}^{k_1}S_r^{k_1,k_2}.
\end{align*}
Since $|S_{r,I}^{k_1,k_2}| = |S_{r,J}^{k_1,k_2}|$ for $I,J\subseteq K$ when $|I|=|J|$ we let $S_{r,\ell}^{k_1,k_2} \coloneqq |S_{r,I}^{k_1,k_2}|$, where $I\subseteq K$ with $|I|=\ell$. The subscripts and superscripts may be omitted if they are clear from context.
\end{definition}

\begin{Lemma} \label{Lemma:vanishmaindiagonal}
Let $r,k_1,k_2\in \mathbb{N}$ such that $r\leq k_1\leq k_2$ and $I\subseteq K$. Let
\[
    \theta(i,j) = \binom{k_1}{i}(q-1)^i(-1)^{r-j}q^{jk_2 + \binom{r-j}{2}}.
\]
Then 

\[
|S^{k_1,k_2}_r| = q^{-k_1}\sum^{k_1}_{i=0} \left(\sum^{k_1}_{j=0}\theta(i,j)\qbinom{k_1-i}{j}_q\qbinom{k_1-j}{k_1-r}_q \right),
\]
and
\[
|S^{k_1,k_2}_{r,I}| =\sum_{I \subseteq J \subseteq K} (-1)^{|J| - |I|} q^{-k_1}\Bigg(\sum^{k_1}_{i=0} \Bigg(\sum^{k_1}_{j=0} \theta(i,j)q^{-j|J|} \qbinom{k_1-i}{j}_q\qbinom{k_1-j}{k_1-r}_q \Bigg)\Bigg).
\]

\end{Lemma}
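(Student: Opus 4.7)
The plan is to prove the first formula by Möbius inversion on the subspace lattice of $\F_q^{k_1}$, and then to derive the second via Boolean inclusion–exclusion. For $V \leq \F_q^{k_1}$, let $F(V)$ count matrices $A \in \F_q^{k_1\times k_2}$ with zero diagonal and column space contained in $V$, and let $G(V)$ be the same count with column space equal to $V$. Then $F(V) = \sum_{W \leq V} G(W)$, and inverting via the subspace-lattice Möbius function yields $G(V) = \sum_{W \leq V}(-1)^{\dim V - \dim W} q^{\binom{\dim V - \dim W}{2}} F(W)$. Summing over $r$-dimensional $V$, swapping the order of summation, and using that a fixed $s$-dimensional $W$ is contained in $\qbinom{k_1-s}{r-s}_q = \qbinom{k_1-s}{k_1-r}_q$ subspaces of dimension $r$ gives
\[
|S_r^{k_1,k_2}| = \sum_{s=0}^{r}(-1)^{r-s} q^{\binom{r-s}{2}} \qbinom{k_1-s}{k_1-r}_q \sum_{W:\,\dim W = s} F(W).
\]

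The crux is evaluating the inner sum. A direct count shows $F(W) = q^{k_2 \dim W - d(W)}$, where $d(W)$ denotes the number of $i \in [k_1]$ for which the projection $\pi_i|_W : W \to \F_q$ is nonzero: for such $i$ the $i$-th column must lie in the subspace $W \cap \ker \pi_i$ of size $q^{\dim W - 1}$, while the other $k_1 - d(W)$ diagonal columns and the $k_2 - k_1$ free columns range over all of $W$. To sum this over $W$ I will use the identity
\[
q^{k_1 - d(W)} = \sum_{I \subseteq [k_1]}(q-1)^{|I|}\,\mathbf{1}\!\left[W \subseteq \bigcap_{i \in I}\{v : v_i = 0\}\right],
\]
swap the order of summation, and apply the standard count $\qbinom{k_1 - |I|}{s}_q$ for the number of $s$-dimensional subspaces of such a coordinate subspace. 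The outer sum depends only on $i := |I|$ and produces the inner $\binom{k_1}{i}(q-1)^i\qbinom{k_1-i}{j}_q$ factor of the lemma (with $s$ renamed to $j$), matching the first formula.

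For the second formula, I will inclusion–exclude over the set $J \subseteq K$ of columns of $K$ forced to vanish. Setting $T_{r,J} := \{A \in S_r^{k_1,k_2} : A^{(j)} = 0 \text{ for all } j \in J\}$, the standard Boolean inversion yields $|S_{r,I}^{k_1,k_2}| = \sum_{I \subseteq J \subseteq K}(-1)^{|J|-|I|}|T_{r,J}|$. Since $J \cap [k_1] = \emptyset$, deleting the columns indexed by $J$ is a bijection $T_{r,J} \leftrightarrow S_r^{k_1, k_2 - |J|}$, so substituting the first formula with $k_2$ replaced by $k_2 - |J|$ (which replaces $q^{k_2 j}$ by $q^{k_2 j}\,q^{-j|J|}$) delivers the displayed expression.

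The main obstacle is the nested use of Möbius inversion — first on the subspace lattice of $\F_q^{k_1}$ and then on the Boolean lattice of coordinate subsets — together with the fact that the natural invariant $d(W)$ governing $F(W)$ is not a function of $\dim W$ alone. Verifying that the two inversions collapse cleanly into the single double sum of the lemma (rather than producing a messier triple sum) is where all the bookkeeping lives.
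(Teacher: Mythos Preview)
Your derivation of the second formula via Boolean inclusion--exclusion over $J\supseteq I$, together with the identification $T_{r,J}\cong S_r^{k_1,k_2-|J|}$ obtained by deleting the columns indexed by $J$, is exactly the paper's argument (the paper writes $g(J)$ for your $|T_{r,J}|$).

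For the first formula the paper does not give a proof at all: it simply cites Corollary~60 of \cite{ravagnani2018duality}. Your self-contained argument is correct. The Möbius inversion on the subspace lattice and the swap of summation are standard; the step that needs checking is the evaluation of $\sum_{\dim W=s}F(W)$. Your formula $F(W)=q^{k_2\dim W-d(W)}$ is right, since for $i\in[k_1]$ the $i$th column must lie in $W\cap\ker\pi_i$, which has size $q^{\dim W}$ or $q^{\dim W-1}$ according to whether $\pi_i|_W$ vanishes, while the $k_2-k_1$ remaining columns range over all of $W$. The identity $q^{k_1-d(W)}=\sum_{I\subseteq[k_1]}(q-1)^{|I|}\mathbf{1}[W\subseteq\bigcap_{i\in I}\ker\pi_i]$ holds because the right side is $\sum_{I\subseteq Z}(q-1)^{|I|}=q^{|Z|}$ with $Z=\{i:\pi_i|_W=0\}$. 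Swapping the sum over $I$ outside and counting $s$-dimensional subspaces of the $(k_1-|I|)$-dimensional coordinate subspace then gives
\[
\sum_{\dim W=s}F(W)=q^{k_2s-k_1}\sum_{i=0}^{k_1}\binom{k_1}{i}(q-1)^i\qbinom{k_1-i}{s}_q,
\]
which, after renaming $s\mapsto j$ and inserting into your Möbius-inverted expression, is precisely the first display of the lemma.

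So your proposal is correct throughout; the only substantive difference from the paper is that you supply an explicit proof of the first identity where the paper invokes an external reference.
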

\begin{proof}
The first part of the statement is~\cite[Corollary~60]{ravagnani2018duality}.
For the second part, define
\begin{align*}
    f(I) \coloneqq |S^{k_1,k_2}_{r,I}|, \qquad
    g(I) \coloneqq \left|\left\{ A\in S_r^{k_1,k_2} \,\middle|\,  A^{(i)} = 0 \text{ for all } i\in I\right\}\right|.
\end{align*}
Clearly $g(I) = \sum_{I\subseteq J \subseteq K} f(J)$ for any $I\subseteq K$. Hence, by Möbius inversion we obtain
\[
f(I) = \sum_{I\subseteq J \subseteq K} (-1)^{|J|-|I|}g(J).
\]
Using the first part of the statement we determine $g(J)$ for any $J\subseteq K$ as \\
\begin{align*}
    g(J) &= |S_r^{k_1,k_2-|J|}|\\  
    &= q^{-k_1} \sum^{k_1}_{j=0}\binom{k_1}{j}(q-1)^j\left( \sum^{k_1}_{l=0}(-1)^{r-l}q^{l(k_2-|J|) + \binom{r-l}{2}} \qbinom{k_1-j}{l}_q \qbinom{k_1-l}{k_1-r}_q \right),
\end{align*}
and the result follows.
\end{proof}

The next result counts the number of zeros of a bilinear form, defined as follows. 

\begin{definition}
    Let $\varphi \in \mathcal{B}$. 
    An element $(v_1, v_2) \in \F_q^{k_1} \times \F_q^{k_2}$ is \emph{a zero} of $\varphi$ if $\varphi(v_1,v_2) = 0$. 
      Denote by $Z(\varphi)$ the number of zeros of $\varphi$. 
\end{definition}

\begin{Lemma} \label{Lemma:zerosbyrank}
    Let $\varphi \in \mathcal{B}$ be of rank $r$. Then $Z(\varphi) = (q^r+q-1)q^{k_1+k_2-r-1}$ and $Z$ is a monotonically decreasing function in $r$.
\end{Lemma}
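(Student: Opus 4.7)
The plan is to identify the bilinear form $\varphi$ with its Gram matrix $B \in \F_q^{k_1 \times k_2}$ of rank $r$, so that $\varphi(v_1, v_2) = v_1 B v_2^\top$. Then I would compute $Z(\varphi)$ by partitioning $\F_q^{k_1} \times \F_q^{k_2}$ according to the value of $v_1$ and, for each fixed $v_1$, counting the solutions $v_2 \in \F_q^{k_2}$ to the linear equation $(v_1 B)\, v_2^\top = 0$.

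The main observation is a dichotomy. For $v_1$ in the left kernel of $B$, which has $\F_q$-dimension $k_1 - r$ and hence cardinality $q^{k_1-r}$, the vector $v_1 B$ vanishes, so every $v_2 \in \F_q^{k_2}$ is a zero of $\varphi(v_1,\cdot)$. For each of the remaining $q^{k_1} - q^{k_1-r}$ choices of $v_1$, the vector $v_1 B$ is nonzero, so the linear functional $v_2 \mapsto (v_1 B) v_2^\top$ is surjective onto $\F_q$ and has exactly $q^{k_2-1}$ zeros. Summing,
\[
Z(\varphi) = q^{k_1-r}\cdot q^{k_2} + (q^{k_1} - q^{k_1-r})\cdot q^{k_2-1},
\]
and factoring out $q^{k_1+k_2-r-1}$ rewrites this as $(q^r + q - 1)\, q^{k_1+k_2-r-1}$, as required. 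Note that the count depends on $B$ only through its rank, so it is well-defined on the rank-$r$ stratum of $\mathcal{B}$.

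For the monotonicity statement, I would simply subtract the two closed-form expressions and observe that
\[
Z(r) - Z(r+1) = q^{k_1+k_2-r-2}(q-1)^2,
\]
which is strictly positive for every prime power $q \geq 2$. Hence $Z$ is strictly decreasing in $r$. There is no genuine obstacle in this proof; the only point requiring a little care is to use the \emph{left} kernel of $B$ (rather than the right one) when isolating the vectors $v_1$ for which the induced linear form on $\F_q^{k_2}$ vanishes identically.
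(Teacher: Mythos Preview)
Your proof is correct and follows essentially the same approach as the paper: partition over $v_1$, use the left kernel of the Gram matrix to count the two cases, and sum. The only minor difference is in the monotonicity step, where the paper differentiates the closed form in $r$ while you compute the discrete difference $Z(r)-Z(r+1)=q^{k_1+k_2-r-2}(q-1)^2$; your version is arguably more natural since $r$ is an integer.
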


\begin{proof}
Let $A$ be the matrix representation of $\varphi$. Then $\rk(A) = r$. 
    \begin{align*}
        Z(\varphi) &= |\{(v_1, v_2)\in \F_q^{k_1}\times \F_q^{k_2} \, \colon \, v_1Av_2^\top = 0\}|\\
                    &= \sum_{v_1 \in \F_q^{k_1}} |\{v_2 \in \F_q^{k_2} \, : \, (v_1A)v_2^\top = 0\}| \\
                    &= \sum_{\substack{v_1 \in \F_q^{k_1}\\ v_1A = 0}}|\{v_2 \in \F_q^{k_2} \, : \, (v_1A)v_2^\top  = 0\}| + \sum_{\substack{v_1 \in \F_q^{k_1}\\ v_1A \neq 0}}|\{v_2 \in \F_q^{k_2} \, : \, (v_1A)v_2^\top  = 0\}|\\
                    &= \sum_{\substack{v_1 \in \F_q^{k_1}\\ v_1A = 0}} q^{k_2} + \sum_{\substack{v_1 \in \F_q^{k_1}\\ v_1A \neq 0}} q^{k_2-1}\\
                    &= q^{k_1 -r}q^{k_2} + (q^{k_1} - q^{k_1 -r})q^{k_2-1} = (q^r+q-1)q^{k_1+k_2-r-1}.
    \end{align*}
In the third equality $|\{v_2 \in \F_q^{k_2} \, : \, (v_1A)v_2^\top = 0\}| = q^{k_2 -1}$ for $v_1A \neq 0$, since the linear map $v_1A \, : \, \F_q^{k_2} \rightarrow \F_q$ with $v_2 \mapsto v_1Av_2^\top$ is surjective of rank $1$, and thus has a kernel of dimension $k_2 -1$. 
In the last equality, we have that $\{v_1 \in \F_q^{k_1} \, : \, v_1A = 0\}$ is the left kernel of $A$, which has dimension $k_1 -r$ since $A$ has rank $r$. Lastly, $Z$ is monotonically decreasing as $\frac{\mathrm{d}}{\mathrm{d}r} Z = -\ln(q) (q-1)q^{-r+k_2+k_1-1}$ is negative.
\end{proof}

We now proceed to compute the expected size of the kernel of the map $\psi_{G_1,G_2}$, where the generator matrices $G_1$ and $G_2$ are chosen at random according to the model defined below.

\begin{definition}\label{def:setuprand}
In this paper, a uniformly random code $\C \leq \F_q^n$ of dimension $k$ is constructed by choosing uniformly at random and independently the last $n-k$ columns of a systematic generator matrix.
\end{definition}

The authors of~\cite{cascudo2015squares} explicitly discuss alternative probabilistic models for choosing a random code and argue that their results apply more broadly than just the systematic generator matrix model used in Definition~\ref{def:setuprand}. Specifically, as noted in their Remark~2.1 and in the discussion leading to Section~6, they show that the main conclusions regarding the square of a random code remain essentially unchanged under other standard distributions, such as choosing codes or generator matrices uniformly at random. Since our analysis for pairs of random codes (i.e., counting bilinear forms) follows the same combinatorial and algebraic principles, their arguments translate directly to our setting. Consequently, our results exhibit the same robustness with respect to the choice of random model, and we therefore confine our proofs to the model in Definition~\ref{def:setuprand} for technical convenience. 

\begin{theorem} \label{theorem:expectedkernel}    Let $\C_1, \C_2 \leq \F_q^n$ be uniformly random codes of dimensions $k_1$ and $k_2$, respectively, as in Definition~\ref{def:setuprand}. Let $\gamma(j) = (q^j+q-1)q^{-j}$. Then the expected value of $|\ker \psi_{\C_1,\C_2}|$ is
\begin{align*} 
&\mathcolor{white}{=}\mathbb{E}\big[|\ker \psi_{\C_1,\C_2}|\big]\\ 
&= 
\sum_{r=0}^{k_1}\sum_{i=0}^{k_1}\sum_{j=0}^{r}(-1)^{r-j}\gamma(r)^{n-k_2}\gamma(j)^{k_2-k_1} (q-1)^i q^{jk_2-n+ \binom{r-j}{2}}\binom{k_1}{i}\qbinom{k_1-i}{j}_q \qbinom{k_1-j}{k_1-r}_q.
\end{align*}
\end{theorem}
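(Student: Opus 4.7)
The plan is to write
\[
\mathbb{E}\big[|\ker\psi_{\C_1,\C_2}|\big]=\sum_{\varphi\in\mathcal{B}}\Pr_{A_1,A_2}\!\left[\varphi\in\ker\psi_{\C_1,\C_2}\right],
\]
where $A_1\in\F_q^{k_1\times(n-k_1)}$ and $A_2\in\F_q^{k_2\times(n-k_2)}$ are the random parts of the systematic generator matrices $G_1=[I_{k_1}\mid A_1]$ and $G_2=[I_{k_2}\mid A_2]$. For a form $\varphi$ with matrix representation $B\in\F_q^{k_1\times k_2}$, I would split the $n$ defining equations $\varphi(G_1^{(i)},G_2^{(i)})=0$ into three groups by coordinate. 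For $i\in[k_1]$ both columns equal $e_i$ and the condition $B_{i,i}=0$ is deterministic, so only $B\in S^{k_1,k_2}$ contribute. For $i\in K\coloneqq[k_2]\setminus[k_1]$ we have $G_2^{(i)}=e_i$ while $G_1^{(i)}=A_1^{(i-k_1)}$ is uniform on $\F_q^{k_1}$, and the linear condition $A_1^{(i-k_1)}\cdot B^{(i)}=0$ holds with probability $1/q$ when $B^{(i)}\neq 0$ and with probability $1$ otherwise. For $i\in\{k_2+1,\ldots,n\}$, both $G_1^{(i)}$ and $G_2^{(i)}$ are independent and uniform, so Lemma~\ref{Lemma:zerosbyrank} gives probability $Z(\varphi)/q^{k_1+k_2}=\gamma(\rk B)/q$.

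Since the columns of $A_1$ used for $K$ have indices $\{1,\ldots,k_2-k_1\}$ while those used for $\{k_2+1,\ldots,n\}$ have indices $\{k_2-k_1+1,\ldots,n-k_1\}$, the three groups of events are mutually independent and their probabilities multiply. Writing $I(B)=\{i\in K\colon B^{(i)}=0\}$ and $r=\rk B$, the contribution of a matrix $B$ equals $q^{|I(B)|-(k_2-k_1)}\cdot\gamma(r)^{n-k_2}q^{-(n-k_2)}$, yielding
\[
\mathbb{E}\big[|\ker\psi_{\C_1,\C_2}|\big]=q^{k_1-n}\sum_{r=0}^{k_1}\gamma(r)^{n-k_2}\sum_{B\in S_r^{k_1,k_2}}q^{|I(B)|}.
\]
To evaluate the inner weighted sum, I would expand $q^{|I(B)|}=\sum_{J\subseteq I(B)}(q-1)^{|J|}$ via $q=1+(q-1)$, swap the order of summation, and observe that the matrices $B\in S_r^{k_1,k_2}$ with $J\subseteq I(B)$ are in bijection with rank-$r$ zero-diagonal matrices in $\F_q^{k_1\times(k_2-|J|)}$ (by deleting the columns indexed by $J\subseteq K$). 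Grouping by $m=|J|$ reduces the problem to
\[
\sum_{B\in S_r^{k_1,k_2}}q^{|I(B)|}=\sum_{m=0}^{k_2-k_1}\binom{k_2-k_1}{m}(q-1)^m|S_r^{k_1,k_2-m}|.
\]

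The final step is to substitute the closed form for $|S_r^{k_1,k_2-m}|$ supplied by the first part of Lemma~\ref{Lemma:vanishmaindiagonal} and to interchange the $m$-sum with the inner $(i,j)$-sums, so that the $m$-dependent part factors as
\[
\sum_{m=0}^{k_2-k_1}\binom{k_2-k_1}{m}(q-1)^m q^{-jm}=\left(1+\frac{q-1}{q^j}\right)^{k_2-k_1}=\gamma(j)^{k_2-k_1},
\]
producing exactly the factor $\gamma(j)^{k_2-k_1}$ present in the claim. Combining the remaining prefactor $q^{k_1-n}\cdot q^{-k_1}=q^{-n}$ with $q^{jk_2+\binom{r-j}{2}}$ gives the exponent $q^{jk_2-n+\binom{r-j}{2}}$, and the $j$-sum may be truncated at $j=r$ because $\qbinom{k_1-j}{k_1-r}_q=0$ for $j>r$; this matches the stated expression term-by-term. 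The main obstacle I anticipate is the bookkeeping across three independent sources of probability distributed over $n$ coordinates, together with the nested summations that follow; once the weighted count $\sum_{B\in S_r^{k_1,k_2}}q^{|I(B)|}$ is isolated as the natural object to compute, the collapse of the $m$-sum via the binomial theorem is routine.
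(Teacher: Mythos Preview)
Your proof is correct and follows essentially the same strategy as the paper: express the expected kernel size as $\sum_{\varphi\in\mathcal{B}}\Pr[\varphi\in\ker\psi_{\C_1,\C_2}]$, split the $n$ coordinate constraints into the three blocks $[k_1]$, $K$, and $\{k_2+1,\ldots,n\}$, and invoke Lemma~\ref{Lemma:zerosbyrank} together with Lemma~\ref{Lemma:vanishmaindiagonal}. The one deviation is in how you handle the middle block: the paper partitions $S_r^{k_1,k_2}$ according to the \emph{exact} zero-column set $I(B)=I$, applies the M\"obius-inverted formula for $|S_{r,I}^{k_1,k_2}|$ from the second part of Lemma~\ref{Lemma:vanishmaindiagonal}, and then collapses the resulting double sum $\sum_{I\subseteq K}\sum_{I\subseteq J\subseteq K}$ back through the binomial theorem; you instead expand $q^{|I(B)|}=\sum_{J\subseteq I(B)}(q-1)^{|J|}$ directly and reduce to $|S_r^{k_1,k_2-m}|$, needing only the first part of Lemma~\ref{Lemma:vanishmaindiagonal}. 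Your route is slightly more economical since it bypasses the apply-then-undo M\"obius step, but the two computations are otherwise the same and both produce the factor $\gamma(j)^{k_2-k_1}$ via the identical binomial identity.
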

\begin{proof}
    Let $G_1 \in \F_q^{k_1 \times n}$ and $G_2 \in \F_q^{k_2 \times n}$ be systematic generator matrices of $\C_1$ and $\C_2$, respectively. Let $\mathcal{X}_j$ denote the set of all $k_j\times n$ generator matrices in systematic form for $j=1,2$. For $\varphi\in \mcB$ and $(\pi_1,\pi_2)\in \mcX_1\times \mcX_2$ we use a slight abuse of notation and write $\varphi(\pi_1,\pi_2) =0$ to mean \smash{$\varphi\left(\pi_1^{(i)},\pi_2^{(i)}\right) = 0$} for all $i\in [n]$. Then 
\begin{align}
    &\mathcolor{white}{=} \mathbb{E}\big[|\ker(\psi_{\C_1,\C_2})|\big] 
     = \sum^{q^{k_1+k_2}}_{i=0} i \cdot \mathbb{P}(|\ker( \psi_{\C_1,\C_2})| = i) \notag \\
     &=\frac{1}{q^{(n-k_1)k_1+(n-k_2)k_2}}\sum_{i=1}^{q^{k_1+k_2}}i \left|\big\{(\pi_1, \pi_2) \in  \mcX_1 \times \mcX_2 \, \colon \,  | \{\varphi \in \mcB\, \colon \, \varphi(\pi_1, \pi_2) = 0\}|=i \big\} \right| \notag \\
     &= \frac{1}{q^{(n-k_1)k_1+(n-k_2)k_2}} \sum_{(\pi_1,\pi_2) \in \mcX_1 \times \mcX_2} | \{\varphi \in \mcB \, \colon \, \varphi(\pi_1, \pi_2) = 0\}| \notag \\
     &=  \frac{1}{q^{(n-k_1)k_1+(n-k_2)k_2}}\sum_{\varphi \in \mcB} |\{ (\pi_1, \pi_2) \in \mcX_1 \times \mcX_2 \, \colon \, \varphi(\pi_1, \pi_2) = 0\}| \notag \\
    &
=\frac{1}{q^{(n-k_1)k_1+(n-k_2)k_2}}
\sum_{r=0}^{k_1}\sum_{I \subseteq K}\sum_{A \in S_{r,I}}
\Bigg(
\left((q^r + q - 1)q^{k_1+k_2-r-1}\right)^{n-k_2} \notag \\
&\mathcolor{white}{=} \hspace{5.7cm}
q^{k_1|I|}(q^{k_1-1})^{k_2-k_1-|I|}
\Bigg)
\label{eq:proof1} \\
         &=  \frac{1}{q^{(n-k_1)k_1+(n-k_2)k_2}}\sum_{r=0}^{k_1}\sum_{I \subseteq K}\sum_{A \in S_{r,I}} \left((q^r +q-1)q^{k_1+k_2-r-1}\right)^{n-k_2} q^{(k_2-k_1)(k_1-1) +|I|}\notag \\
      &= \frac{q^{(k_1+k_2-1)(n-k_2)+(k_2-k_1)(k_1-1)}}{q^{(n-k_1)k_1+(n-k_2)k_2}} \sum_{r=0}^{k_1}\sum_{I \subseteq K} \left((q^r +q-1)q^{-r}\right)^{n-k_2} q^{|I|}|S_{r,I}| \notag \\
      &= q^{k_1-n}\sum_{r=0}^{k_1}\sum_{I \subseteq K} \left((q^r +q-1)q^{-r}\right)^{n-k_2} q^{|I|}\sum_{I \subseteq J \subseteq K} (-1)^{|J| - |I|}q^{-k_1} \notag \\ &\mathcolor{white}{=}\sum^{k_1}_{i=0}\binom{k_1}{i}(q-1)^i \left(\sum^{k_1}_{j=0}(-1)^{r-j}q^{j(k_2 - |J|) + \binom{r-j}{2}} \qbinom{k_1-i}{j}_q \qbinom{k_1-j}{k_1-r}_q \right) \notag \\
     &=q^{-n}\sum_{r=0}^{k_1}\sum_{i=0}^{k_1}\sum_{j=0}^{k_1} \Bigg(\left((q^r+q-1)q^{-r}\right)^{n-k_2}\binom{k_1}{i}(q-1)^i(-1)^{r-j}q^{jk_2 + \binom{r-j}{2}}  \notag\\
&\mathcolor{white}{=} \hspace{2.7cm}\qbinom{k_1-i}{j}_q\qbinom{k_1-j}{k_1-r}_q\sum_{ I \subseteq K} \sum_{I \subseteq J \subseteq K} (-1)^{|J|-|I|}q^{|I|}q^{-j|J|} \Bigg) \label{eq:proof2} ,
\end{align}
where \eqref{eq:proof1} follows by Lemma \ref{Lemma:zerosbyrank} and \eqref{eq:proof2} follows by Lemma \ref{Lemma:vanishmaindiagonal}. The last two summands of \eqref{eq:proof2} can be further simplified using the Binomial Theorem:
\begin{align*}
 \sum_{I \subseteq K} \sum_{I \subseteq J \subseteq K} (-1)^{|J|-|I|}q^{|I|}q^{-j|J|}
   &=\sum_{J \subseteq K} (-1)^{|J|}q^{-j|J|} \sum_{I \subseteq J} (-1)^{|I|} q^{|I|}\\
   &= \sum_{J \subseteq K} (-1)^{|J|}q^{-j|J|} \sum_{t=0}^{|J|} \binom{|J|}{t}(-q)^{t}\\
   &= \sum_{J \subseteq K} (-1)^{|J|}q^{-j|J|} (1-q)^{|J|}\\
   &= \sum_{t=0}^{k_2-k_1}\binom{k_2-k_1}{t}(-q^{-j} + q^{-j+1})^{t}\\
   &= ((q^j+q-1)q^{-j})^{k_2-k_1}.
\end{align*}
Thus, we have that
\begin{multline*}
 \mathbb{E}\big[|\ker(\psi_{\C_1,\C_2})|\big]  = q^{-n} \sum_{r=0}^{k_1}\sum_{i=0}^{k_1}\sum_{j=0}^{k_1}(-1)^{r-j}\left((q^r +q-1)q^{-r}\right)^{n-k_2}((q^j+q-1)q^{-j})^{k_2-k_1}  \\ (q-1)^iq^{jk_2 + \binom{r-j}{2}}\binom{k_1}{i}\qbinom{k_1-i}{j}_q \qbinom{k_1-j}{k_1-r}_q,
\end{multline*}
and the result follows.
\end{proof}

In the case where $\C_1$ and $\C_2$ have the same dimension, we get the following value for $\E[|\ker \psi_{\C_1, \C_2}|]$.

\begin{corollary}
       Let $\C_1, \C_2 \leq \F_q^n$ be uniformly random codes of dimension $k=k_1=k_2$ as in Definition~\ref{def:setuprand}. Let $\gamma(j) = (q^j + q - 1)q^{-j}$. Then 
    \begin{align*}&\E[|\ker \psi_{\C_1, \C_2}|]= \\
    &\: \frac{1}{q^{n}}\sum_{r=0}^{k}\sum^{k}_{i=0} \sum^{\min\{r, k-i\}}_{j=0} (-1)^{r-j} \gamma(r)^{n-k}(q-1)^i q^{jk + \binom{r-j}{2}} \binom{k}{i} \qbinom{k-i}{j}_q \qbinom{k-j}{k-r}_q .
\end{align*}
\end{corollary}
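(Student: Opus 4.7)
The plan is to derive the corollary directly from Theorem~\ref{theorem:expectedkernel} by specializing $k_1 = k_2 = k$ and then simplifying the resulting expression. Since Theorem~\ref{theorem:expectedkernel} already gives a closed-form expression for $\mathbb{E}[|\ker \psi_{\C_1,\C_2}|]$ for arbitrary dimensions $k_1 \leq k_2$, the corollary amounts to substitution plus bookkeeping; no new combinatorial argument is needed.

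First I would set $k_1 = k_2 = k$ in the statement of Theorem~\ref{theorem:expectedkernel}. The factor $\gamma(j)^{k_2 - k_1}$ becomes $\gamma(j)^{0} = 1$ and so disappears entirely, which accounts for the absence of a $\gamma(j)$-factor in the corollary. The remaining factors $\gamma(r)^{n-k_2}$, $q^{jk_2 - n + \binom{r-j}{2}}$, $(q-1)^i$, $\binom{k_1}{i}$, $\qbinom{k_1-i}{j}_q$, $\qbinom{k_1-j}{k_1-r}_q$ simply become their versions with $k_1 = k_2$ replaced by $k$, matching the summand displayed in the corollary.

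Next I would adjust the summation range for $j$. In Theorem~\ref{theorem:expectedkernel} the inner sum runs $j = 0,\dots,r$, the upper bound $r$ being forced by the vanishing of $\qbinom{k_1-j}{k_1-r}_q$ whenever $j > r$. After specialization, the additional Gaussian binomial $\qbinom{k-i}{j}_q$ also vanishes whenever $j > k-i$, so the effective range of $j$ is $0 \leq j \leq \min\{r, k-i\}$. Making this upper bound explicit yields exactly the expression in the corollary.

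I do not expect any serious obstacle, since the content of the argument is entirely contained in Theorem~\ref{theorem:expectedkernel}. The only point that requires a brief justification is the tightening of the summation range for $j$, which is a one-line observation using the vanishing of $\qbinom{k-i}{j}_q$ for $j > k-i$. Accordingly, the proof would reduce to a substitution followed by this one-sentence remark.
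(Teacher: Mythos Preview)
Your proposal is correct and matches the paper's approach: the corollary is stated immediately after Theorem~\ref{theorem:expectedkernel} without proof, as it is the direct specialization $k_1=k_2=k$. Your observation that $\gamma(j)^{k_2-k_1}$ collapses to $1$ and that the range of $j$ can be tightened to $\min\{r,k-i\}$ via the vanishing of $\qbinom{k-i}{j}_q$ is exactly the content of the specialization.
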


To conclude the section, we use Theorem \ref{theorem:expectedkernel} to give a lower bound on the expected dimension of two uniformly random codes of dimensions $k_1$ and $k_2$.  We also experimentally compare this bound to the approximate expected value, sampled by Monte-Carlo, for some parameters in Table \ref{tab:lowerboundexperiments}. See Appendix \ref{app:table} for a comparison of the lower bound of Corollary \ref{cor:lowerboundstar} to the expected dimension for values $k_1=2,3$; $k_2=3,4$; $n=7,11,15$; and $q=2,3,5,7$. As we can see, even for low values our bound is appears to be good, and approaches the approximate expected value as any of the parameters increases.

\begin{corollary} \label{cor:lowerboundstar}
    Let $\C_1, \C_2 \leq \F_q^n$ be uniformly random codes of dimensions $k_1$ and $k_2$, respectively, as in Definition~\ref{def:setuprand}. Let $\gamma(j) = (q^j+q-1)q^{-j}$ for $j\in \mathbb{N}$. Then
\begin{multline}
    \mathbb{E}[\dim (\C_1\star\C_2)] \geq k_1 k_2 - \log_q \Bigg( \sum_{r=0}^{k_1}\sum_{i=0}^{k_1}\sum_{j=0}^{k_1}(-1)^{r-j}\gamma(r)^{n-k_2}\gamma(j)^{k_2-k_1} (q-1)^i \\
    q^{jk_2-n+ \binom{r-j}{2}}\binom{k_1}{i}\qbinom{k_1-i}{j}_q \qbinom{k_1-j}{k_1-r}_q \Bigg).
\end{multline}
\end{corollary}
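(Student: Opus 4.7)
The plan is to derive the bound directly from Theorem~\ref{thm:starbil} and Theorem~\ref{theorem:expectedkernel} via Jensen's inequality. By Theorem~\ref{thm:starbil}, for any realization of the random codes $\C_1, \C_2$ we have the pointwise identity
\[
\dim(\C_1 \star \C_2) = k_1 k_2 - \log_q |\ker \psi_{\C_1,\C_2}|.
\]
Taking expectations on both sides yields
\[
\mathbb{E}[\dim(\C_1 \star \C_2)] = k_1 k_2 - \mathbb{E}\!\left[\log_q |\ker \psi_{\C_1,\C_2}|\right].
\]

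Next, I would apply Jensen's inequality to the concave function $\log_q$: since $|\ker \psi_{\C_1,\C_2}|$ is a positive integer-valued random variable, concavity gives
\[
\mathbb{E}\!\left[\log_q |\ker \psi_{\C_1,\C_2}|\right] \leq \log_q \mathbb{E}\!\left[|\ker \psi_{\C_1,\C_2}|\right].
\]
Combining the two displays, we get
\[
\mathbb{E}[\dim(\C_1 \star \C_2)] \geq k_1 k_2 - \log_q \mathbb{E}\!\left[|\ker \psi_{\C_1,\C_2}|\right].
\]

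The last step is just to substitute the closed-form expression for $\mathbb{E}[|\ker \psi_{\C_1,\C_2}|]$ obtained in Theorem~\ref{theorem:expectedkernel} into the right-hand side; this matches the displayed formula in the corollary verbatim.

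There is really no main obstacle here beyond bookkeeping: the content is entirely in Theorems~\ref{thm:starbil} and~\ref{theorem:expectedkernel}, and the corollary is a one-line consequence of the concavity of the logarithm. The only thing worth double-checking is that $|\ker \psi_{\C_1,\C_2}|$ is strictly positive (so that $\log_q$ is well-defined), which is immediate since the kernel always contains the zero form.
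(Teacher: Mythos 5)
Your proposal is correct and matches the paper's proof exactly: both start from the pointwise identity $\dim(\C_1 \star \C_2) = k_1 k_2 - \log_q|\ker \psi_{\C_1,\C_2}|$ of Theorem~\ref{thm:starbil}, apply Jensen's inequality to the concave $\log_q$, and substitute the closed form from Theorem~\ref{theorem:expectedkernel}. Your added remark that $|\ker \psi_{\C_1,\C_2}| \geq 1$ (the zero form always lies in the kernel) is a sensible sanity check, though not spelled out in the paper.
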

\begin{proof}
By Theorem~\ref{thm:starbil} we have $$\dim (\C_1\star\C_2) = k_1k_2 - \dim( \ker \psi_{\C_1,\C_2}) = k_1 k_2 - \log_q \left(|\ker \psi_{\C_1,\C_2}|\right).$$ We then have by Jensen's Inequality that 
$$\mathbb{E}[\dim (\C_1\star\C_2)] = k_1k_2 - \mathbb{E}\big[\log_q \left( |\ker \psi_{\C_1,\C_2}|\right)\big] \geq k_1k_2 - \log_q\left(\mathbb{E}\big[ |\ker \psi_{\C_1,\C_2}|\big]\right),$$ and the result then follows by Theorem \ref{theorem:expectedkernel}.
\end{proof}

\section{Asymptotic behavior as the field size grows}
\label{sec:asympq}

In this section, we establish that the expected dimension of the star product is maximal with respect to the upper bound $\dim(\C_1 \star \C_2) \leq \min\{n,\dim \C_1 \dim \C_2\}$, when we let the field size $q$ tend to infinity. We begin by computing the asymptotic behavior of the expected size of the kernel of the map $\psi_{\C_1, \C_2}$ for two random codes $\C_1, \C_2$ each of fixed dimension. Recall that functions $f,g \colon \NN \rightarrow \mathbb{R}$ are \emph{asymptotically equivalent}, denoted by $f\sim_x g$, if $\lim_{x\rightarrow \infty} \frac{f(x)}{g(x)}=1$; see for instance~\cite{debruijn}. {  Similarly, we use the subscript notation $\mathcal{O}_x(\cdot)$ and $o_x(\cdot)$ to denote the standard Big-$\mathcal{O}$ and little-$o$ asymptotic bounds taken strictly with respect to the limit $x \to \infty$, treating all other parameters as fixed constants.}

\begin{theorem} \label{thm:qlimit} 
    Let $\C_1, \C_2 \leq \F_q^n$ be uniformly random codes of dimensions $k_1$ and $k_2$, respectively, as in Definition~\ref{def:setuprand}. Then
$$\lim_{q \rightarrow \infty} \E[|\ker \psi_{\C_1, \C_2}|] = \lim_{q\rightarrow \infty} (1 + q^{k_1k_2-n}) = \begin{cases}
    1 &\text{ if } k_1k_2 < n,\\
    2 &\text{ if } k_1k_2 = n,\\
     \infty &\text{ if } k_1k_2 > n.
\end{cases}$$

\end{theorem}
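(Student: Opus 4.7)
The plan is to start from the explicit closed form in Theorem~\ref{theorem:expectedkernel} and to analyze the contribution of each summand as $q\to\infty$. The strategy is to isolate two dominant contributions---one equal to the constant $1$ and one asymptotic to $q^{k_1 k_2 - n}$---and to show that all other summands are of strictly smaller order in $q$.

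First, I would split off the subsum with $r=0$, which forces $j=0$. Using $\gamma(0)=q$, it collapses via the binomial theorem:
\[
\sum_{i=0}^{k_1} q^{-k_1}\binom{k_1}{i}(q-1)^i \;=\; q^{-k_1}\cdot q^{k_1} \;=\; 1,
\]
an identity valid for every $q$. This accounts for the ``$1$'' in the claimed limit.

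Next, I would single out the summand $(r,i,j)=(k_1,0,k_1)$. Substituting these values and using $\qbinom{k_1}{k_1}_q = \qbinom{0}{0}_q = \binom{k_1}{0} = 1$, the summand reduces to $\gamma(k_1)^{n-k_1}\,q^{k_1 k_2 - n}$. Since $\gamma(k_1) = 1+(q-1)q^{-k_1}\to 1$ as $q\to\infty$, this term is asymptotic to $q^{k_1 k_2 - n}$.

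The bulk of the proof then consists in showing that every remaining summand is $o\bigl(\max\{1,q^{k_1 k_2 - n}\}\bigr)$. For each $(r,i,j)$ with $r\geq 1$, one collects the powers of $q$ coming from $(q-1)^i\sim q^i$, from the explicit factor $q^{jk_2 - n + \binom{r-j}{2}}$, from $\qbinom{k_1-i}{j}_q \sim q^{j(k_1-i-j)}$ and $\qbinom{k_1-j}{k_1-r}_q \sim q^{(k_1-r)(r-j)}$, and from the two $\gamma$-factors (which contribute an extra $q^{n-k_2}$ when $r=0$ and $q^{k_2-k_1}$ when $j=0$, and otherwise stabilize to bounded positive constants). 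A case analysis over whether $r$ and $j$ are zero or positive shows that the total exponent is uniquely maximized at $(k_1,0,k_1)$ with value $k_1 k_2 - n$. Combining the three pieces yields
\[
\E\bigl[|\ker\psi_{\C_1,\C_2}|\bigr] \;=\; 1 \,+\, q^{k_1 k_2 - n}\bigl(1+o(1)\bigr) \,+\, o\bigl(\max\{1,q^{k_1 k_2 - n}\}\bigr),
\]
from which the three cases in the statement follow by letting $q\to\infty$.

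The main obstacle lies in this last step: establishing the uniqueness of the maximizer and quantifying the gap to the nearest competitors. The gap reduces to the strict positivity of expressions such as $(k_1-1)(2k_2-k_1)/2$ and $k_2 - k_1 + 1$, so the argument works cleanly in the generic regime $k_1,k_2\geq 2$. A secondary subtlety is that rank-one contributions satisfy $\gamma(1)\to 2$ rather than $1$, so one must verify that the resulting bounded prefactors do not enter the leading order; for $k_1\geq 2$ this again reduces to a positivity inequality of the same type, while the remaining boundary configurations can be handled by short direct computation.
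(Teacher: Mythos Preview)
Your proposal is correct and follows essentially the same approach as the paper's proof: both start from the closed form of Theorem~\ref{theorem:expectedkernel}, split off the $r=0$ subsum to obtain the exact contribution~$1$, and then extract the leading exponent among the remaining terms via the asymptotics of the $q$-binomials, identifying the unique maximizer at $(r,i,j)=(k_1,0,k_1)$ with value $k_1k_2-n$. The paper packages the exponent analysis into a function $X(i,j,r)$ and simply asserts the location of its maximum, whereas you sketch the case analysis and the boundary checks more explicitly; the substance is the same.
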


\begin{proof}
By Theorem \ref{theorem:expectedkernel} we know that
$$\E[ |\ker \psi_{\C_1, \C_2}|] = q^{-n} \sum_{r=0}^{k_1} \gamma(r)^{n-k_2}\sum_{i=0}^{k_1} \sum_{j=0}^{\min\{r, k_1-i\}} (-1)^{r-j}\binom{k_1}{i}f(i,j,r),$$
where $\gamma(r) =(q^r+q-1)q^{-r}$ and 

$$f(i,j,r) = \gamma(j)^{k_2-k_1}(q-1)^iq^{jk_2 + \binom{r-j}{2}}\qbinom{k_1-i}{j}_q\qbinom{k_1-j}{k_1-r}_q.$$

We treat the summand for $r=0$ and the others differently.  For $r=0$ only $j=0$ contributes to the sum, so
\begin{align*}
   q^{-n} \gamma(0)^{n-k_2}\sum_{i=0}^{k_1} f(i,0) =  q^{-n} q^{n-k_2} q^{k_2 - k_1}\sum_{i=0}^{k_1} \binom{k_1}{i}(q-1)^i = 1.
\end{align*}
For $r \ge 1$, note that the $q$-binomial satisfies $\qbinom{n}{k}_q\sim_q q^{k(n-k)}$. Using this, one can verify that
$$f(i,j,r) \sim_q \begin{cases}
    q^{k_2 - k_1 + i - \frac{1}{2}r^2 - \frac{1}{2}r + k_1r} & \text{if } j=0,\\
    2^{k_2-k_1} q^{i+jk_2 - \frac{1}{2}r^2 -\frac{1}{2}r - \frac{1}{2}j^2 + \frac{1}{2}j - ij +k_1r} & \text{if } j = 1, \\ 
    q^{i+jk_2 - \frac{1}{2}r^2 -\frac{1}{2}r - \frac{1}{2}j^2 + \frac{1}{2}j - ij +k_1r} & \text{if } j \geq 2.
\end{cases}$$
Define
$$X(i,j,r) =\begin{cases} k_2 - k_1 + i - \frac{1}{2}r^2 - \frac{1}{2}r + k_1r & \text{if } j=0,\\
i+jk_2 - \frac{1}{2}r^2 -\frac{1}{2}r - \frac{1}{2}j^2 + \frac{1}{2}j - ij +k_1r & \text{if } j \geq 1.\end{cases}$$
The leading term  of
$$\sum_{r=1}^{k_1} \gamma(r)^{n-k_2}\sum_{i=0}^{k_1} \sum_{j=0}^{\min\{r, k_1-i\}} (-1)^{r-j}\binom{k_1}{i}f(i,j,r),$$
comes from the maximum of $X(i,j,r)$ over all allowed indices, which is
$$\max\big\{ X(i,j,r)\mid 1\leq r \leq k_1,\; 0\leq i \leq k_1, \; 0\leq j \leq \min\{r,k_1-i\}\big \} = X(0,k_1,k_1) = k_1k_2.$$
Therefore we get 
\begin{align*}
    \lim_{q \rightarrow \infty} \E[|\ker \psi_{\C_1, \C_2}|] &= 1 + \lim_{q \rightarrow \infty} q^{-n} \gamma^{n-k_2} (-1)^{k_1-k_1} \binom{k_1}{0} q^{k_1k_2}\\
    &= 1 + \lim_{q \rightarrow \infty} q^{k_1k_2 -n}, 
\end{align*}
which yields the desired result. 
\end{proof}

With the above result, we are able to show the first main result of this section. 

\begin{corollary} \label{cor:randomcodesinq}
    Let $\C_1, \C_2 \leq \F_q^n$ be uniformly random codes of dimensions $k_1$ and $k_2$, respectively, as in Definition~\ref{def:setuprand}. Then 
 $$\lim_{q \rightarrow \infty} \Pr(  \dim (\C_1\star\C_2) = \min\{k_1k_2, n\}) = 1.$$
\end{corollary}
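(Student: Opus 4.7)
The plan is to recast the corollary as a statement about $|\ker \psi_{\C_1,\C_2}|$ and then apply Markov's inequality together with Theorem \ref{thm:qlimit}. By Theorem \ref{thm:starbil}, $\dim(\C_1\star\C_2) = k_1k_2 - \log_q|\ker \psi_{\C_1,\C_2}|$, so $|\ker\psi_{\C_1,\C_2}|$ always lies in $\{1,q,q^2,\ldots\}$. Since $\C_1\star\C_2 \leq \F_q^n$ gives the deterministic bound $\dim(\C_1\star\C_2)\leq n$, we also have $|\ker\psi_{\C_1,\C_2}|\geq q^{s_0}$, where $s_0 := \max\{0,\, k_1k_2-n\}$. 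Under this reformulation, the event $\{\dim(\C_1\star\C_2) = \min\{k_1k_2,n\}\}$ is precisely $\{|\ker\psi_{\C_1,\C_2}| = q^{s_0}\}$, i.e.\ the kernel attains its minimum possible size.

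The key observation is that $|\ker\psi_{\C_1,\C_2}|$ only takes values that are powers of $q$, so strictly exceeding $q^{s_0}$ forces it to be at least $q^{s_0+1}$. Applying Markov's inequality directly to the non-negative random variable $|\ker\psi_{\C_1,\C_2}|$ with threshold $q^{s_0+1}$ then yields
\[
\Pr\bigl(|\ker\psi_{\C_1,\C_2}| > q^{s_0}\bigr) \;=\; \Pr\bigl(|\ker\psi_{\C_1,\C_2}| \geq q^{s_0+1}\bigr) \;\leq\; \frac{\E[|\ker\psi_{\C_1,\C_2}|]}{q^{s_0+1}}.
\]
By Theorem \ref{thm:qlimit}, $\E[|\ker\psi_{\C_1,\C_2}|]$ is asymptotic to $1 + q^{s_0}$ as $q\to\infty$, so the right-hand side above is $O(1/q)$ and tends to $0$. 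Translating back via Theorem \ref{thm:starbil} gives $\Pr(\dim(\C_1\star\C_2) = \min\{k_1k_2,n\}) \to 1$, which is the claim.

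I do not foresee a significant obstacle. The whole argument is powered by the multiplicative gap of size $q$ between consecutive admissible values of $|\ker\psi_{\C_1,\C_2}|$: Theorem \ref{thm:qlimit} pins $\E[|\ker\psi_{\C_1,\C_2}|]$ to its deterministic minimum $q^{s_0}$ up to a bounded multiplicative factor, and one extra factor of $q$ in the Markov denominator is enough to make that ratio vanish. The only mildly delicate point is that when $k_1k_2>n$ one needs slightly more than the equality of divergent limits stated in Theorem \ref{thm:qlimit}, namely $\E[|\ker\psi_{\C_1,\C_2}|] = O(q^{s_0})$; this is immediate from the exponent analysis of $X(i,j,r)$ used to prove that theorem, where the unique maximizer $(i,j,r) = (0,k_1,k_1)$ already contributes the full leading term $q^{k_1k_2-n}$.
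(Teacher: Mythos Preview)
Your proof is correct and follows essentially the same approach as the paper: Markov's inequality applied to $|\ker\psi_{\C_1,\C_2}|$ at threshold $q^{s_0+1}$, combined with Theorem~\ref{thm:qlimit} (and, in the case $k_1k_2>n$, the finer information from its proof that the leading term is $q^{k_1k_2-n}$). The only difference is cosmetic---you handle both regimes at once via $s_0=\max\{0,k_1k_2-n\}$, whereas the paper splits into the cases $k_1k_2\le n$ (threshold $q$) and $k_1k_2>n$ (threshold $q^{k_1k_2-n+1}$).
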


\begin{proof}
Since $\im (\psi_{\C_1, \C_2}) = \C_1 \star \C_2$, we have that $|\ker \psi_{\C_1, \C_2}| < q^t$ if and only if $|\im \psi_{\C_1, \C_2}| > q^{k_1k_2 -t}$.
Hence $\Pr( \dim(\C_1 \star \C_2) > k_1k_2- t)  = \Pr(|\ker \psi_{\C_1, \C_2}| < q^t).$
Moreover, by Markov's inequality, we get $$\Pr(|\ker \psi_{\C_1, \C_2}| < q^t) \geq 1- \frac{\E\big[|\ker \psi_{\C_1, \C_2}|\big]}{q^t}.$$
We now consider two cases. If $k_1k_2 \leq n$, let $t=1$ and observe that by Theorem \ref{thm:qlimit}, 
\begin{align*}
   \lim_{q \rightarrow\infty} \Pr(|\ker \psi_{\C_1, \C_2}| < q) &\geq \lim_{q \rightarrow \infty}1 - \frac{2}{q} = 1.
\end{align*}
If $k_1k_2 > n$, let $t \coloneqq k_1k_2-n+1$ and observe that, again by Theorem \ref{thm:qlimit},  
\begin{align*}
   \lim_{q \rightarrow\infty} \Pr(|\ker \psi_{\C_1, \C_2}| < q^{k_1k_2-n+1}) &\geq \lim_{q \rightarrow \infty}1 - \frac{1 + q^{k_1k_2-n}}{q^{k_1k_2-n+1}} = 1.
\end{align*}
Putting everything together gives the desired statement.
\end{proof}

{ 
Beyond convergence in probability, the bounds derived above immediately yield the decay rate of the variance, showing it vanishes at least proportionally to the inverse of the field size, and at a faster inverse polynomial rate when $k_1k_2 < n$.

\begin{proposition} \label{prop:variance_q}
    Let $\C_1, \C_2 \leq \F_q^n$ be uniformly random codes of dimensions $k_1$ and $k_2$, respectively, as in Definition~\ref{def:setuprand}. Then
    $$
        \mathrm{Var}(\dim (\C_1\star \C_2)) = \begin{cases}
            \mathcal{O}_q(q^{k_1k_2-n}) & \text{if } k_1k_2 < n, \\
            \mathcal{O}_q(q^{-1}) & \text{if } k_1k_2 \geq n.
        \end{cases}
    $$
\end{proposition}
\begin{proof}
    Let $M=\min\{k_1k_2,n\}$ and define the dimension defect $X=M-\dim(\C_1 \star \C_2)$. Then $X$ is a non-negative integer bounded by $n$. We can bound the second moment as
    $$
        \mathbb{E}[X^2] = \mathbb{E}[X^2 \,|\, X\geq 1] \mathbb{P}(X\geq 1) \leq n^2 \mathbb{P}(X\geq 1).
    $$
    Since $M$ is constant, $\mathrm{Var}(\dim(\C_1\star \C_2)) = \mathrm{Var}(X)\leq \mathbb{E}[X^2] \leq n^2\mathbb{P}(X \geq 1)$. For $k_1k_2 < n$, the event $X \geq 1$ implies $|\ker \psi_{\C_1,\C_2}| > 1$, which is equivalent to $|\ker \psi_{\C_1,\C_2}| - 1 \geq 1$. Applying Markov's inequality and Theorem~\ref{thm:qlimit} yields $\mathbb{P}(X \geq 1) \leq \mathbb{E}[|\ker \psi_{\C_1,\C_2}| - 1] = \mathcal{O}_q(q^{k_1k_2-n})$. For $k_1k_2 \geq n$, the event $X \geq 1$ is equivalent to $|\ker \psi_{\C_1,\C_2}| \geq q^{k_1k_2-n+1}$. Again, applying Markov's inequality and Theorem~\ref{thm:qlimit} yields $\mathbb{P}(X \geq 1) \leq \mathbb{E}[|\ker \psi_{\C_1,\C_2}|] / q^{k_1k_2-n+1} = \mathcal{O}_q(q^{-1})$. As $n$ is fixed independent of $q$, the variance scales accordingly.
\end{proof}

Analogous to our experimental comparison for the expected dimension, we complement the asymptotic variance bounds with empirical data for small field sizes. Table~\ref{tab:distribution_experiments} in the Appendix presents the empirical distribution and sample variance of the star product dimension across the same parameter sets. Consistent with the preceding theorem, the sample variance decays rapidly as $q$ increases. Moreover, the sample frequencies highlight that over $\F_2$, there seems to remain a non-negligible probability of sampling code pairs that yield a relatively small star product dimension.
}

In the sequel, we show that when fixing the code $\C_1$ to be MDS and $\C_2$ to be uniformly random of a certain dimension relative to $\C_1$, the expected dimension of $\C_1 \star \C_2$ behaves as if~$\C_1$ was a uniformly random code. We start with a lemma that can be shown with a standard application of Möbius inversion.

\begin{Lemma} \label{lem:supp_subspaces}
Let $n,\ell\in \mathbb{N}$ with $\ell \leq n$ and $I\subseteq [n]$. Then the number of $\D \leq \F_q^n$ with $\dim \D = \ell$ and $\supp(\D) = I$ is 
\[\sum^{|I|}_{i=\ell}(-1)^{|I|-i} \binom{|I|}{i}\qbinom{i}{\ell}_q.\]
\end{Lemma}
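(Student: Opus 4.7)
The plan is to apply Möbius inversion on the Boolean lattice of subsets of $I$, as the hint in the excerpt suggests. I would define two counting functions: for each $J \subseteq I$, let
\[f(J) = \#\{\D \leq \F_q^n : \dim\D = \ell, \; \supp(\D) = J\}, \qquad g(J) = \#\{\D \leq \F_q^n : \dim\D = \ell, \; \supp(\D) \subseteq J\}.\]
The obvious relation $g(J) = \sum_{J' \subseteq J} f(J')$ holds for every $J \subseteq I$, so by Möbius inversion on the Boolean lattice one recovers $f(I) = \sum_{J \subseteq I}(-1)^{|I|-|J|} g(J)$.

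The key observation is that $g(J)$ admits a clean closed form: a subspace $\D \leq \F_q^n$ with $\supp(\D)\subseteq J$ is zero outside the coordinates in $J$, so $\D$ is determined by its image under the coordinate projection $\pi_J$, and conversely any subspace of $\F_q^{|J|}$ lifts uniquely by padding with zeros. This gives a bijection with $\ell$-dimensional subspaces of $\F_q^{|J|}$, whence $g(J) = \qbinom{|J|}{\ell}_q$.

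Substituting this into the Möbius formula and grouping subsets of $I$ by their cardinality $i = |J|$ (there are $\binom{|I|}{i}$ such subsets) yields
\[f(I) = \sum_{i=0}^{|I|} (-1)^{|I|-i} \binom{|I|}{i} \qbinom{i}{\ell}_q = \sum_{i=\ell}^{|I|} (-1)^{|I|-i} \binom{|I|}{i} \qbinom{i}{\ell}_q,\]
where in the last step the terms with $i < \ell$ vanish because $\qbinom{i}{\ell}_q = 0$ in that range. This is exactly the claimed formula. There is no serious obstacle in this argument; the only point requiring a moment of care is the bijection used to evaluate $g(J)$, and the trivial truncation of the summation range.
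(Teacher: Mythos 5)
Your proof is correct and is precisely the ``standard application of M\"obius inversion'' that the paper invokes without writing out the details: you invert the relation between support-exactly-$J$ and support-contained-in-$J$ counts, evaluate the latter as $\qbinom{|J|}{\ell}_q$ via the obvious bijection, and collapse the sum by cardinality. No discrepancy with the paper's intended argument.
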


\begin{proposition}\label{prop:expdim-mds-rnd}
Let $\C_1\leq \mathbb{F}_q^n$ be MDS with $\dim \C_1 = k_1$. Fix $k_2\leq n$ such that either $k_2=1$, or $k_2\geq n-k_1+1$. Let $\C_2 \leq \mathbb{F}_q^n$ be a uniformly random code with $\dim \C_2 = k_2$. Then$$ \mathbb{E}[\dim (\C_1\star\C_2)] =
\begin{cases}
    \frac{1}{q^n-1} \sum^n_{i=1}\binom{n}{i}(q-1)^{i}\min\{k_1,i\}&\text{if } k_2 = 1,\\
      \qbinom{n}{k_2}^{-1}_q \sum_{s=k_2}^n s\binom{n}{s}\left( \sum_{i=0}^{s-k_2} (-1)^i \qbinom{s-i}{k_2}_q\binom{s}{s - i}\right)&\text{if } k_2 \geq n-k_1+1.
\end{cases}$$
\end{proposition}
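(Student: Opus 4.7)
The plan is to reduce, in each of the two cases, the dimension $\dim(\C_1\star\C_2)$ to a simple function of $\C_2$ alone, and then compute the expectation by direct counting.

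For $k_2=1$, write $\C_2=\langle v\rangle$. Multiplication by $v$ is an $\F_q$-linear map $\C_1\to\F_q^n$ whose image lies in the coordinate subspace supported on $\supp(v)$, and can be identified with $\pi_{\supp(v)}(\C_1)$; hence $\dim(\C_1\star\C_2)=\dim \pi_{\supp(v)}(\C_1)$. Because $\C_1$ is MDS, every set of at most $k_1$ coordinates is an information set, so $\dim\pi_I(\C_1)=\min(|I|,k_1)$ for every $I\subseteq[n]$. Sampling $\C_2$ uniformly among $1$-dimensional subspaces is equivalent to sampling a nonzero vector $v$ uniformly (the support being scale-invariant); binning the $q^n-1$ nonzero vectors by support size $i$ gives $\binom{n}{i}(q-1)^i$ per bin, which yields the first formula immediately.

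For $k_2\geq n-k_1+1$, my key claim is $\dim(\C_1\star\C_2)=|\supp(\C_2)|$. The upper bound is immediate, since $\C_1\star\C_2$ is supported on $\supp(\C_2)$. For the matching lower bound, write $J=\supp(\C_2)$ and $s=|J|$; under the natural embedding $\F_q^J\hookrightarrow\F_q^n$, the code $\C_1\star\C_2$ is identified with $\pi_J(\C_1)\star\pi_J(\C_2)$ inside $\F_q^s$. The punctured code $\pi_J(\C_1)$ is MDS of dimension $\min(k_1,s)$ and non-degenerate (MDS codes being non-degenerate, every $i\in J$ still carries some $\C_1$-codeword nonzero at $i$), while $\pi_J(\C_2)$ is non-degenerate of dimension $k_2$ by construction of $J$. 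Applying Proposition~\ref{prop:MDSnondeg} with ambient length $s$ gives the bound $\min\{s,\min(k_1,s)+k_2-1\}$; this equals $s$ in both subcases: for $s\leq k_1$ it is clear, and for $s>k_1$ the hypothesis $k_2\geq n-k_1+1$ yields $k_1+k_2-1\geq n\geq s$.

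Once this identity is in place, it remains to compute $\mathbb{E}[|\supp(\C_2)|]$ for a uniformly random $k_2$-dimensional subspace. Conditioning on the actual support of size $s$, invoking Lemma~\ref{lem:supp_subspaces} to count the subspaces of $\F_q^n$ with a fixed support of size $s$, and summing over all $\binom{n}{s}$ such supports produces an expression that matches the target formula after the change of variables $i\mapsto s-i$ in the inner sum. The main obstacle, in my view, is precisely the identity $\dim(\C_1\star\C_2)=|\supp(\C_2)|$: the hypothesis $k_2\geq n-k_1+1$ has to be used exactly so that the Proposition~\ref{prop:MDSnondeg} lower bound reaches the ambient length $s$ in the regime $s>k_1$, and one must also verify the non-degeneracy of the puncturing $\pi_J(\C_1)$, which is where the MDS hypothesis on $\C_1$ enters in an essential way.
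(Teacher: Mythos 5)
Your proof is correct, and its core strategy coincides with the paper's: establish $\dim(\C_1\star\C_2)=|\supp(\C_2)|$ in the second regime, reduce to counting subspaces by support size via Lemma~\ref{lem:supp_subspaces}, and handle $k_2=1$ by counting nonzero vectors by support size up to scalars. The only notable divergence is in the proof of the key identity: the paper splits into $|\supp(\C_2)|>k_1$ (where it invokes Proposition~\ref{prop:MDSnondeg}) and $|\supp(\C_2)|\le k_1$ (where it gives a direct construction of codewords $c^j,d^j$ with $c^j\star d^j=e_j$), whereas you unify both subcases by observing that $\pi_J(\C_1)$ is MDS of dimension $\min(k_1,s)$ in either regime, so Proposition~\ref{prop:MDSnondeg} applies uniformly. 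Your unified version is slightly cleaner; the paper's explicit construction for small supports is perhaps more self-contained but adds a case distinction. Both are sound.
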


\begin{proof}
If $k_2=1$ then $\dim (\C_1\star \C_2) = \min \{\dim \C_1,|\mathrm{supp}(\C_2)|\}$. The result then follows by counting the number of $v\in \F_q^n$ with $|\mathrm{supp}(v)| = i$ for $i=1,\ldots,n$, up to a scalar multiple.
For the rest of the proof assume $k_2 \geq n-k_1+1$ and let $I = \supp(\C_2)$.

Suppose $|I|>k_1$. Then $\pi_{I}(\C_1)$ is an MDS code of dimension $k_1$ and $\pi_{I}(\C_2)$ is non-degenerate with $\dim(\pi_{I}(\C_2)) =k_2$. By Proposition \ref{prop:MDSnondeg} we have
\[ \dim(\C_1\star \C_2) = \dim ( \pi_{I}(\C_1)\star \pi_{I}(\C_2)) \geq \min \{|I|, \dim(\pi_{I}(\C_1))+\dim(\pi_{I}(\C_2)) - 1\} = |I|.\]
Furthermore, $\dim(\C_1\star \C_2) = \dim ( \pi_{I}(\C_1')\star \pi_{I}(\C_2')) \leq |I|$ as $\pi_{I}(\C_1),\pi_{I}(\C_2)\leq \F_q^{|I|}$, which proves $\dim(\C_1\star \C_2) =|\supp(\C_2)|$ for $|\supp(\C_2)|>k_1$.

Now suppose $|I|\leq k_1$. For $j\in I$ choose $c^j\in \C_1$ such that $c^j_j = 1$ and $c^j_i = 0$ for any $i\in I\setminus\{j\}$. This is possible as $\C_1$ is MDS with $\dim(\C_1)= k_1 \geq |I|$. Similarly, choose $d^j\in \C_2$ with $d^j_j = 1$. Then $\C_1\star \C_2 = \langle c^j\star d^j = e_j \; |\; j \in \supp(\C_2)\rangle$, so $\dim(\C_1\star \C_2) = |\supp(\C_2)|$.

Lastly, for $s\in \mathbb{N}$ let $\varphi(s)$ denote the number of subspaces $\C_2\leq \F_q^n$ of dimension $\ell$ and $|\supp(\C_2)|=s$. Then
\[
\sum_{\substack{\C_2 \leq \F_q^n \\ \dim \C_2 = \ell}} |\supp(\C_2)| = \sum^{n}_{s= \ell} s\varphi(s),
\]
and the result then follows by Lemma \ref{lem:supp_subspaces}.
\end{proof}

Note that Proposition \ref{prop:expdim-mds-rnd} does not cover the cases where $2\leq \dim \C_2 \leq n-\dim \C_1$. In the following example, we provide two monomially inequivalent MDS codes which yield different expected dimensions with respect to the star product for a value not covered by Proposition~\ref{prop:expdim-mds-rnd}. This also shows that the expected star product dimension is not an invariant under combinatorial equivalence, where by combinatorial equivalence we mean that the induced matroids of the linear codes are equivalent \cite{Oxley}.

\begin{example} \label{exmp:mdscodes}
Consider the two $[6,3]_7$ codes $\C_1,\C_2$ defined by the generator matrices
\[G_{\C_1} = \begin{bmatrix}
    1 & 0 & 0 & 4 & 5 & 2 \\
    0 & 1 & 0 & 6 & 1 & 1 \\
    0 & 0 & 1 & 5 & 6 & 5
\end{bmatrix}, \quad G_{\C_2} = \begin{bmatrix}
    1 & 0 & 0 & 1 & 1 & 6 \\
    0 & 1 & 0 & 4 & 1 & 4 \\
    0 & 0 & 1 & 6 & 2 & 4
\end{bmatrix}.\]
Both codes are MDS, so their underlying matroids are both uniform with the same parameters. If $\D \leq \F_7^6$ is any code chosen uniformly at random with dimension $\dim \D = 2$, then
\begin{align*}
    \mathbb{E}[\dim (\C_1 \star \D)] = \frac{13138498}{2288417},& \quad \mathbb{E}[\dim( \C_2 \star \D)] = \frac{13154050}{2288417}.
\end{align*}
However, for the case of $\dim \D = 3$, which is also not covered by Proposition \ref{prop:expdim-mds-rnd}, we have
\begin{align*}
    \mathbb{E}[\dim( \C_1 \star \D)] =   \mathbb{E}[\dim( \C_2 \star \D)] = \frac{72051027}{12044300}.
\end{align*}
\end{example}

The expected dimension under the star product when $\C_1$ is fixed is however an invariant under monomial equivalence, as shown by the proposition below. 
\begin{proposition}
    Let $\C_1,\C_2\leq \F_q^n$ be monomially equivalent and $\ell \in [n]$. Let $\D\leq \F_q^n$ be any code chosen uniformly at random with $\dim \D = \ell$. Then $\mathbb{E}[\dim ( \C_1 \star \D)] = \mathbb{E}[\dim (\C_2 \star \D)]$.
\end{proposition}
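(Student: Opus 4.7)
The plan is to show that the expectation is unchanged when $\C_1$ is replaced by any monomially equivalent code, which reduces to two observations: $(i)$ a change-of-variables identity for the star product under monomial transformations, and $(ii)$ the fact that the uniform distribution on the Grassmannian of $\ell$-dimensional subspaces is invariant under right multiplication by any invertible matrix.

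By monomial equivalence, $\C_2 = \C_1 M$ for some monomial matrix $M \in \F_q^{n \times n}$. Write $M_{i,\tau(i)} = a_i$ for a permutation $\tau$ of $[n]$ and nonzero scalars $a_1, \dots, a_n$, and let $M^\star$ be the monomial matrix with the same permutation $\tau$ but weights $a_i^2$. A coordinate-wise check gives
\[
(xM) \star (yM) = (x \star y)\, M^\star \qquad \text{for all } x,y \in \F_q^n,
\]
since both sides have $j$th entry $a_{\tau^{-1}(j)}^{2}\, x_{\tau^{-1}(j)}\, y_{\tau^{-1}(j)}$. By bilinearity of $\star$, this extends to
\[
(\C_1 M) \star (\D M) = (\C_1 \star \D)\, M^\star
\]
for every subspace $\D \leq \F_q^n$.

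To conclude, let $\D' \coloneqq \D M^{-1}$, so that $\D = \D' M$. The map $\D \mapsto \D M^{-1}$ is an $\F_q$-linear bijection on the set of $\ell$-dimensional subspaces of $\F_q^n$ and hence preserves the uniform distribution, so $\D'$ is again uniformly random of dimension $\ell$. Combining the identity above with the fact that right-multiplication by an invertible monomial matrix preserves dimension, we obtain
\[
\mathbb{E}[\dim(\C_2 \star \D)] = \mathbb{E}[\dim((\C_1 M) \star (\D' M))] = \mathbb{E}[\dim((\C_1 \star \D')\, M^\star)] = \mathbb{E}[\dim(\C_1 \star \D')] = \mathbb{E}[\dim(\C_1 \star \D)].
\]

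There is no serious obstacle here; the only technical point is the coordinate-wise identity $(xM) \star (yM) = (x \star y)\, M^\star$, which is immediate once $M$ is decomposed into its permutation and diagonal parts.
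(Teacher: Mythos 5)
Your proof is correct and follows essentially the same strategy as the paper: change variables on the Grassmannian of $\ell$-dimensional subspaces via $\D \mapsto \D M^{-1}$, which preserves the uniform distribution, and absorb the monomial action into the other factor of the star product. Your explicit identity $(xM)\star(yM) = (x\star y)M^\star$ is in fact more careful than the paper's intermediate step $\dim(\C_2 M \star \D) = \dim(\C_2 \star \D M)$, which fails term-by-term for general $M$ and $\D$ (e.g.\ a cyclic permutation on $\F_q^3$ with $\C_2 = \langle(1,1,0)\rangle$, $\D = \langle(1,0,0)\rangle$) but holds once summed over all $\D$.
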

\begin{proof}
    
    Suppose $\C_1$ and $\C_2$ is monomially equivalent under $M$, so $\C_1 = \C_2M$. Then,
\[\sum_{\substack{\D \leq \F_q^n \\ \dim \D = \ell}} \dim (\C_1\star \D)=\sum_{\substack{\D \leq \F_q^n \\ \dim \D = \ell}} \dim (\C_2 M\star \D) =\sum_{\substack{\D \leq \F_q^n \\ \dim \D = \ell}} \dim (\C_2\star \D M) =\sum_{\substack{\D' \leq \F_q^n \\ \dim \D' = \ell}} \dim (\C_2 \star \D'),\]
as $M$ is a generalized permutation matrix.
\end{proof}

Lastly, we show that when $\C_1$ is a fixed MDS code and $\C_2$ is a uniformly random code whose dimension satisfies either $\dim \C_2 = 1$ or $\dim \C_2 \ge n - \dim \C_1 + 1$, the expected dimension of their star product behaves asymptotically in $q$ as it does when both $\C_1$ and $\C_2$ are uniformly random, as stated in Corollary~\ref{cor:randomcodesinq}.

\begin{proposition} \label{prop:mdsasymp}
 Let the setup be as in Proposition \ref{prop:expdim-mds-rnd}.
Then 
 $$\lim_{q \rightarrow \infty} \mathbb{P}(  \dim (\C_1\star\C_2) = \min\{k_1k_2, n\}) = 1.$$
\end{proposition}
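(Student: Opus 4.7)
The plan is to treat the two cases $k_2=1$ and $k_2\geq n-k_1+1$ separately, and in each case reduce the probabilistic statement to an asymptotic estimate of the explicit expectation provided by Proposition~\ref{prop:expdim-mds-rnd}. Because $\dim(\C_1\star\C_2)$ is integer-valued and bounded above by $M:=\min\{k_1k_2,n\}$, the non-negative random variable $M-\dim(\C_1\star\C_2)$ is at least $1$ whenever the maximum is not attained, so Markov's inequality gives
\[
\Pr\bigl(\dim(\C_1\star\C_2)<M\bigr)\leq M-\mathbb{E}[\dim(\C_1\star\C_2)].
\]
Hence it suffices in each case to show $\mathbb{E}[\dim(\C_1\star\C_2)]\to M$ as $q\to\infty$.

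For $k_2=1$ we have $M=k_1$. Starting from the formula in Proposition~\ref{prop:expdim-mds-rnd} and using the identity $\sum_{i=0}^n\binom{n}{i}(q-1)^i=q^n$, I would split off the terms $i\geq k_1$ (where $\min\{k_1,i\}=k_1$) and rewrite
\[
k_1-\mathbb{E}[\dim(\C_1\star\C_2)]=\frac{1}{q^n-1}\sum_{i=1}^{k_1-1}(k_1-i)\binom{n}{i}(q-1)^i.
\]
The right-hand side is a polynomial in $q$ of degree at most $k_1-1<n$ divided by $q^n-1$, so it tends to $0$ as $q\to\infty$, which together with Markov proves the case.

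For $k_2\geq n-k_1+1$ I would first verify that $M=n$: if $k_1=1$ then $k_2\geq n$ forces $k_2=n$, and if $k_1\geq 2$ then $k_1k_2\geq k_1(n-k_1+1)\geq n$, where the last inequality uses $n\geq k_1$. Next, I would invoke the deterministic identity $\dim(\C_1\star\C_2)=|\supp(\C_2)|$ that holds for every $\C_2$ with $\dim\C_2\geq n-k_1+1$; this is already established inside the proof of Proposition~\ref{prop:expdim-mds-rnd}, via Proposition~\ref{prop:MDSnondeg} when $|\supp(\C_2)|>k_1$ and via the explicit construction of standard basis vectors in $\C_1\star\C_2$ when $|\supp(\C_2)|\leq k_1$. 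The claim then reduces to $\Pr(\supp(\C_2)=[n])\to 1$, which I would prove by a union bound over the $n$ coordinates, noting that for each fixed coordinate the $k_2$-dimensional subspaces annihilating it form a Grassmannian of $\F_q^{n-1}$:
\[
\Pr(\supp(\C_2)\neq[n])\;\leq\;n\cdot\frac{\qbinom{n-1}{k_2}_q}{\qbinom{n}{k_2}_q}\;=\;n\cdot\frac{q^{n-k_2}-1}{q^n-1}\;\xrightarrow[q\to\infty]{}\;0.
\]

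I do not anticipate a serious obstacle: in both cases the argument is a routine consequence of material already assembled in the paper. The only mildly technical point is the bookkeeping of leading $q$-powers in the $k_2=1$ formula, and separating the small-$i$ tail as above makes that estimate entirely elementary.
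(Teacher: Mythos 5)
Your argument is correct, and the second case is handled by a genuinely different route than the paper's. For $k_2=1$ you and the paper both compute $\lim_{q\to\infty}\mathbb{E}[\dim(\C_1\star\C_2)]$ from the formula of Proposition~\ref{prop:expdim-mds-rnd} and then apply Markov; your bookkeeping (subtracting via $\sum_i\binom{n}{i}(q-1)^i=q^n$ and isolating the $i<k_1$ tail) is just a cleaner way of extracting the same limit. For $k_2\geq n-k_1+1$, however, the paper sticks with the expectation-plus-Markov template, taking the limit of the $q$-binomial expression
$\qbinom{n}{k_2}_q^{-1}\sum_s s\binom{n}{s}\bigl(\sum_i(-1)^i\qbinom{s-i}{k_2}_q\binom{s}{s-i}\bigr)$
term by term, whereas you bypass the expectation entirely: you invoke the pointwise identity $\dim(\C_1\star\C_2)=|\supp(\C_2)|$ (established inside the proof of Proposition~\ref{prop:expdim-mds-rnd}) and show directly by a union bound that $\Pr(\supp(\C_2)\neq[n])\leq n\,(q^{n-k_2}-1)/(q^n-1)\to 0$. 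Your route is shorter and avoids the $q$-binomial asymptotics; the paper's route has the minor advantage of reusing the Markov framework uniformly across both cases and the machinery already set up for Corollary~\ref{cor:randomcodesinq}. One small point worth making explicit in your write-up is that the random model in Proposition~\ref{prop:expdim-mds-rnd} is uniform over all $k_2$-dimensional subspaces (as the normalization by $\qbinom{n}{k_2}_q$ in its formula shows), which is exactly what your Grassmannian ratio computation requires.
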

\begin{proof}
We first show that $\lim_{q\rightarrow \infty}\mathbb{E}[\dim (\C_1\star\C_2)] = \min\{k_1k_2,n\}$. Consider first the case $k_2=1$. By Proposition \ref{prop:expdim-mds-rnd} then
\begin{align*}
    \lim_{q\rightarrow \infty} \mathbb{E}[\dim (\C_1\star\C_2)] &= \lim_{q \rightarrow \infty} 
    \frac{1}{q^n-1} \sum^n_{i=1}\binom{n}{i}(q-1)^{i}\min\{k_1,i\} \\
    &= \sum^{n}_{i=1} \binom{n}{i}\min\{k_1,i\} \lim_{q\rightarrow \infty} \frac{(q-1)^i}{q^n-1} \\
    &= \min\{k_1,n\} \lim_{q\rightarrow \infty} \frac{(q-1)^n}{q^n-1} \\
    &= \min \{k_1,n\} \lim_{q\rightarrow \infty}\frac{(1-\frac{1}{q})^n}{1-\frac{1}{q^n}}\\ 
    &= \min \{k_1,n\}.
\end{align*}
Similarly, for $k_2\geq n-k_1+1$ we have
\begin{align*}
    \lim_{q\rightarrow \infty} \mathbb{E}[\dim (\C_1\star\C_2)] &= \lim_{q\rightarrow \infty}  \qbinom{n}{k_2}^{-1}_q \sum_{s=k_2}^n s\binom{n}{s}\left( \sum_{i=0}^{s-k_2} (-1)^i \qbinom{s-i}{k_2}_q\binom{s}{s - i}\right) \\ 
    &= \sum^{n}_{s=k_2} \sum^{s-k_2}_{i=0} s \binom{n}{s} (-1)^i \binom{s}{s-i} \lim_{q \rightarrow \infty} \qbinom{s-i}{k_2}_q {\qbinom{n}{k_2}_q}^{-1} \\
    &= n \lim_{q\rightarrow \infty} \qbinom{n}{k_2}_q \qbinom{n}{k_2}_q^{-1} \\
    &= n  \\
    &= \min\{k_1k_2,n\}.
\end{align*}
Indeed, the last equality follows from the observation that the assumption 
$k_2 \ge n - k_1 + 1$ implies $k_1k_2 \ge n$, so $\min\{k_1k_2, n\} = n$.

Define the non-negative integer-valued random variable $X = \min\{k_1k_2,n\} - \dim (\C_1\star\C_2)$.
Then
\[
    \lim_{q\rightarrow \infty} \mathbb{E}[X] = \min \{k_1k_2,n\} - \lim_{q\rightarrow \infty}\mathbb{E}[\dim (\C_1\star\C_2)] = 0,
\]
so by Markov's inequality,
\[
    \mathbb{P}(\dim (\C_1\star\C_2) \neq \min\{k_1k_2,n\}) = \mathbb{P}(X \geq 1) \leq \mathbb{E}[X],
\]
and the result then follows by letting $q$ tend to $\infty$.
\end{proof}

\section{Asymptotic behavior as the code dimensions grow}
\label{sec:asympindim}

We now consider the asymptotic behavior of the dimension of the star product of two uniformly random codes, in the case where the dimensions $k_1, k_2$ of the respective codes $\C_1, \C_2$ are strictly increasing functions of the same variable $t$, while assuming $k_2$ does not grow too quickly compared to $k_1$. Before proving the main result of this section, we introduce some new notation and prove a few lemmas we will need.

\begin{notation}
Let $k_1,k_2 \in \mathbb{N}$ such that $ k_1\leq k_2$ and $\alpha\in \mathbb{R}$. Define
     \begin{align*}
      S^-_{k_1,k_2}(\alpha) &\coloneqq  \left\{A \in S^{k_1,k_2} \, \middle| \, 0 <  \rk(A) \leq \alpha k_1\right\}, \\
    S_{k_1,k_2}^+(\alpha) &\coloneqq \left\{A \in S^{k_1,k_2} \, \middle| \, \rk(A) > \alpha k_1\right\}.
    \end{align*}
The subscripts $k_1, k_2$ will be omitted when they are clear from context. For instance, we will simply write $S^-(\alpha)$ and $S^+(\alpha)$.
\end{notation}

\begin{Lemma}\label{lem:density}
Let $k_1,k_2\colon \mathbb{N}\rightarrow \mathbb{N}$ be strictly monotone increasing functions such that $k_1(t)\leq k_2(t)$ for all $t\in \mathbb{N}$.
Fix $\alpha\in(0,1)$. 
Then 
\begin{align*}
    \lim_{t\rightarrow \infty}\frac{|S^-(\alpha)|}{|S|} = 0 \quad \text{and} \quad \lim_{t\rightarrow \infty}\frac{|S^+(\alpha)|}{|S|} = 1.
\end{align*}
\end{Lemma}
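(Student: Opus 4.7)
The plan is to show that almost all matrices in $S$ have rank exceeding $\alpha k_1$, which gives the first limit immediately; the second limit will then follow from the partition $|S| = 1 + |S^-(\alpha)| + |S^+(\alpha)|$ (where the $1$ accounts for the zero matrix) together with $|S| \to \infty$ as $t \to \infty$.

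First, I would compute $|S| = q^{k_1(k_2-1)}$ directly, since the $k_1(k_2-1)$ off-diagonal entries are unconstrained. Next, I would use the standard upper bound on the number $N_r$ of rank-$r$ matrices in $\F_q^{k_1 \times k_2}$, namely $N_r \leq D(q)\, q^{r(k_1+k_2-r)}$ for some constant $D(q)$ depending only on $q$; this follows from the enumeration $N_r = \qbinom{k_2}{r}_q \prod_{i=0}^{r-1}(q^{k_1}-q^i)$ combined with the classical bound $\qbinom{n}{k}_q \leq D(q)\, q^{k(n-k)}$. Since $|S_r| \leq N_r$ and $r \mapsto r(k_1+k_2-r)$ is increasing for $r \leq (k_1+k_2)/2$, a range containing $\{1,\dots,\lfloor \alpha k_1 \rfloor\}$ because $\alpha k_1 \leq k_1 \leq (k_1+k_2)/2$, summing gives
\[
    |S^-(\alpha)| \leq \sum_{r=1}^{\lfloor \alpha k_1 \rfloor} N_r \leq D(q)\, \alpha k_1 \cdot q^{\alpha k_1(k_1+k_2-\alpha k_1)}.
\]

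Finally, I would form the ratio and analyze its exponent: $|S^-(\alpha)|/|S| \leq D(q)\, \alpha k_1 \cdot q^{E}$, where
\[
    E \coloneqq \alpha k_1(k_1+k_2-\alpha k_1) - k_1(k_2-1) = k_1\bigl[(1-\alpha)(\alpha k_1 - k_2) + 1\bigr].
\]
Using $k_2 \geq k_1$ in the form $\alpha k_1 - k_2 \leq -(1-\alpha)k_1$ yields $E \leq -(1-\alpha)^2 k_1^2 + k_1$, which tends to $-\infty$ since $k_1(t) \to \infty$, so the upper bound tends to $0$. The main obstacle is precisely this last estimate: one has to use $k_2 \geq k_1$ so that $E$ grows like $-k_1^2$ rather than merely $-k_1$, because the quadratic rate is what absorbs the polynomial prefactor $\alpha k_1$ and the $q$-dependent constant $D(q)$ in the limit. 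Note that no upper bound on the growth of $k_2$ relative to $k_1$ is actually required, since a larger $k_2$ only enlarges $|S|$ relative to $|S^-(\alpha)|$.
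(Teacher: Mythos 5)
Your proposal is correct and follows essentially the same route as the paper: both bound $|S^-(\alpha)|$ by the count of all rank-$r$ matrices in $\mathbb{F}_q^{k_1\times k_2}$, apply a $q$-binomial estimate of the form $\qbinom{k_2}{r}_q \lesssim q^{r(k_2-r)}$, observe that $r\mapsto r(k_1+k_2-r)$ is increasing on the relevant range so the top term $r=\lfloor\alpha k_1\rfloor$ dominates, and then extract the decisive $-(1-\alpha)^2 k_1^2$ in the exponent by invoking $k_2\ge k_1$. The second limit follows in both cases from $|S|=1+|S^-(\alpha)|+|S^+(\alpha)|$ and $|S|\to\infty$.
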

\begin{proof}
We use
$$\qbinom{k_2(t)}{r}_q \le 
q^{r(k_2(t)-r+1)}$$ and bound $|S_r|$ by the number of $k_1(t)\times k_2(t)$ matrices of rank $r$ over $\mathbb{F}_q$, i.e. $|S_r| \leq q^{r(k_2(t)+k_1(t)) - r^2+r-k_1(t)}.$
Thus,
\begin{align*}
    |S^{-}(\alpha)| = \sum^{\lfloor \alpha k_1(t) \rfloor}_{r=1} |S_r|
    &\leq \sum^{\lfloor \alpha k_1(t) \rfloor}_{r=1} q^{r(k_2(t)+k_1(t))-r^2+r-k_1(t)}\\ 
    &\leq \lfloor \alpha k_1(t) \rfloor  q^{\lfloor \alpha k_1(t) \rfloor(k_2(t)+k_1(t))-(\lfloor \alpha k_1(t) \rfloor)^2+\lfloor \alpha k_1(t) \rfloor-k_1(t)}\\
    &\leq  \alpha k_1(t)  q^{\alpha k_1(t)(k_2(t)+k_1(t))-(\alpha k_1(t)-1)^2+\alpha k_1(t)-k_1(t)}\\
    &= q^{\alpha k_1(t) k_2(t) + (\alpha -\alpha^2)k_1(t)^2 + (3\alpha-1)k_1(t) + \log_q( \alpha k_1(t) ) }.
\end{align*}
Therefore,
\begin{align}
\frac{|S^-(\alpha)|}{|S|} &\leq \frac{q^{\alpha k_1(t) k_2(t) + (\alpha -\alpha^2)k_1(t)^2 + (3\alpha-1)k_1(t) + \log_q(\alpha k_1(t))}}{q^{k_1 k_2 -k_1}}  \nonumber \\
&= q^{(\alpha-1) k_1(t) k_2(t) + (\alpha -\alpha^2)k_1(t)^2 + 3\alpha k_1(t) + \log_q(\alpha k_1(t))} \rightarrow 0 \quad \text{as} \quad t\rightarrow \infty,\label{eq:Sminus2}
\end{align}
where \eqref{eq:Sminus2} follows as $\alpha<1$ and $k_1(t) \leq k_2(t)$ for all $t\in \mathbb{N}$. Since \smash{$\frac{|S^-(\alpha)|}{|S|} \geq 0$} as well, we get that \smash{$\lim_{t \rightarrow \infty} \frac{|S^-(\alpha)|}{|S|}=0$}. Finally, since  $|S^+(\alpha)| = |S|-1-|S^-(\alpha)|$, the latter limit of the lemma follows.
\end{proof}

\begin{Lemma}
Let $k_1,k_2\colon \mathbb{N}\rightarrow \mathbb{N}$ be strictly monotone increasing functions such that $k_1(t)\leq k_2(t)$ for all $t\in \mathbb{N}$.
Then for any $\alpha \in [0,1)$ we have
    $$\lim_{t \rightarrow \infty} \frac{\sum_{r=\lfloor\alpha k_1(t)\rfloor}^{k_1(t)} \sum_{\ell=0}^{k_2(t)-k_1(t)} \binom{k_2(t)-k_1(t)}{\ell}S_{r,\ell}q^{\ell}}{\sum_{r=\lfloor \alpha k_1(t) \rfloor}^{k_1(t)} \sum_{\ell=0}^{k_2(t)-k_1(t)} \binom{k_2(t)-k_1(t)}{\ell}S_{r,\ell}} = \lim_{t\rightarrow \infty} \exp \left( \frac{(q-1)k_2(t)}{q^{k_1(t)}+1}\right).
$$ \label{lem:weighteddensity}
\end{Lemma}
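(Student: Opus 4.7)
The plan is to (i) recast the $q^\ell$-weighted numerator sum via a Möbius-type identity into objects of the same shape as those appearing in the denominator, (ii) apply Lemma~\ref{lem:density} \emph{uniformly} in a shifted column parameter to replace each restricted rank sum by its unrestricted total, (iii) use the exact equality $|S^{k_1,K}|=q^{k_1(K-1)}$ to collapse the ratio into an explicit binomial sum, and (iv) compare the resulting expression to the target exponential via a Taylor expansion.

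For step (i), fix $r$ and set $z(A):=|\{i\in K:A^{(i)}=0\}|$ for $A\in S_r^{k_1,k_2}$. Expanding $q^{z(A)}=(1+(q-1))^{z(A)}=\sum_{J\subseteq\{i\in K:A^{(i)}=0\}}(q-1)^{|J|}$ and summing over $A\in S_r^{k_1,k_2}$, a standard double count (matrices of $S_r^{k_1,k_2}$ vanishing on a prescribed $J\subseteq K$ are in bijection with $S_r^{k_1,k_2-|J|}$) yields
\[\sum_{\ell=0}^{k_2-k_1}\binom{k_2-k_1}{\ell}S_{r,\ell}\,q^\ell=\sum_{m=0}^{k_2-k_1}\binom{k_2-k_1}{m}(q-1)^m|S_r^{k_1,k_2-m}|.\]
Inserting this into the lemma's numerator and swapping the order of summation rewrites the ratio of the statement as
\[\frac{\sum_{m=0}^{k_2-k_1}\binom{k_2-k_1}{m}(q-1)^m\sum_{r\ge\alpha k_1}|S_r^{k_1,k_2-m}|}{\sum_{r\ge\alpha k_1}|S_r^{k_1,k_2}|}.\]

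For steps (ii)--(iii), observe that the bound in \eqref{eq:Sminus2} is valid for the pair $(k_1,K)$ in place of $(k_1,k_2)$ for every $K\ge k_1$; since its exponent is then at most $-(1-\alpha)^2 k_1^2+O(k_1)$, the bound is super-exponentially small in $k_1$ and \emph{uniform} in $K$. Applied with $K=k_2-m$ this gives $\sum_{r\ge\alpha k_1}|S_r^{k_1,k_2-m}|=(1+o(1))\,|S^{k_1,k_2-m}|$ uniformly in $m\in\{0,\dots,k_2-k_1\}$, so using $|S^{k_1,K}|=q^{k_1(K-1)}$ together with the binomial theorem the ratio reduces to $(1+o(1))\bigl(1+(q-1)/q^{k_1}\bigr)^{k_2-k_1}$.

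Finally, to carry out step (iv), I take logarithms. Using $\log(1+x)=x+O(x^2)$ with $x=(q-1)/q^{k_1}$, the left-hand exponent equals $(k_2-k_1)(q-1)/q^{k_1}+O(k_2/q^{2k_1})$; using $1/(q^{k_1}+1)=1/q^{k_1}-1/(q^{k_1}(q^{k_1}+1))$, the right-hand exponent rearranges to $(k_2-k_1)(q-1)/q^{k_1}+(q-1)k_1/q^{k_1}-(q-1)k_2/(q^{2k_1}+q^{k_1})$. Their difference is $O(k_1/q^{k_1})+O(k_2/q^{2k_1})$, which tends to $0$ whenever $(q-1)k_2/q^{k_1}$ stays bounded (forcing $k_2/q^{2k_1}\to 0$), and is negligible compared to the common leading term $(q-1)k_2/q^{k_1}$ otherwise; in either case, both limits coincide.

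The main obstacle I expect is establishing the uniform version of Lemma~\ref{lem:density} needed in step (ii): the $m$-summation has length $k_2-k_1+1$, which grows with $t$, so a merely pointwise density bound is insufficient, and one must verify that \eqref{eq:Sminus2} remains super-exponentially small in $k_1$ at the worst-case shifted parameter $K=k_1$. Once this uniformity is secured, the remaining steps are elementary algebraic manipulations.
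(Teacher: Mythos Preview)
Your argument is correct and takes a genuinely different route from the paper. The paper first treats $\alpha=0$ by substituting a closed form for $\sum_r S_{r,\ell}$, arriving at $N/D=\bigl(1+(q-1)/(q^{k_1}+1)\bigr)^{k_2-k_1}$, and then reduces $\alpha\in(0,1)$ to $\alpha=0$ by showing separately that $\Delta N/N\to 0$ and $\Delta D/D\to 0$ via \eqref{eq:Sminus2}. You instead rewrite the numerator once and for all as a binomial-weighted sum of $|S_r^{k_1,k_2-m}|$, apply the density bound \eqref{eq:Sminus2} \emph{uniformly} in the shifted column parameter $K=k_2-m\ge k_1$ (the crucial point being that the exponent in \eqref{eq:Sminus2} is at most $-(1-\alpha)^2k_1^2+O(k_1)$ independently of $K$), and collapse directly to $(1+o(1))\bigl(1+(q-1)/q^{k_1}\bigr)^{k_2-k_1}$.

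Your route has two modest advantages. First, it handles all $\alpha\in[0,1)$ in a single pass. Second, it sidesteps a slip in the paper's $\alpha=0$ step: the identity ``$\sum_r S_{r,\ell}=q^{k_1k_2-k_1-k_1\ell}$'' ignores the constraint that the columns indexed by $K\setminus I$ be nonzero; the correct value is $q^{k_1(k_1-1)}(q^{k_1}-1)^{k_2-k_1-\ell}$, which produces your closed form $\bigl(1+(q-1)/q^{k_1}\bigr)^{k_2-k_1}$ rather than the paper's. The discrepancy is asymptotically invisible (as your step (iv) confirms), so the lemma survives either way. The one place to be explicit in your write-up is step~(ii): state that for $\alpha>0$ the bound \eqref{eq:Sminus2} with $k_2$ replaced by any $K\ge k_1$ is at most $q^{-(1-\alpha)^2k_1^2+3\alpha k_1+\log_q(\alpha k_1)}$, so the $o(1)$ correction is uniform over all $k_2-k_1+1$ summands; for $\alpha=0$ the correction is identically zero and no bound is needed.
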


\begin{proof}
We begin by proving the claim for $\alpha = 0$, and then show that this implies it for any $0 \le \alpha <1$. Note that $\sum_{r=0}^{k_1(t)} S_{r,\ell} = q^{k_1(t)k_2(t) - k_1(t) - k_1(t)\ell}$. This allows us to rewrite the LHS of the statement as
    \begin{align*}
       \frac{q^{k_1(t) k_2(t) - k_1(t)}\sum_{\ell = 0}^{k_2(t)-k_1(t)} \binom{k_2(t)-k_1(t)}{\ell}q^{-\ell(k_1(t)-1)}}{q^{k_1(t)k_2(t)-k_1(t)}\sum_{\ell=0}^{k_2(t)-k_1(t)}\binom{k_2(t)-k_1(t)}{\ell}q^{-\ell k_1(t)}} &= \frac{(q^{-(k_1(t)-1)} +1)^{k_2(t)-k_1(t)}}{(q^{-k_1(t)} + 1)^{k_2(t)-k_1(t)}}\\
                &= \left(\frac{q^{k_1(t)} +q}{q^{k_1(t)}+1}\right)^{k_2(t)-k_1(t)}\\
                &= \left(1 + \frac{q-1}{q^{k_1(t)}+1}\right)^{k_2(t)-k_1(t)}, 
    \end{align*}
where the first equality follows from the Binomial Theorem.

We distinguish two cases according to the asymptotic behavior of $k_2(t)-k_1(t)$, which will be exhaustive by the assumed properties of $k_1$ and $k_2$. For the first case suppose $\lim_{t \rightarrow \infty} k_2(t) - k_1(t) =\infty$. It is well known that for functions $h,p\colon \mathbb{R}\rightarrow \mathbb{R}$ such that $\lim_{t \rightarrow \infty} h(t) = 0$ and $\lim_{t \rightarrow \infty} p(t) = \infty$, we have $$\lim_{t \rightarrow \infty}(1+h(t))^{p(t)} = \lim_{t \rightarrow \infty} \exp({h(t)p(t)}).$$
Hence, letting $h(t) \coloneqq  \frac{q-1}{q^{k_1(t)}+1}$ and $p(t) \coloneqq k_2(t)-k_1(t)$, 
we get the desired identity for $\alpha = 0$ and $\lim_{t \rightarrow \infty} k_2(t) - k_1(t) =\infty$.

For the second case, suppose $\lim_{t \rightarrow \infty} k_2(t) - k_1(t) = C < \infty.$ Since $\lim_{t \rightarrow \infty} k_1(t) = \infty$ the result follows as
\begin{align*}
    1= \lim_{t \rightarrow \infty} \left(1 + \frac{q-1}{q^{k_1(t)}+1}\right)^{k_2(t)-k_1(t)} &=  \lim_{t \rightarrow \infty} \exp\left({\frac{q-1}{q^{k_1(t)}+1}(k_2(t)-k_1(t))}\right). \\
    &=  \lim_{t \rightarrow \infty} \exp\left({\frac{(q-1)k_2(t)}{q^{k_1(t)}+1}}\right).
\end{align*}

Suppose now $0<\alpha <1$ and let $$N = \sum_{r=0}^{k_1(t)} \sum_{\ell=0}^{k_2(t)-k_1(t)} \binom{k_2(t)-k_1(t)}{\ell}S_{r,\ell}q^{\ell},$$ and $$D =\sum_{r=0}^{k_1(t)} \sum_{\ell=0}^{k_2(t)-k_1(t)} \binom{k_2(t)-k_1(t)}{\ell}S_{r,\ell} = |S|.$$ Additionally, partition the sums of $D$ and $N$ as follows:
\begin{align*}
    N' &= \sum_{r=\lfloor\alpha k_1(t)\rfloor}^{k_1(t)} \sum_{\ell=0}^{k_2(t)-k_1(t)} \binom{k_2(t)-k_1(t)}{\ell}S_{r,\ell}q^{\ell},\\
    \Delta N &= \sum_{r=0}^{\lfloor \alpha k_1(t) \rfloor -1} \sum_{\ell=0}^{k_2(t)-k_1(t)} \binom{k_2(t)-k_1(t)}{\ell}S_{r,\ell}q^{\ell}, \\
    D' &= \sum_{r=\lfloor \alpha k_1(t) \rfloor}^{k_1(t)} \sum_{\ell=0}^{k_2(t)-k_1(t)} \binom{k_2(t)-k_1(t)}{\ell}S_{r,\ell},\\
    \Delta D &= \sum_{r=0}^{\lfloor \alpha k_1(t) \rfloor -1} \sum_{\ell=0}^{k_2(t)-k_1(t)} \binom{k_2(t)-k_1(t)}{\ell}S_{r,\ell}. \\
\end{align*}
Note that when $r=0$, we have $\sum_{\ell=0}^{k_2(t)-k_1(t)} \binom{k_2(t)-k_1(t)}{\ell}S_{0,\ell} =1$. Hence, when considering  $\frac{\Delta D -1}{D}$, we can apply Lemma \ref{lem:density} and its proof to obtain  $\lim_{t \rightarrow \infty} \frac{\Delta D}{D} = \lim_{t \rightarrow \infty} \frac{\Delta D - 1}{D} =  0$ as 
\[
    \frac{\Delta D -1}{D} \leq q^{(\alpha-1) k_1(t) k_2(t) + (\alpha-\alpha^2) k_1(t)^2 + \alpha k_1(t) + \log_q(\alpha k_1(t))}.
\]
Furthermore, as $1\leq q^\ell \leq q^{k_2(t)-k_1(t)}$ for all $\ell \leq k_2(t)-k_1(t)$, we have
\begin{align*}
    \frac{\Delta N - 1}{N} &\leq q^{k_2(t)-k_1(t)} \frac{\Delta D -1}{D} \\
    &\leq q^{(\alpha-1) k_1(t) k_2(t) + (\alpha-\alpha^2) k_1(t)^2 + k_2(t) + (\alpha -1)k_1(t) + \log_q(\alpha k_1(t))}.
\end{align*}
This allows us to conclude that $\lim_{t\rightarrow \infty} \frac{\Delta N}{N} = \lim_{t\rightarrow \infty} \frac{\Delta N -1 }{N} = 0$. The result then follows as
\begin{align*}
   \lim_{t\rightarrow \infty} \frac{N'}{D'} &= \lim_{t\rightarrow \infty}\frac{N-\Delta N}{D- \Delta D}\\
   &= \lim_{t\rightarrow \infty} \frac{N}{D} \left(\frac{1- \frac{\Delta N}{N}}{1- \frac{\Delta D}{D}} \right)  \\
   &= \lim_{t\rightarrow \infty} \frac{N}{D}\\ 
     &= \lim_{t\rightarrow \infty} \exp \left( \frac{(q-1)k_2(t)}{q^{k_1(t)}+1}\right)
  \end{align*}
where the last equality follows from the case $\alpha = 0$. 
\end{proof}

We will now consider the asymptotics with respect to the dimension of the codes $\C_1$ and $\C_2$. Specifically, if $\dim \C_1 = k_1$ and $\dim \C_2 = k_2$ then we will sample the codes from the ambient space $\F_q^{k_1 k_2}$, as we then expect asymptotically almost all such pairs of codes to have a star product equal to the entire ambient space. This is motivated by the fact that when taking the star product of two random codes then a generating set of $\C_1 \star \C_2$ will have size
\[ \dim\C_1\dim \C_2 - \frac{\dim(\C_1\cap \C_2) (\dim(\C_1\cap \C_2)-1)}{2},\]
where the expected intersection will tend to 0, as proven in the following result.

\begin{Lemma} \label{lemma:expectedintersection}
Fix $n,k_1,k_2\in \mathbb{N}$ with $k_1\leq k_2\leq n$. Let $\C_1,\C_2\leq \F_q^n$ be uniformly random codes of dimension $k_1$ and $k_2$, respectively. Then
\begin{align*}
\mathbb{E}[\dim(\C_1 \cap \C_2)] = \left( {{\qbinom{n}{k_1}_q\qbinom{n}{k_2}_q}}\right)^{-1}{{\sum\limits^{k_1}_{i = 1} i\qbinom{n}{i}_q \qbinom{n-i}{k_1-i}_q q^{(k_1-i)(k_2-i)} \qbinom{n-k_1}{k_2-i}_q}}. 
\end{align*}
\end{Lemma}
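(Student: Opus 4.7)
The plan is to compute $\mathbb{E}[\dim(\C_1 \cap \C_2)]$ directly by partitioning the set of ordered pairs $(\C_1,\C_2)$ (with $\dim \C_1=k_1$, $\dim \C_2=k_2$) according to $i\coloneqq\dim(\C_1\cap\C_2)$. If $N_i$ denotes the number of pairs whose intersection has dimension exactly $i$, then
\[
\mathbb{E}[\dim(\C_1\cap \C_2)] \;=\; \frac{1}{\qbinom{n}{k_1}_q\qbinom{n}{k_2}_q}\sum_{i=0}^{k_1} i\,N_i,
\]
and the $i=0$ term vanishes, matching the lower summation bound in the statement. So everything reduces to producing a clean expression for $N_i$.

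To count $N_i$, I would first choose the intersection subspace $U\leq \F_q^n$ of dimension $i$, which can be done in $\qbinom{n}{i}_q$ ways, and then, having fixed $U$, count pairs $(\C_1,\C_2)$ with $\C_1\cap \C_2 = U$. Extending $U$ to a $k_1$-dimensional $\C_1$ is equivalent, after passing to the quotient $\F_q^n/U\cong \F_q^{n-i}$, to choosing a $(k_1-i)$-dimensional subspace of $\F_q^{n-i}$; this contributes a factor of $\qbinom{n-i}{k_1-i}_q$.

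With $\C_1$ fixed, the condition $\C_1\cap \C_2 = U$ becomes $\C_2\supseteq U$ together with $(\C_2/U)\cap (\C_1/U)=0$ inside $\F_q^n/U$. So I must count $(k_2-i)$-dimensional subspaces of $\F_q^{n-i}$ intersecting the fixed $(k_1-i)$-dimensional subspace $\C_1/U$ trivially. This is handled by the standard $q$-analog identity that the number of $m$-dimensional subspaces of $\F_q^N$ disjoint from a fixed $d$-dimensional subspace equals $q^{md}\qbinom{N-d}{m}_q$ (choose the image $\overline V$ in $\F_q^N/W$, then lift freely via $\mathrm{Hom}(\overline V,W)$). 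Applied with $N=n-i$, $d=k_1-i$, $m=k_2-i$ it yields $q^{(k_1-i)(k_2-i)}\qbinom{n-k_1}{k_2-i}_q$. Multiplying the three independent counts produces
\[
N_i \;=\; \qbinom{n}{i}_q\,\qbinom{n-i}{k_1-i}_q\,q^{(k_1-i)(k_2-i)}\,\qbinom{n-k_1}{k_2-i}_q,
\]
and substituting into the expectation formula gives precisely the claimed identity.

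There is no real obstacle: the entire argument is a bookkeeping exercise in $q$-binomial enumeration, with the single non-trivial ingredient being the disjoint-subspace count, which is a classical lemma and easy to justify via the quotient-and-lift argument. As a sanity check, one may verify the boundary cases $k_1=k_2=n$ (only $i=n$ survives, collapsing to $n$) and $k_1=k_2=1$ (reproducing the familiar $(q-1)/(q^n-1)$), both of which are immediate from the closed form.
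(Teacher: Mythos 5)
Your proof is correct and follows essentially the same route as the paper's: partition pairs $(\C_1,\C_2)$ by the intersection subspace, pass to the quotient, and invoke the standard count of subspaces disjoint from a fixed subspace (which the paper derives by counting ordered bases, while you cite the quotient-and-lift argument — these are the same lemma). The bookkeeping and the final closed form match the paper exactly.
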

\begin{proof}
Fix a subspace $W\leq \F_q^n$ and $U'\leq \F_q^n/W$ with $\dim (U') = k_1-i$. We then count the number of $V' \leq \F_q^n/W$ with $\dim(V') = k_2-i$ such that $U'\cap V' = \{0\}$. There are $\prod^{k_2-i-1}_{j=0}(q^{n-i}-q^{k_1-i+j})$ ordered bases spanning a subspace of dimension $k_2-i$ which intersects~$U'$ trivially, which we divide by the number of different bases for the same subspace:
\begin{align}
\frac{(q^{n-i}-q^{k_1-i})(q^{n-i}-q^{k_1-i+1})\ldots(q^{n-i}-q^{k_1-i+k_2-i-1})}{(q^{k_2-i}-1)(q^{k_2-i}-q)\ldots(q^{k_2-i}-q^{k_2-i-1})} = q^{(k_1-i)(k_2-i)}\qbinom{n-k_1}{k_2-i}_q. \label{eq:eDim}
\end{align}
Lastly, we multiply \eqref{eq:eDim} by the number of ways to choose $U'$ given a fixed $W$, and the number of ways to choose $W$. The result then follows as
\begin{equation*}
    |(U,V)\leq \F_q^n \times \F_q^n \colon U\cap V = W|= |(U',V')\leq (\F_q^n /W) \times (\F_q^n/ W) \colon U'\cap V' = \{0\}|.\qedhere
\end{equation*} 
\end{proof}

\begin{proposition}\label{prop:limitinter}
Let $k_1,k_2\colon \mathbb{N}\rightarrow \mathbb{N}$ be strictly monotone increasing functions such that $k_1(t)\leq k_2(t)$ for all $t\in \mathbb{N}$. Let $n(t)=k_1(t)k_2(t)$. Let $\C_1, \C_2 \leq \F_q^{n(t)}$ be uniformly random codes of dimensions $k_1(t)$ and $k_2(t)$, respectively. Then $$\lim_{t\rightarrow \infty} \E[\dim (\C_1\cap \C_2)] = 0.$$
\end{proposition}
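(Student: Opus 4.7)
The plan is to apply the exact formula from Lemma \ref{lemma:expectedintersection} with $n = n(t) = k_1(t)k_2(t)$ and bound every $q$-binomial by the standard estimate
\[
q^{k(m-k)} \;\leq\; \qbinom{m}{k}_q \;\leq\; B_q \cdot q^{k(m-k)}, \qquad B_q := \prod_{j\geq 1}(1-q^{-j})^{-1} < \infty,
\]
which holds for all $0 \leq k \leq m$. Using the upper bound for the three $q$-binomials in the numerator and the lower bound for the two $q$-binomials in the denominator, each summand is controlled by $i \cdot B_q^{3} \cdot q^{E_i}$, where $E_i$ collects the exponents from the five approximations together with the explicit factor $q^{(k_1-i)(k_2-i)}$. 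Expanding,
\[
E_i = i(n-i) + (k_1-i)(n-k_1) + (k_1-i)(k_2-i) + (k_2-i)(n-k_1-k_2+i) - k_1(n-k_1) - k_2(n-k_2),
\]
and a routine (but somewhat tedious) expansion shows that almost everything cancels, leaving the surprisingly clean identity $E_i = i(k_1 + k_2 - i - n)$.

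Specializing to $n = k_1 k_2$ and rewriting $k_1 + k_2 - i - k_1 k_2 = -(k_1-1)(k_2-1) - (i-1)$, we obtain $E_i \leq -i(k_1(t)-1)(k_2(t)-1)$ for every $1 \leq i \leq k_1(t)$ as soon as $k_1(t), k_2(t) \geq 2$, which holds for all $t$ sufficiently large by the strict monotonicity of $k_1$ and $k_2$. Writing $x_t := q^{-(k_1(t)-1)(k_2(t)-1)}$, the formula of Lemma \ref{lemma:expectedintersection} then yields
\[
\mathbb{E}[\dim(\C_1 \cap \C_2)] \;\leq\; B_q^{3} \sum_{i=1}^{k_1(t)} i\, x_t^{\,i} \;\leq\; B_q^{3} \cdot \frac{x_t}{(1-x_t)^2}.
\]
Since $k_1(t), k_2(t) \to \infty$ forces $x_t \to 0$, the right-hand side vanishes in the limit and the claim follows.

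The only nontrivial obstacle is the algebraic simplification that produces the compact expression $E_i = i(k_1 + k_2 - i - n)$; once this cancellation is carried out, everything else amounts to applying the standard $q$-binomial bounds and summing a convergent geometric-type series.
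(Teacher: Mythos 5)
Your proof is correct and follows the same overall strategy as the paper's: instantiate Lemma~\ref{lemma:expectedintersection} with $n(t)=k_1(t)k_2(t)$, control every $q$-binomial by a two-sided estimate of the form $q^{k(m-k)}\leq\qbinom{m}{k}_q\leq C\,q^{k(m-k)}$, and reduce the question to the sign and magnitude of a single exponent. Where you improve on the paper is the algebra. The cancellation to $E_i = i(k_1+k_2-i-n)$ is exact (I verified the expansion), and the rewriting $E_i = -i\big[(k_1-1)(k_2-1)+(i-1)\big]$ exposes a geometric decay in $i$ that you then sum in closed form, yielding the quantitative bound $\E[\dim(\C_1\cap\C_2)]=O\big(q^{-(k_1(t)-1)(k_2(t)-1)}\big)$. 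The paper instead uses the bound $\qbinom{m}{k}_q\le\tfrac{7}{2}q^{k(m-k)}$ from \cite{ihringer2015}, replaces the weight $i$ by $k_1(t)$, and absorbs the remaining sum into a crude factor $k_1(t)(k_1(t)+1)\big(\tfrac{7}{2}\big)^3$; this is looser but sufficient, since the dominant exponent still tends to $-\infty$. (As an aside, the exponent displayed in the paper's inequality~\eqref{eq:N} appears to contain a typo, reading $k_1+k_2-2n(t)$ where the expansion you carried out shows that $k_1+k_2-n(t)$ is what actually follows from $E_1$; the paper's final stated bound is consistent with the corrected exponent, so the typo is cosmetic.) Both proofs are valid and reach the same conclusion; yours is a little sharper and arguably more transparent, since the identity $E_i=i(k_1+k_2-i-n)$ makes the reason for the decay visible at a glance.
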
\begin{proof}
From \cite{ihringer2015} we have the bounds
\[
    q^{k(n-k)}\leq \qbinom{n}{k}_q \leq \frac{7}{2}q^{k(n-k)},
\]
from which we obtain
\begin{align}
q^{k_1(t)(n(t)-k_1(t))+k_2(t)(n(t)-k_2(t))} \leq \qbinom{n(t)}{k_1(t)}_q \qbinom{n(t)}{k_2(t)}_q. \label{eq:D}
\end{align}
For $1\leq i \leq k_1(t)$ we have
\begin{align*}
    &\qbinom{n(t)}{i}_q \qbinom{n(t)-i}{k_1(t)-i}_q q^{(k_1(t)-i)(k_2(t)-i)}\qbinom{n(t)-k_1(t)}{k_2(t)-i}_q \\
    \leq&\; \left( \frac{7}{2}\right)^3q^{i(k_1(t)+k_2(t)-2k_1(t)k_2(t)) + k_1(t)(n(t)-k_1(t)) + k_2(t)(n(t)-k_2(t))},
\end{align*}
so we obtain
\begin{align}
    &\sum^{k_1(t)}_{i=1} i \qbinom{n(t)}{i}_q \qbinom{n(t)-i}{k_1(t)-i}_q q^{(k_1(t)-i)(k_2(t)-i)}\qbinom{n(t)-k_1(t)}{k_2(t)-i}_q \notag \\
    \leq&\; \sum^{k_1(t)}_{i=1} k_1(t) \qbinom{n(t)}{i}_q \qbinom{n(t)-i}{k_1(t)-i}_q q^{(k_1(t)-i)(k_2(t)-i)}\qbinom{n(t)-k_1(t)}{k_2(t)-i}_q \notag\\
    \leq&\; k_1(t)(k_1(t)+1)\left(\frac{7}{2}\right)^3q^{k_1(t)+k_2(t)-2n(t) + k_1(t)(n(t)-k_1(t)) + k_2(t)(n(t)-k_2(t))}. \label{eq:N}
\end{align}
Combining \eqref{eq:D} with \eqref{eq:N} and Lemma \ref{lemma:expectedintersection}, we have
\begin{align*}
    \E[\dim (\C_1 \cap \C_2)] \leq q^{-2k_1(t)k_2(t) + k_1(t) + k_2(t) + \log_q\left( k_1(t)(k_1(t)+1)\frac{343}{8}\right)},
\end{align*}
and the result follows as $k_1(t) \leq k_2(t)$ for all $t$.
\end{proof}

We now prove that the kernel of $\psi_{\C_1,\C_2}$ is particularly well-behaved when the dimension of $\C_2$ does not grow too quickly compared to that of $\C_1$. We specify this in the following definition.

\begin{definition}
    We say that functions $k_1,k_2\colon \mathbb{N}\rightarrow \mathbb{N}$ are \emph{admissible} if they are strictly monotone increasing such that $k_1(t)\leq k_2(t)$ for all $t\in \mathbb{N}$, and for all $\alpha \in (0,2-\log_q(2q-1))$, then
    $$\lim_{t\rightarrow \infty} \frac{k_1(t)k_2(t)}{q^{\alpha k_1(t)}} = 0.$$
\end{definition}

This assumption reflects a technical limitation of our approach rather than an intrinsic necessity, and we do not claim it to be optimal. The proof of Theorem~\ref{thm:asympkerneldimension2} relies on it as the number of zeros of a non-zero bilinear form in $\mathcal{B}$ is at most $(2q-1)^{k_2-k_1 - 2}$. Note that for $q=2$ we have $\alpha < 0.415$, and that $\lim_{q\rightarrow \infty} 2-\log_q(2q-1) = 1.$ Note moreover that 
strictly monotone and increasing functions $k_1,k_2\colon \mathbb{N}\rightarrow \mathbb{N}$ such that $k_1(t)\le k_2(t)$ and that both grow polynomially in $t$ are admissible for all $q$.

\begin{theorem}\label{thm:asympkerneldimension2}
Let $k_1,k_2\colon \mathbb{N}\rightarrow \mathbb{N}$ be admissible. Let $n(t)=k_1(t)k_2(t)$. Let $\C_1, \C_2 \leq \F_q^{n(t)}$ be uniformly random codes of dimensions $k_1(t)$ and $k_2(t)$, respectively, as in Definition~\ref{def:setuprand}. Then 
\smash{$\lim_{t\rightarrow \infty} \E [|\ker \psi_{\C_1,\C_2}|] = 2.$}
\end{theorem}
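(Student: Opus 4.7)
The plan is to adapt the strategy of the proof of Theorem~\ref{thm:qlimit}, letting the dimensions $k_1(t), k_2(t)$ grow rather than $q$. Starting from the intermediate expression derived in the proof of Theorem~\ref{theorem:expectedkernel}, one writes
\[
\E[|\ker \psi_{\C_1, \C_2}|] \;=\; q^{k_1 - n} \sum_{r=0}^{k_1} \gamma(r)^{n-k_2}\, T_r, \qquad T_r := \sum_{\ell=0}^{k_2-k_1} \binom{k_2-k_1}{\ell}\, q^\ell\, S_{r, \ell},
\]
where the argument $t$ is suppressed for readability. The $r = 0$ term is easy to compute: the only rank-zero matrix is the zero matrix, so $S_{0, \ell} = 0$ unless $\ell = k_2-k_1$ and $S_{0,k_2-k_1} = 1$; combined with $\gamma(0)^{n-k_2} = q^{n-k_2}$ and the prefactor $q^{k_1-n}$, this term contributes exactly~$1$. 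It thus suffices to prove $q^{k_1-n}\sum_{r=1}^{k_1}\gamma(r)^{n-k_2}T_r \to 1$.

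Fix a threshold $\alpha \in (0,\, 2-\log_q(2q-1))$ to be specified later (sufficiently small), and split the remaining sum at $r = \lfloor \alpha k_1 \rfloor$. For the \emph{high-rank} part $r \ge \lfloor \alpha k_1 \rfloor$, monotonicity of $\gamma$ yields
\[
1 \;\le\; \gamma(r)^{n-k_2} \;\le\; \left(1 + \frac{q-1}{q^{\alpha k_1}}\right)^{k_1 k_2} \;\le\; \exp\!\left(\frac{(q-1)\,k_1 k_2}{q^{\alpha k_1}}\right) \;\to\; 1
\]
uniformly in $r$, by admissibility. Hence the high-rank contribution is asymptotic to $q^{k_1-n}\sum_{r \ge \alpha k_1}T_r$. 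Lemma~\ref{lem:weighteddensity} (whose limiting exponential also equals~$1$ under admissibility) reduces this to $q^{k_1-n}\sum_{r \ge \alpha k_1}|S_r^{k_1,k_2}|$, and Lemma~\ref{lem:density} identifies the latter asymptotically with $|S^{k_1,k_2}| = q^{k_1 k_2 - k_1}$. Since $n = k_1 k_2$, the high-rank part therefore tends to $q^{k_1 - n}\cdot q^{n - k_1} = 1$.

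The main obstacle is the \emph{low-rank} part $q^{k_1 - n}\sum_{r=1}^{\lfloor \alpha k_1\rfloor - 1}\gamma(r)^{n-k_2}T_r$, because $\gamma(r)^{n-k_2}$ can be exponentially large for small $r$: already $\gamma(1)^{n-k_2} = ((2q-1)/q)^{k_2(k_1-1)}$ grows like $q^{\beta k_1 k_2}$ with $\beta := \log_q((2q-1)/q) > 0$. This must be offset by the combinatorial smallness of $S_r^{k_1,k_2}$ when $r$ is small. The idea is to combine the monotonicity $\gamma(r) \le \gamma(1)$ for $r \ge 1$, the trivial bound $T_r \le q^{k_2 - k_1}|S_r^{k_1,k_2}|$, and the pointwise estimate $|S_r^{k_1,k_2}| \le q^{r(k_1 + k_2 - r) + r - k_1}$ from the proof of Lemma~\ref{lem:density}. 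A direct calculation, using $k_1 \le k_2$ to absorb the quadratic-in-$k_1$ terms into the $k_1 k_2$ coefficient, shows that the low-rank contribution is bounded by a quantity of the form $q^{-((1-\alpha)^2 - \beta)\,k_1 k_2\, +\, O(k_2)}$, which vanishes as long as $\alpha < 1 - \sqrt{\beta}$. Since $1 - \sqrt{\beta} < 1 - \beta = 2 - \log_q(2q-1)$, the admissibility range contains such an $\alpha$, and we fix one at the outset so that both the high-rank limit and the low-rank bound apply. Combining the three contributions gives $\lim_{t \to \infty} \E[|\ker \psi_{\C_1,\C_2}|] = 1 + 0 + 1 = 2$, as claimed.
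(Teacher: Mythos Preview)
Your proof is correct and follows the same overall strategy as the paper: separate the $r=0$ term (contributing $1$), split the remaining sum at $r\approx\alpha k_1$, show the high-rank part tends to $1$ via the uniform bound on $\gamma(r)^{n-k_2}$ together with Lemmas~\ref{lem:density} and~\ref{lem:weighteddensity}, and show the low-rank part vanishes using $\gamma(r)\le\gamma(1)$ and the pointwise estimate on $|S_r|$. Your low-rank analysis is in fact slightly more careful than the paper's: by absorbing the positive $(\alpha-\alpha^2)k_1^2$ term into the $k_1 k_2$ coefficient via $k_1\le k_2$, you arrive at the sufficient condition $\alpha<1-\sqrt{\beta}$, whereas the paper's closing line asserts vanishing for the full range $\alpha<1-\beta=2-\log_q(2q-1)$, which is not justified when $k_1\asymp k_2$ (the theorem is unaffected, since one can always choose a smaller $\alpha$).
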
\begin{proof}
The expected kernel size can also be written as
\begin{align*}
    &\E[ |\ker {\psi_{\C_1,\C_2}}|] =\\
    &\:\sum_{r=0}^{k_1(t)}\sum_{\ell = 0}^{k_2(t)-k_1(t)} \binom{k_2(t)-k_1(t)}{\ell} S_{r,\ell} \left(  \frac{Z(B)}{q^{k_1(t)+k_2(t)}}\right)^{n(t)-k_2(t)} q^{\ell - (k_2(t)-k_1(t))},
\end{align*}
where $Z(B)$ is the number of zeroes of a bilinear form in $S_{r,I}$, where $I\subseteq [k_2]\setminus [k_1]$ is any subset of size $|I|=\ell,$ and $S_{r,\ell} = |S_{r,V}|$.

Let $\alpha\in(0,2-\log_q(2q-1))$. By Lemma \ref{lem:density} and its proof we have that, for $t\rightarrow \infty$,
\begin{align}
\frac{|S^-(\alpha)|}{|S|} &\leq \frac{q^{\alpha k_1(t) k_2(t) + (\alpha -\alpha^2)k_1(t)^2 + (\alpha-1)k_1(t) + \log_q(\alpha k_1(t))}}{q^{k_1 k_2 -k_1}}  \nonumber \\
&= q^{(\alpha-1) k_1(t) k_2(t) + (\alpha -\alpha^2)k_1(t)^2 + \alpha k_1(t) + \log_q(\alpha k_1(t))} \rightarrow 0, \label{eq:Sminus3}
\end{align}
and $\lim_{t \rightarrow \infty} \frac{|S^+(\alpha)|}{|S|} = 1$. Let $B\in S^+(\alpha)$, so $r\geq \alpha k_1(t)$. Then
$$Z(B) \geq (q^{k_1(t)}+q-1)q^{k_1(t)+k_2(t)-k_1(t)-1} \geq  q^{k_1(t)+k_2(t)-1}\left(1 + \frac{q-1}{q^{k_1(t)}}\right),$$
which yields
\begin{align*}
    \left( \frac{Z(B)}{q^{k_1(t)+k_2(t)}} \right)^{n(t)-k_2(t)} q^{\ell -(k_2(t)-k_1(t))} &\geq \frac{q^{-(k_2(t)-k_1(t))}}{q^{n(t)-k_2(t)}}\left(1 + \frac{q-1}{q^{k_1(t)}}\right)^{n(t)-k_2(t)} \\ 
    &= \frac{1}{q^{n(t)-k_1(t)}}\left(1 + \frac{q-1}{q^{k_1(t)}}\right)^{n(t)-k_2(t)}.
\end{align*}
Since $n(t)=k_1(t)k_2(t)$, we obtain
\[
\sum_{B\in S^+(\alpha)} \left( \frac{Z(B)}{q^{k_1(t)+k_2(t)}} \right)^{k_2(t)(k_1(t)-1)} q^{\ell -(k_2(t)-k_1(t))} \geq \frac{|S^+(\alpha)|}{|S|}\left(1 + \frac{q-1}{q^{k_1(t)}}\right)^{k_2(t)(k_1(t)-1)}.
\]
Then
\begin{align*}
    \lim_{t\rightarrow \infty}  \left(1 + \frac{q-1}{q^{k_1(t)}}\right)^{k_2(t)(k_1(t)-1)} 
    =\; \lim_{t\rightarrow \infty} \exp \left( \frac{q-1}{q^{k_1(t)}} (k_2(t)(k_1(t)-1)) \right)
    = 1,
\end{align*}
so
\begin{align} \label{eq:upperboundthis}
 \lim_{t\rightarrow \infty} \sum_{B\in S^+(\alpha)} \left( \frac{Z(B)}{q^{k_1(t)+k_2(t)}} \right)^{k_1(t)k_2(t)-k_2(t)} q^{\ell -(k_2(t)-k_1(t))} \geq 1.
\end{align}

Thus, we need to show that \eqref{eq:upperboundthis} is also bounded from above by 1. Note that
\[
    \frac{Z(B)}{q^{k_1(t)+k_2(t)}} = \frac{(q^{r}+q-1)q^{k_1(t)+k_2(t)-r-1}}{q^{k_1(t)+k_2(t)}} = q^{-1}\left(1+\frac{q-1}{q^r}\right)\leq q^{-1}\left(1+\frac{q-1}{q^{\alpha k_1(t)}} \right),
\]
which yields
\[
    \left(\frac{Z(B)}{q^{k_1(t)+k_2(t)}}\right)^{n(t)-k_2(t)} = q^{-(n(t)-k_2(t))}\left(1+\frac{q-1}{q^r}\right)^{n(t)-k_2(t)}
\]
and
\begin{align*}
    \lim_{t\rightarrow \infty}  \left(1+\frac{q-1}{q^r}\right)^{n(t)-k_2(t)} &= \lim_{t\rightarrow \infty} \exp\left((n(t)-k_2(t))\ln \left(1+\frac{q-1}{q^r}\right) \right) \\
    &\leq \lim_{t\rightarrow \infty} \exp\left((n(t)-k_2(t))\frac{q-1}{q^r}\right) \\
    &\leq  \lim_{t\rightarrow \infty}\exp\left(\frac{q-1}{q^{\alpha k_1(t)}}k_2(t)(k_1(t)-1) \right) \\
    &=  \lim_{t\rightarrow \infty}\exp\left((q-1)\frac{k_1(t)k_2(t)}{q^{\alpha k_1(t)}} \right)\\
    &= 1.
\end{align*}
Thus, for sufficiently large $t$, we have 
\begin{align*}
    \left(\frac{Z(B)}{q^{k_1(t)+k_2(t)}}\right)^{n(t)-k_2(t)} \leq q^{-(n(t)-k_2(t))}\exp\left((q-1)\frac{k_1(t)k_2(t)}{q^{\alpha k_1(t)}} \right),
\end{align*}
which in turn yields
\begin{align}
    &\sum_{r=\alpha k_1(t)}^{k_1(t)}\sum_{\ell = 0}^{k_2(t)-k_1(t)} \binom{k_2(t)-k_1(t)}{\ell} S_{r,\ell} \left(  \frac{Z(B)}{q^{k_1(t)+k_2(t)}}\right)^{n(t)-k_2(t)} q^{\ell - (k_2(t)-k_1(t))} \notag\\
    \leq& \: q^{-(n(t)-k_1(t))}\exp\left((q-1)\frac{k_1(t)k_2(t)}{q^{\alpha k_1(t)}} \right)\sum_{r=\alpha k_1(t)}^{k_1(t)}\sum_{\ell = 0}^{k_2(t)-k_1(t)} \binom{k_2(t)-k_1(t)}{\ell} S_{r,\ell}q^{\ell}. \label{eq:upperbound}
\end{align}
Taking the limit of \eqref{eq:upperbound} it then follows by Lemma \ref{lem:weighteddensity} that
\begin{align*}
&\lim_{t\rightarrow \infty } q^{-(n(t)-k_1(t))}\exp\left((q-1)\frac{k_1(t)k_2(t)}{q^{\alpha k_1(t)}} \right)\sum_{r=\alpha k_1(t)}^{k_1(t)}\sum_{\ell = 0}^{k_2(t)-k_1(t)} \binom{k_2(t)-k_1(t)}{\ell} S_{r,\ell}q^{\ell} \\ 
=& \lim_{t\rightarrow \infty } \left(q^{-(n(t)-k_1(t))}\right) \lim_{t\rightarrow \infty} \left(\sum_{r=\alpha k_1(t)}^{k_1(t)}\sum_{\ell = 0}^{k_2(t)-k_1(t)} \binom{k_2(t)-k_1(t)}{\ell} S_{r,\ell}q^{\ell}\right) \\ 
=& \lim_{t\rightarrow \infty } \left(\frac{1}{|S|}\right) \lim_{t\rightarrow \infty}\left(\sum_{r=\alpha k_1(t)}^{k_1(t)}\sum_{\ell = 0}^{k_2(t)-k_1(t)} \binom{k_2(t)-k_1(t)}{\ell} S_{r,\ell}\right)\\
=&\lim_{t\rightarrow \infty} \frac{|S^+(\alpha)|}{|S|} =1.
\end{align*}
We conclude
\[\lim_{t\rightarrow \infty}\sum_{B\in S^+(\alpha)} \left( \frac{Z(B)}{q^{k_1(t)+k_2(t)}} \right)^{k_1(t)k_2(t)-k_2(t)}q^{\ell - (k_2(t)-k_1(t))}=1.
\]

Let $B\in S^-(\alpha)$. By Lemma \ref{Lemma:zerosbyrank}, $Z(B) \leq (2q-1)q^{k_1(t)+k_2(t)-2}$ and since we have $q^{\ell-(k_2(t)-k_1(t))}\leq 1$ for all $\ell \leq k_2(t)-k_1(t)$, we have
\begin{align*}
    \sum_{B\in S^-(\alpha)}\left(\frac{Z(B)}{q^{k_1(t)+k_2(t)}}\right)^{n(t)-k_2(t)}q^{\ell-(k_2(t)-k_1(t))} \leq \left(\frac{2q-1}{q^2}\right)^{n(t)-k_2(t)} |S^-(\alpha)|.
\end{align*}
Substituting $n(t)=k_1(t)k_2(t)$ it then follows by (\ref{eq:Sminus3}) that
\begin{align*}
    &\left(\frac{2q-1}{q^2}\right)^{k_2(t)(k_1(t)-1)} |S^-(\alpha)| \\
    \leq&\; q^{(\alpha - 2 + \log_q(2q-1))k_1(t)k_2(t) + (\alpha-\alpha^2)k_1(t)^2 + (2-\log_q(2q-1))k_2(t) + (\alpha-1)k_1(t) + \log_q(\alpha k_1(t))}.
\end{align*}
Since $\alpha<2-\log_q(2q-1)$ we obtain
\[
\lim_{t\rightarrow \infty}\sum_{B\in S^-(\alpha)}\left(\frac{Z(B)}{q^{k_1(t)+k_2(t)}}\right)^{k_1(t) k_2(t)-k_2(t)}q^{\ell-(k_2(t)-k_1(t))} =0,
\]
which concludes the proof.
\end{proof}

\begin{remark}
Note that by Theorem~\ref{thm:asympkerneldimension2} we have that $\lim_{t\rightarrow \infty} \mathbb{E}\big[|\ker \psi_{\C_1,\C_2}|\big]=\infty$ when \smash{$\lim_{t\rightarrow \infty} \frac{k_1(t)k_2(t)}{q^{k_1(t)}} = \infty$}. This follows directly from the proof when determining the lower bound. Furthermore, the proof and numerical experiments suggest the conjecture that $$\lim_{t\rightarrow \infty} \mathbb{E}\big[|\ker \psi_{\C_1,\C_2}|\big]= \exp\left(\frac{q-1}{q^{k_1(t)}}(k_2(t))(k_1(t)-1)\right)+1,$$ independently of the value of $\lim_{t\rightarrow \infty} \frac{k_1(t)k_2(t)}{q^{k_1(t)}}$.
\end{remark}

Having established the concentration properties of the kernel size and the growth behavior of the relevant combinatorial quantities, we are now ready to study the asymptotics of the expected star-product dimension as both code dimensions increase. The following results formalizes this by showing that, under the assumed admissible growth conditions, the expected kernel size stabilizes and the star product almost surely attains full dimension.
\\

\begin{theorem}\label{thm:expectedink}Let $k_1,k_2\colon \mathbb{N}\rightarrow \mathbb{N}$ be admissible. Let $n\colon \mathbb{N}\rightarrow \mathbb{N}$ such that $n(t) \geq k_1(t)k_2(t)$ for all $t\in \mathbb{N}$. Let $\C_1, \C_2 \leq \F_q^{n(t)}$ be uniformly random codes of dimensions $k_1(t)$ and $k_2(t)$, respectively, as in Definition~\ref{def:setuprand}. Then, for sufficiently large $t$, 
\[
\mathbb{P}\left( \dim (\C_1\star\C_2) = k_1(t) k_2(t) \right) \geq 1 - \left(\frac{2q-1}{q^2}\right)^{n(t)-k_1(t)k_2(t)}.
\]
\end{theorem}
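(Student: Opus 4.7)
The plan is to combine a sharpened Markov inequality on $|\ker \psi_{\C_1, \C_2}|$ with the independence of the columns of the systematic generator matrices beyond index $m := k_1(t)k_2(t)$, and then invoke Theorem~\ref{thm:asympkerneldimension2} to close the estimate.

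By Theorem~\ref{thm:starbil}, the event $\dim(\C_1\star\C_2) = k_1(t)k_2(t)$ is equivalent to $|\ker \psi_{\C_1,\C_2}| = 1$. Since $|\ker \psi_{\C_1,\C_2}|$ is always a power of $q$, its complement coincides with $|\ker \psi_{\C_1,\C_2}| - 1 \geq q - 1$, so Markov's inequality on the nonnegative integer-valued variable $|\ker \psi_{\C_1,\C_2}| - 1$ gives
\[
\Pr\bigl(\dim(\C_1\star\C_2) < k_1(t)k_2(t)\bigr) \leq \frac{\E[|\ker \psi_{\C_1,\C_2}| - 1]}{q - 1}.
\]

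To bound the numerator, I would expand by linearity as $\E[|\ker \psi_{\C_1,\C_2}| - 1] = \sum_{\phi \in \mcB \setminus \{0\}} \Pr(\phi \in \ker \psi_{\C_1,\C_2})$, and then exploit the independence built into Definition~\ref{def:setuprand}: the last $n(t) - m$ columns of $G_1$ and $G_2$ are independent of the first $m$ columns and of each other. For each nonzero $\phi$ of rank $r$ this factorises as
\[
\Pr(\phi \in \ker \psi_{\C_1,\C_2}) = \Pr\bigl(\phi(G_1^{(i)}, G_2^{(i)}) = 0 \text{ for all } i \in [m]\bigr) \cdot \left(\frac{Z(\phi)}{q^{k_1+k_2}}\right)^{n(t)-m},
\]
and by the monotonicity of $Z$ from Lemma~\ref{Lemma:zerosbyrank}, the rightmost factor is at most $\bigl((2q-1)/q^2\bigr)^{n(t)-m}$ for every $r \geq 1$. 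Summing over $\phi \neq 0$ and recognising the other probability as the $\phi$-contribution to the analogous expected kernel size for the first-$m$-columns model yields
\[
\E[|\ker \psi_{\C_1,\C_2}| - 1] \leq \left(\frac{2q-1}{q^2}\right)^{n(t)-m} \bigl(\E[|\ker \psi_{\C_1,\C_2}^{(m)}|] - 1\bigr),
\]
where $\psi_{\C_1,\C_2}^{(m)}$ is obtained by restricting to the first $m$ columns, i.e., corresponds to random codes in $\F_q^m$ of dimensions $k_1, k_2$ as in Definition~\ref{def:setuprand}.

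Since admissibility is a hypothesis on $k_1, k_2$ alone, Theorem~\ref{thm:asympkerneldimension2} applies to this length-$m$ setting and gives $\lim_{t\to\infty} \E[|\ker \psi_{\C_1,\C_2}^{(m)}|] = 2$. Therefore, for sufficiently large $t$ the prefactor $(\E[|\ker \psi_{\C_1,\C_2}^{(m)}|]-1)/(q-1)$ is at most $1$, and chaining the three displays above produces the claimed bound. The main obstacle is precisely this last step: for $q \geq 3$ the limiting quotient $1/(q-1)$ is strictly less than $1$ and the passage from a limit to an inequality is automatic, whereas for $q = 2$ the quotient saturates at exactly $1$, so the hypothesis ``sufficiently large $t$'' must absorb a vanishing discrepancy rather than a strict margin.
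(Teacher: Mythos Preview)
Your proposal is correct and follows essentially the same route as the paper: restrict to the first $m = k_1 k_2$ coordinates, use independence of the remaining columns together with Lemma~\ref{Lemma:zerosbyrank} to extract the factor $\bigl((2q-1)/q^2\bigr)^{n-m}$, and invoke Theorem~\ref{thm:asympkerneldimension2} on the length-$m$ model. The paper organizes this via conditioning on $|\ker \psi_{\C_1',\C_2'}|=j$ and a union bound over the $j-1$ nonzero kernel elements rather than your Markov-plus-linearity expansion, so it arrives at $\Pr(\overline{\mathcal{N}}) \le \bigl((2q-1)/q^2\bigr)^{n-m}\bigl(\E[|\ker\psi^{(m)}|]-1\bigr)$ without your extra factor $1/(q-1)$; in particular, the $q=2$ concern you flag (that the prefactor tends to exactly $1$ rather than something strictly smaller) is present in the paper's proof as well and is absorbed into ``sufficiently large $t$'' in the same way.
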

\begin{proof}
For any codes $\C_1, \C_2 \leq \F_q^{n(t)}$ as in the statement, let $\C_1', \C_2' \leq \F_q^{k_1(t)k_2(t)}$ be the corresponding codes obtained by puncturing $\C_1$ and $\C_2$ in the last $n(t) - k_1(t)k_2(t)$ coordinates. 

Let $\mathcal{N}$ denote the event that $\dim(\C_1 \star \C_2) = k_1(t)k_2(t)$, and for any $j \in \mathbb{N}$ let $\mathcal{E}_j$ be the event that $|\ker\psi_{\C_1', \C_2'}| = j$. 
Note that $\dim(\C_1 \star \C_2) = k_1(t)k_2(t)$ if and only if $|\ker\psi_{\C_1, \C_2}| = 1$, which occurs if and only if, for all non-zero $\varphi \in \ker\psi_{\C_1', \C_2'}$, there exists $i \in \{k_1(t)k_2(t) + 1, \ldots, n(t)\}$ such that $\varphi(G_1^{(i)}, G_2^{(i)}) \neq 0$.

To study the event $\mathcal{E}_j$, let 
$\ker\psi_{\C_1', \C_2'} \setminus \{0\} = \{\varphi_1, \ldots, \varphi_{j-1}\}$. 
Then
\begin{align}
    \mathbb{P} \left(\overline{\mathcal{N}} \mid \mathcal{E}_j\right)
    &= \mathbb{P} \left(
        \bigcup_{i=1}^{j-1}
        \left\{
            \varphi_i(G_1^{(k_1(t)k_2(t)+1)}, G_2^{(k_1(t)k_2(t)+1)}) 
            = \cdots = 
            \varphi_i(G_1^{(n(t))}, G_2^{(n(t))}) = 0
        \right\}
    \right)\notag \\
    &\leq \sum_{i=1}^{j-1} 
        \mathbb{P} \left(\varphi_i(x, y) = 0\right)^{n(t) - k_1(t)k_2(t)} \notag \\
    &\leq \sum_{i=1}^{j-1} 
        \left(\frac{2q - 1}{q^2}\right)^{n(t) - k_1(t)k_2(t)} \label{eq:notevent} \\
    &= (j - 1)
        \left(\frac{2q - 1}{q^2}\right)^{n(t) - k_1(t)k_2(t)}, \notag
\end{align}
where $x \in \F_q^{k_1(t)}$ and $y \in \F_q^{k_2(t)}$ are drawn independently and uniformly at random. 
Note that~\eqref{eq:notevent} follows from Lemma~\ref{Lemma:zerosbyrank}, since each $\varphi_i$ is nonzero. By the law of total probability, 
\begin{align*}
    \mathbb{P}(\overline{\mathcal{N}}) 
    &= \sum_{j \in \mathbb{N}} 
        \mathbb{P}(\mathcal{E}_j)\,
        \mathbb{P}(\overline{\mathcal{N}} \mid \mathcal{E}_j) \\
    &\leq 
        \left(\frac{2q - 1}{q^2}\right)^{n(t) - k_1(t)k_2(t)} 
        \left( \sum_{j \in \mathbb{N}} 
            \mathbb{P}(\mathcal{E}_j)(j - 1)
        \right) \\ 
    &= 
        \left(\frac{2q - 1}{q^2}\right)^{n(t) - k_1(t)k_2(t)} 
        \left( \mathbb{E} \big[|\ker\psi_{\C_1', \C_2'}|\big] - 1 \right).
\end{align*}
The result then follows from Theorem~\ref{thm:asympkerneldimension2}.
\end{proof}

\begin{corollary}
Let $k_1,k_2\colon \mathbb{N}\rightarrow \mathbb{N}$ be admissible. Let $\C_1, \C_2 \leq \F_q^{k_1(t)k_2(t)}$ be uniformly random codes of dimensions $k_1(t)$ and $k_2(t)$, respectively, as in Definition~\ref{def:setuprand}. For any $\varepsilon>0$ there exists $t_{\varepsilon}\in \mathbb{N}$ such that, for all $t\geq t_{\varepsilon}$, for every non-negative integer $\ell$ we have
\[
    \mathbb{P}\left(\dim (\C_1\star\C_2) \geq k_1(t)k_2(t) -\ell \right) \geq 1 - \frac{2+\varepsilon}{q^{\ell+1}}.
\]
\end{corollary}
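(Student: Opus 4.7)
The plan is to reduce the statement about $\dim(\C_1\star\C_2)$ to a tail bound on $|\ker\psi_{\C_1,\C_2}|$, apply Markov's inequality, and then use Theorem~\ref{thm:asympkerneldimension2} to control the expectation.

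First, I would use Theorem~\ref{thm:starbil}, which gives $\dim(\C_1\star\C_2)=k_1(t)k_2(t)-\log_q|\ker\psi_{\C_1,\C_2}|$. Rearranging, the event $\{\dim(\C_1\star\C_2)\geq k_1(t)k_2(t)-\ell\}$ coincides with $\{|\ker\psi_{\C_1,\C_2}|\leq q^\ell\}$. The key observation is that $\ker\psi_{\C_1,\C_2}$ is an $\F_q$-subspace of $\mathcal B$, so its cardinality is always a power of $q$; in particular, the event $\{|\ker\psi_{\C_1,\C_2}|>q^\ell\}$ equals $\{|\ker\psi_{\C_1,\C_2}|\geq q^{\ell+1}\}$. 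This discretization is what will let me gain the extra factor of $q$ in the denominator.

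Next, I would apply Markov's inequality to the non-negative integer-valued random variable $|\ker\psi_{\C_1,\C_2}|$, obtaining
\[
\mathbb{P}\bigl(|\ker\psi_{\C_1,\C_2}|\geq q^{\ell+1}\bigr)\leq \frac{\mathbb{E}\bigl[|\ker\psi_{\C_1,\C_2}|\bigr]}{q^{\ell+1}}.
\]
By Theorem~\ref{thm:asympkerneldimension2}, under the admissibility hypothesis we have $\lim_{t\to\infty}\mathbb{E}\bigl[|\ker\psi_{\C_1,\C_2}|\bigr]=2$. Hence, given $\varepsilon>0$, there exists $t_\varepsilon\in\mathbb{N}$ such that $\mathbb{E}\bigl[|\ker\psi_{\C_1,\C_2}|\bigr]\leq 2+\varepsilon$ for all $t\geq t_\varepsilon$.

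Combining these steps yields, for $t\geq t_\varepsilon$,
\[
\mathbb{P}\bigl(\dim(\C_1\star\C_2)\geq k_1(t)k_2(t)-\ell\bigr)=1-\mathbb{P}\bigl(|\ker\psi_{\C_1,\C_2}|\geq q^{\ell+1}\bigr)\geq 1-\frac{2+\varepsilon}{q^{\ell+1}},
\]
which is the desired bound. There is no real obstacle here: the quantization of $|\ker\psi_{\C_1,\C_2}|$ on powers of $q$ is the only subtlety worth flagging explicitly, and the rest is a direct combination of Theorem~\ref{thm:starbil}, Markov's inequality, and Theorem~\ref{thm:asympkerneldimension2}.
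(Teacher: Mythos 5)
Your proof is correct and follows essentially the same route as the paper: apply Markov's inequality to $|\ker\psi_{\C_1,\C_2}|$ with threshold $q^{\ell+1}$, then control the expectation via Theorem~\ref{thm:asympkerneldimension2}. The only cosmetic difference is that you explicitly flag the quantization of $|\ker\psi_{\C_1,\C_2}|$ on powers of $q$ (so that $|\ker\psi_{\C_1,\C_2}|\leq q^\ell$ is equivalent to $|\ker\psi_{\C_1,\C_2}|<q^{\ell+1}$), a step the paper leaves implicit in its final equality.
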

\begin{proof}
By Theorem~\ref{thm:asympkerneldimension2}, there exists $t_{\varepsilon} \in \mathbb{N}$ such that, for all $t \geq t_{\varepsilon}$, $
    \mathbb{E}\big[|\ker\psi_{\C_1, \C_2}|\big] \leq 2 + \varepsilon.$
By Markov's inequality, for $t \geq t_{\varepsilon}$ we have
\[
    \mathbb{P}\left(|\ker\psi_{\C_1, \C_2}| < \delta \right) 
    \geq 1 - \frac{\mathbb{E}\big[|\ker\psi_{\C_1, \C_2}|\big]}{\delta}
    \geq 1 - \frac{2 + \varepsilon}{\delta}.
\]
The result follows by taking $\delta = q^{\ell + 1}$, since
\[
    \mathbb{P}\left(|\ker\psi_{\C_1, \C_2}| < q^{\ell + 1}\right)
    = \mathbb{P} \left(\dim(\C_1 \star \C_2) \geq k_1(t)k_2(t) - \ell\right). 
\]
\end{proof}

Theorem~\ref{thm:expectedink} addresses the regime in which the ambient length is no smaller than the product of the code dimensions. By applying puncturing arguments, one obtains an analogous result when the ambient space is smaller.

\begin{theorem} \label{thm:expectedink2}
Let $k_1,k_2\colon \mathbb{N}\rightarrow \mathbb{N}$ be admissible. Let $n\colon \mathbb{N}\rightarrow \mathbb{N}$ such that $n(t) < k_1(t)k_2(t)$ for all $t\in \mathbb{N}$. Let $\C_1, \C_2 \leq \F_q^{n(t)}$ be uniformly random codes of dimensions $k_1(t)$ and $k_2(t)$, respectively, as in Definition~\ref{def:setuprand}. Then, for sufficiently large $t$, 
\[
\mathbb{P}\left( \dim (\C_1\star\C_2) = n(t) \right) \geq 1 - \left(\frac{2q-1}{q^2}\right)^{t}.
\]
\end{theorem}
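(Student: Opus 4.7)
The plan is to reduce to Theorem~\ref{thm:expectedink} by a puncturing argument. Set $N(t) = k_1(t)k_2(t) + t$; since $n(t) < k_1(t)k_2(t)$, we have $N(t) > n(t)$, and $N(t) \ge k_1(t)k_2(t)$ by construction. I couple $\C_1, \C_2$ to uniformly random codes $\tilde{\C}_1, \tilde{\C}_2 \le \F_q^{N(t)}$ of the same respective dimensions by appending $N(t) - n(t)$ independent uniformly random columns to the systematic generator matrices of $\C_1, \C_2$. Because puncturing a systematic generator matrix with a uniformly random tail to a shorter length is again a systematic generator matrix with a (shorter) uniformly random tail, the codes $\C_1, \C_2$ coincide in distribution with the puncturings of $\tilde{\C}_1, \tilde{\C}_2$ to the first $n(t)$ coordinates (Definition~\ref{def:setuprand} is preserved under such puncturing).

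Applying Theorem~\ref{thm:expectedink} to $\tilde{\C}_1, \tilde{\C}_2$, whose ambient length satisfies $N(t) \ge k_1(t)k_2(t)$, we obtain for sufficiently large $t$
\[
\mathbb{P}\bigl(\dim(\tilde{\C}_1 \star \tilde{\C}_2) = k_1(t)k_2(t)\bigr) \;\ge\; 1 - \left(\tfrac{2q-1}{q^2}\right)^{N(t)-k_1(t)k_2(t)} \;=\; 1 - \left(\tfrac{2q-1}{q^2}\right)^{t}.
\]
Call this event $\tilde{\mathcal{N}}$. The key observation is that $\C_1 \star \C_2$ is exactly the coordinate projection $\pi_{[n(t)]}(\tilde{\C}_1 \star \tilde{\C}_2)$, so on $\tilde{\mathcal{N}}$ we are projecting a subspace of maximal dimension $k_1 k_2 > n(t)$ onto a space of dimension $n(t)$.

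To conclude, I argue that on $\tilde{\mathcal{N}}$, this projection has the maximal possible image dimension $n(t)$. Passing to the dual map $\psi_{\tilde{\C}_1,\tilde{\C}_2}^*\colon\F_q^{N(t)} \to \mathcal{B}^*$, $w \mapsto \sum_i w_i (G_1^{(i)} \otimes G_2^{(i)})$, the event $\tilde{\mathcal{N}}$ gives $\dim\ker\psi_{\tilde{\C}_1,\tilde{\C}_2}^* = N(t) - k_1 k_2 = t$, and $\dim(\C_1 \star \C_2) = n(t)$ is equivalent to the triviality of $\ker\psi_{\tilde{\C}_1,\tilde{\C}_2}^* \cap U$ where $U = \F_q^{n(t)} \times \{0\}^t \le \F_q^{N(t)}$. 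Since the expected intersection dimension of a $t$-dimensional and an $n(t)$-dimensional subspace in $\F_q^{N(t)}$ is $\max(0, n(t) + t - N(t)) = \max(0, n(t) - k_1 k_2) = 0$, triviality is the generic outcome.

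The main obstacle is to control the exceptional probability that $\ker\psi^* \cap U \neq \{0\}$ conditional on $\tilde{\mathcal{N}}$. A counting/Markov argument on the expected size of this intersection bounds the exceptional probability by a term of order $q^{n(t) - k_1(t)k_2(t)}$, which, in the admissible regime where $k_1(t)k_2(t) - n(t)$ is large compared to $t$, is absorbed into the bound $((2q-1)/q^2)^t$ already used for $\overline{\tilde{\mathcal{N}}}$. Combining via $\mathbb{P}(\overline{\mathcal{N}}) \le \mathbb{P}(\overline{\tilde{\mathcal{N}}}) + \mathbb{P}(\overline{\mathcal{N}}\cap\tilde{\mathcal{N}})$ and the meaning of ``sufficiently large $t$'' yields the claimed bound $1 - ((2q-1)/q^2)^t$.
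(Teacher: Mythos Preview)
Your reduction — extending the systematic generators by independent uniform columns to length $N(t)=k_1(t)k_2(t)+t$ and invoking Theorem~\ref{thm:expectedink} for $\tilde\C_1,\tilde\C_2$ — is exactly the paper's. You then correctly observe that $\dim(\C_1\star\C_2)=n(t)$ is equivalent to $(\tilde\C_1\star\tilde\C_2)^\perp\cap U=\{0\}$, where $U=\F_q^{n(t)}\times\{0\}^{N(t)-n(t)}$ (your exponent $t$ on the zero block is a slip). The paper at this point simply asserts the event containment $\overline{\mathcal M}\subseteq\overline{\mathcal N}$, exhibiting $\tilde v=(v,0)\in(\tilde\C_1\star\tilde\C_2)^\perp$ from a nonzero $v\in(\C_1\star\C_2)^\perp$ and concluding $\Pr(\mathcal M)\ge\Pr(\mathcal N)$ directly; you instead (rightly) note that on $\tilde{\mathcal N}$ the space $(\tilde\C_1\star\tilde\C_2)^\perp$ still has dimension $t\ge1$ and could meet $U$, so an extra argument is needed.

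However, your handling of that extra argument does not go through. First, the ``counting/Markov'' bound $\Pr\bigl(\ker\psi^*\cap U\neq\{0\}\mid\tilde{\mathcal N}\bigr)\lesssim q^{\,n(t)-k_1k_2}$ implicitly treats $\ker\psi^*=(\tilde\C_1\star\tilde\C_2)^\perp$ as a uniformly random $t$-plane independent of $U$; but $\ker\psi^*$ is a function of \emph{all} the columns of $\tilde G_1,\tilde G_2$, including the first $n(t)$ that determine $\C_1,\C_2$, so no such independence is available and you give no substitute. Second, even granting that bound, the theorem assumes only $n(t)<k_1(t)k_2(t)$: the gap $k_1k_2-n$ may equal $1$ for every $t$, in which case $q^{\,n-k_1k_2}=q^{-1}$ is certainly not dominated by $((2q-1)/q^2)^t$, and the ``absorbed into'' step fails without an additional growth hypothesis that is not part of the statement. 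As written, the proposal does not close the gap between $\Pr(\tilde{\mathcal N})$ and $\Pr(\mathcal M)$.
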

\begin{proof}
For $i=1,2$, let $G_i$ be the systematic generator matrix of $\C_i\leq \F_q^{n(t)}$. Set $m(t)=k_1(t)k_2(t)$ and extend $G_i$ by concatenating $m(t)-n(t)+t$ additional columns sampled uniformly and independently from $\F_q^{k_i(t)}$. Let \smash{$\tilde{G}_i\in \F_q^{k_i(t)\times (m(t)+t)}$} be the resulting matrix and \smash{$\tilde{\C}_i= \rowsp (\tilde{G}_i)$}.

If $\pi \colon \F_q^{m(t)+t}\rightarrow \F_q^{n(t)}$ denotes the projection onto the first $n(t)$ coordinates, then $\pi( \tilde{\C}_i) = \C_i$. Let $\mathcal{N}$ be the event that $\dim (\tilde{\C}_1 \star \tilde{\C}_2)=k_1(t)k_2(t)$. By Theorem~\ref{thm:expectedink} for sufficiently large $t$ we have
\[
    \mathbb{P}(\mathcal{N}) \geq 1 - \left( \frac{2q-1}{q^2}\right)^{t}.
\]
Let $\mathcal{M}$ be the event $\dim (\C_1\star\C_2) = n(t)$. We then show that $\overline{\mathcal{M}}$ implies $\overline{\mathcal{N}}$. Suppose that $\dim (\C_1\star\C_2) < n(t)$, so there exists a non-zero $v\in (\C_1\star \C_2)^\perp$. Consider $\tilde{v} =(v,0,\ldots,0)\in \F_q^{m(t)+t}$. Since $\tilde{\C}_1\star \tilde{\C}_2$ is spanned by $\tilde{c}_1\star \tilde{c}_2$, where $\tilde{c}_1$ and $\tilde{c}_2$ are basis elements of $\tilde{\C}_1$ and $\tilde{\C}_2$, respectively, we have
$
    \langle \tilde{v},\tilde{c}_1\star \tilde{c}_2\rangle = \langle v,\pi(\tilde{c}_1)\star \pi(\tilde{c}_2)\rangle =0.
$
Thus, $\tilde{v}\in (\tilde{\C}_1\star \tilde{\C}_2)^\perp$ and $\mathbb{P}(\mathcal{M}) \geq \mathbb{P}(\mathcal{N})$.
\end{proof}

We then conveniently summarize the conclusions of Theorem~\ref{thm:expectedink} and Theorem~\ref{thm:expectedink2} in a single result.
\begin{corollary} \label{cor:mainresult}Let $k_1,k_2\colon \mathbb{N}\rightarrow \mathbb{N}$ be admissible.  Let $\C_1, \C_2 \leq \F_q^{n(t)}$ be uniformly random codes of dimensions $k_1(t)$ and $k_2(t)$, respectively, as in Definition~\ref{def:setuprand}. Then, for sufficiently large $t$, 
\[
\mathbb{E}[\dim (\C_1\star\C_2)] = \min \{k_1(t)k_2(t),n(t)\} + o_t(1).
\]
\end{corollary}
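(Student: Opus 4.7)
The plan is to combine Theorem~\ref{thm:expectedink} and Theorem~\ref{thm:expectedink2} with the trivial deterministic bound $\dim(\C_1 \star \C_2) \leq \min\{k_1(t)k_2(t), n(t)\}$, converting the high-probability statements into a statement about the expectation.

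Set $M(t) = \min\{k_1(t)k_2(t), n(t)\}$. The random variable $X(t) = M(t) - \dim(\C_1\star\C_2)$ is non-negative and bounded above by $M(t)$, so
\[
    0 \leq M(t) - \mathbb{E}[\dim(\C_1 \star \C_2)] = \mathbb{E}[X(t)] \leq M(t)\,\mathbb{P}(\dim(\C_1\star\C_2) \ne M(t)).
\]
If $n(t) \geq k_1(t)k_2(t)$, Theorem~\ref{thm:expectedink} bounds the failure probability by $\left(\tfrac{2q-1}{q^2}\right)^{n(t)-k_1(t)k_2(t)}$ for large $t$; if $n(t) < k_1(t)k_2(t)$, Theorem~\ref{thm:expectedink2} bounds it by $\left(\tfrac{2q-1}{q^2}\right)^{t}$ for large $t$. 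In either regime, the bound on $\mathbb{E}[X(t)]$ has the form $M(t)\cdot\left(\tfrac{2q-1}{q^2}\right)^{e(t)}$ for an exponent $e(t)\to\infty$.

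The final step is to show this product tends to zero. The admissibility of $k_1,k_2$, combined with $k_1(t) \geq t$ (since $k_1$ is strictly increasing on $\mathbb{N}$), gives $M(t) \leq k_1(t)k_2(t) = o(q^{\alpha k_1(t)})$ for every $\alpha<2-\log_q(2q-1)$. Choosing $\alpha$ close enough to $2-\log_q(2q-1)$, the exponentially decaying factor $q^{-(2-\log_q(2q-1))e(t)}$ dominates the subexponential growth of $M(t)$, yielding $M(t)\cdot\left(\tfrac{2q-1}{q^2}\right)^{e(t)} \to 0$.

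The main obstacle is the delicate intermediate regime where $n(t)$ is close to, but still above, $k_1(t)k_2(t)$, so that $e(t) = n(t)-k_1(t)k_2(t)$ may grow too slowly for the product to vanish. To absorb this corner one either invokes the sharper Corollary following Theorem~\ref{thm:expectedink}, which provides a geometric tail on $\mathbb{P}(\dim \geq k_1k_2-\ell)$ for every $\ell$, or uses the puncturing/extension device from the proof of Theorem~\ref{thm:expectedink2} to replace the potentially small exponent $n-k_1k_2$ by an exponent growing with $t$ at the cost of a further puncturing step. Either route delivers $\mathbb{E}[X(t)] = o(1)$, which is exactly the claimed identity.
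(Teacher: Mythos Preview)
Your approach is exactly the one the paper intends: the Corollary is stated without proof, as a ``summary'' of Theorems~\ref{thm:expectedink} and~\ref{thm:expectedink2}, and converting those high-probability statements into an expectation estimate via $\mathbb{E}[X(t)]\le M(t)\cdot\mathbb{P}(X(t)>0)$ is the natural step. In the regime $n(t)<k_1(t)k_2(t)$ your argument is clean and complete: there $e(t)=t$, and admissibility together with $k_1(t)\ge t$ indeed gives $M(t)\cdot\big(\tfrac{2q-1}{q^2}\big)^{t}\to 0$.

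The gap is in the regime $n(t)\ge k_1(t)k_2(t)$ with $e(t)=n(t)-k_1(t)k_2(t)$ bounded (or slowly growing), and neither of your proposed fixes closes it. For the first, summing the geometric tail from the Corollary after Theorem~\ref{thm:asympkerneldimension2} yields only
\[
\mathbb{E}[X(t)]=\sum_{\ell\ge 0}\mathbb{P}\big(X(t)>\ell\big)\;\le\;\sum_{\ell\ge 0}\frac{2+\varepsilon}{q^{\ell+1}}=\frac{2+\varepsilon}{q-1},
\]
which is $O(1)$, not $o(1)$. For the second, the puncturing/extension device goes in the wrong direction: extending $\C_1,\C_2$ by random columns can only increase $\dim(\C_1\star\C_2)$, so full dimension of the extension says nothing about the original; and puncturing down to $\F_q^{k_1k_2}$ just lands you back in the case $e(t)=0$ where Theorem~\ref{thm:expectedink} is vacuous. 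The obstruction is already visible at $n(t)=k_1(t)k_2(t)$: Theorem~\ref{thm:asympkerneldimension2} gives $\mathbb{E}[|\ker\psi|]\to 2$, and since $|\ker\psi|=q^{X(t)}$ with $q^{x}-1\ge(q-1)x$, this only yields $\limsup_t\mathbb{E}[X(t)]\le\tfrac{1}{q-1}$, not $0$. The paper glosses over this boundary case as well, so the subtlety you flagged is real; closing it would seem to require either a second-moment estimate on $|\ker\psi|$ or an implicit extra hypothesis such as $n(t)-k_1(t)k_2(t)\to\infty$.
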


{ 
\begin{remark}\label{remark:otherpaper}
    Although our kernel estimation approach was developed independently of the rank-1 matrix analysis in \cite{7282444}, there is a partial overlap in the conclusions regarding the expected dimension of star products under specific parameter configurations. We briefly compare our asymptotic bounds on code dimension growth with those derived in \cite{7282444}.

    The rank-1 approach establishes that the expected dimension of the star product of two uniformly random linear codes $\C_1,\C_2 \leq \F_q^n$ attains $\min\{\dim\C_1 \dim \C_2,n\}$ for a restricted parameter regime. Specifically, let $\varepsilon,\kappa\in(0,1)$ such that $q^{(1-\kappa)^2}\geq\frac{2q-1}{q}$, and define the parameter space
    \begin{equation}
        \mathcal{P}(\varepsilon,\kappa) = \left\{ (k_1,k_2) \in \mathbb{N}\times \mathbb{N} \;\middle|\; 2\leq k_1 \leq k_2 \leq \frac{\varepsilon q^{\kappa k_1}}{(q-1)k_1} \right\}. \label{eq:parameterspace}
    \end{equation}
    Under these conditions, \cite[Theorem 16]{7282444} yields the same conclusion as our Theorem~\ref{thm:expectedink} when $(\dim \C_1, \dim \C_2)\in \mathcal{P}(\varepsilon,\kappa)$. However, our result is stronger because any sequence of code dimensions in $\mathcal{P}(\varepsilon,\kappa)$ satisfies the conditions of Theorem~\ref{thm:expectedink}. Similarly, \cite[Theorem 17]{7282444} reaches the same conclusion as our Theorem~\ref{thm:expectedink2} when $(\dim \C_1, \dim \C_2)\in \mathcal{P}(\varepsilon,\frac{1}{2})$. In this specific case, the rank-1 approach of \cite{7282444} is stronger only for $q=2$, since $\alpha < 2 - \log_q(2q-1) < 0.415$. For $q\geq 3$, we have $2-\log_q(2q-1) \geq 0.535$, making our bound strictly stronger. Furthermore, our Corollary~\ref{cor:randomcodesinq}, which dictates the asymptotic behavior in the field size, remains entirely independent of the relationship between $\dim \C_1$ and $\dim \C_2$.
\end{remark}
}

{ 
We also consider the variance of the star product dimension to characterize its concentration behavior. In particular, Proposition~\ref{prop:variance_t} shows that if $n(t)<k_1(t)k_2(t)$ for all $t\in \NN$ and $n(t)$ grows polynomially, then the variance vanishes at least exponentially. Similarly, if $n(t)$ grows polynomially and $n(t)> k_1(t)k_2(t)$ for all $t\in \NN$, then the variance vanishes at least exponentially to $0$ provided that the dimension gap $n(t)-k_1(t)k_2(t)$ grows at least linearly in $t$.

\begin{proposition}\label{prop:variance_t}
    Let $k_1,k_2\colon \NN \rightarrow \NN$ be admissible and let $n\colon \NN\rightarrow \NN$ be a strictly monotone increasing function. Let $\C_1,\C_2\leq \F_q^{n(t)}$ be uniformly random codes of dimensions $k_1(t)$ and $k_2(t)$, respectively, as in Definition~\ref{def:setuprand}. Let $\gamma=\frac{2q-1}{q^2}$. Then
    $$
        \mathrm{Var}(\dim (\C_1\star \C_2)) = \begin{cases}
            \mathcal{O}_t\left(n(t)^2\gamma^t\right) & \text{if } n(t)<k_1(t)k_2(t), \\
            \mathcal{O}_t(1) & \text{if } n(t)=k_1(t)k_2(t), \\
            \mathcal{O}_t\left(n(t)^2\gamma^{n(t)-k_1(t)k_2(t)}\right) & \text{if } n(t)>k_1(t)k_2(t).
        \end{cases}
    $$
\end{proposition}
\begin{proof}
Let $M(t)=\min\{k_1(t)k_2(t),n(t)\}$ and consider the non-negative integer-valued random variable $X_t=M(t)-\dim (\C_1\star \C_2)$. Since $M(t)$ is deterministic for a fixed $t$, we have $\mathrm{Var}(\dim (\C_1\star \C_2)) = \mathrm{Var}(X_t)$. For the first and third cases, we bound the variance using $\mathrm{Var}(X_t) \leq \mathbb{E}[X_t^2] \leq n(t)^2\mathbb{P}(X_t\geq 1)$.

    Suppose $n(t)< k_1(t)k_2(t)$ for all $t\in \NN$. The event $X_t \geq 1$ implies $\dim(\C_1\star \C_2) <n(t)$. By Theorem~\ref{thm:expectedink2}, for sufficiently large $t$, we have $\mathbb{P}(X_t\geq 1)\leq \gamma^t$, which yields $\mathrm{Var}(\dim(\C_1\star \C_2)) = \mathcal{O}_t(n(t)^2 \gamma^t)$.

    Suppose $n(t)> k_1(t)k_2(t)$ for all $t\in \NN$. The event $X_t\geq 1$ is equivalent to $|\ker \psi_{\C_1,\C_2}|>1$. Applying Markov's inequality to the kernel size yields $\mathbb{P}(X_t\geq 1) \leq \mathbb{E}[|\ker \psi_{\C_1,\C_2}| - 1]$. By Theorem~\ref{thm:expectedink}, for sufficiently large $t$, this probability is bounded by $\gamma^{n(t)-k_1(t)k_2(t)}$, giving $\mathrm{Var}(\dim(\C_1\star \C_2)) = \mathcal{O}_t\left(n(t)^2 \gamma^{n(t)-k_1(t)k_2(t)}\right)$.

    Suppose $n(t) = k_1(t)k_2(t)$ for all $t\in \NN$. Let $K_t = |\ker \psi_{\C_1, \C_2}|$, so $K_t = q^{X_t}$. By Theorem~\ref{thm:asympkerneldimension2}, $\lim_{t \to \infty} \mathbb{E}[K_t] = 2$. Consequently, for sufficiently large $t$, the expectation is bounded by a non-zero constant $C$. By Markov's inequality, for any integer $j \geq 1$, the tail probability of the dimension defect for sufficiently large $t$ is bounded by 
    $$\mathbb{P}(X_t \geq j) = \mathbb{P}(K_t \geq q^j) \leq \frac{\mathbb{E}[K_t]}{q^j} \leq Cq^{-j}.$$
        Using this uniform exponential tail bound, the second moment satisfies 
        $$\mathbb{E}[X_t^2] = \sum_{j=1}^{\infty} (2j-1)\mathbb{P}(X_t \geq j) \leq \sum_{j=1}^{\infty} (2j-1) Cq^{-j}.$$
        Because $q \geq 2$, this series converges absolutely to a finite constant independent of $t$, so $\mathbb{E}[X_t^2] = \mathcal{O}_t(1)$. Since $\mathrm{Var}(X_t) \leq \mathbb{E}[X_t^2]$, it follows that $\mathrm{Var}(\dim(\C_1 \star \C_2)) = \mathcal{O}_t(1)$.
\end{proof}
\begin{remark}
Establishing tight lower bounds to completely characterize the variance's $\Omega$-behavior remains an open problem. While Proposition~\ref{prop:variance_q} and Proposition~\ref{prop:variance_t} provide explicit upper bounds on the variance, which vanish asymptotically only when the code length and the dimensions satisfy appropriate relative growth conditions, determining the exact decay rates from below requires a granular second-moment analysis of the kernel size. Unlike our first-moment results, which leverage the linearity of expectation over individual bilinear forms, computing the second moment necessitates finding the joint probability that pairs of forms vanish simultaneously. This introduces complex geometric dependencies that break linearity and prevent a straightforward extension of our current machinery.
\end{remark}
}

\section{{Implications to secure communications and quantum error correction}}
\label{sec:app}

As mentioned in the introduction, the star product appears in a variety of applications, including but not limited to cryptanalysis, linear exact repair schemes, quantum error correction, private information retrieval, and secure distributed matrix multiplication.  In this section, we briefly discuss some of these applications and the connection to our results.

\paragraph{Private information retrieval (PIR).}
Private information retrieval considers the problem of downloading a data item or file from a (public) database without disclosing the identity of the retrieved item to the database owner \cite{chor1995private}. Recently, PIR from coded distributed storage systems has gained a lot of attention, see  \cite{Sun2016,Banawan2018,PIR2017,skoglund2019pir} among many others. 

Star products are natural objects to study in linear  PIR and its quantum extensions due to the response structure, which is essentially just a vector of inner products between the query vectors and the stored data vectors. The utility of star products in PIR is showcased by several extensions of the basic setting:  the first star product PIR scheme \cite{PIR2017} has been extended to cover, e.g., stragglers and errors/adversarial nodes \cite{Tajeddine2018ByzStar}, non-MDS codes \cite{tPIR}, secret sharing framework for secure PIR with algebraic geometry codes \cite{makkonen2024secretsharingsecureprivate}, and quantum PIR \cite{allaix2020quantum}. It is also the key ingredient in proving and achieving the capacity of PIR from coded storage with colluding servers \cite{Holzbaur2019starcapa,holzbaur2022tit}. 

A figure of merit for any PIR scheme is its \emph{retrieval rate}, defined as 
$$R_{\mathrm{PIR}}=\frac{\mathrm{size\ of\ retrieved\ file}}{\mathrm{size\ of\ total\ download}}\leq 1,$$ which should be maximized. The star product retrieval scheme \cite{PIR2017} satisfies
\begin{align*}
   \frac{d(\C\star \D)-1}{n} \leq R_{\mathrm{PIR}} \leq \frac{ \dim ((\C\star \D)^\perp)}{n} = 1- \frac{\dim (\C \star \D)}{n}. \label{eq:pirrate}
\end{align*}

Suppose now that both $\C$ and $\D$ are chosen randomly (according to Definition~\ref{def:setuprand}) such that $\dim \C = k_1$ and $\dim \D = k_2$ . Then by applying Corollary \ref{cor:randomcodesinq} we see that
\[
    \lim_{q\rightarrow \infty} \mathbb{E}[R_{\mathrm{PIR}}] \leq 1 - \lim_{q\rightarrow \infty}\frac{ \mathbb{E}[\dim( \C \star \D)]}{n} = 1- \frac{\min\{k_1 k_2,n\}}{n}.
\]

Thus, when instantiating such a PIR scheme with random codes, a non-zero upper bound on the expected rate is obtained when $\dim \C \dim \D < n$. 
Now suppose $\dim \C_1 = k_1(t)$ and $\dim \C_2 = k_2(t)$ according to Corollary \ref{cor:mainresult}. Then we see
\[
    \mathbb{E}[R_{\mathrm{PIR}}] \leq 1 - \frac{\min\{k_1(t)k_2(t),n(t)\}}{n(t)} + o_t(1),
\]
so we obtain the same conclusion as for the asymptotics in $q$. Contrasting these with the star product PIR scheme utilizing GRS codes \cite{PIR2017} that achieves a rate 
$$
R_{\mathrm{PIR}}=1-\frac{k_1+k_2-1}{n},
$$
we see that random codes will typically not yield constructions with an asymptotically good PIR rate. 

{ 
As suggested by the empirical distribution in Table~\ref{tab:distribution_experiments}, over small fields such as $\F_2$, there is a non-negligible probability of randomly sampling code pairs that yield a sub-maximal star product dimension. This implies that one can experimentally search for and successfully identify random code pairs with favorable parameters for PIR schemes. While decoding random linear codes is generally computationally hard in the presence of Byzantine servers, this limitation is not prohibitive in the standard, error-free PIR setting. In the absence of malicious error injection or channel errors, the user retrieval process solely relies on erasure decoding equivalent to solving a linear system, and hence these randomly sampled codes remain practically applicable and efficient for retrieval.
}

\paragraph{Secure distributed matrix multiplication (SDMM).}

Secure distributed matrix multiplication is a protocol for computing a large matrix product using a set of helper servers, such that the content of the matrices remain information-theoretically secure against specified subsets of colluding servers. In \cite{SDMM2024}, a general framework for linear SDMM using star products was established and shown to capture most of the previous SDMM schemes in the literature as special cases, including \cite{d2020gasp, aliasgari2020private, mital2022secure, machado2023hera}. In this framework, a user holds two private matrices $A$ and $B$ and wishes to compute $AB$, while having access to $N$ servers. The user constructs encoded shares and distributes them to the servers using two codes $\C_A,\C_B\leq \F_q^N$. 

We define the \emph{recovery threshold} $R_{\mathrm{SDMM}}$ as the smallest number of server responses needed to guarantee reconstruction of $AB$, and the \emph{straggler tolerance} $S_{\mathrm{SDMM}}$ as the maximum number of unresponsive servers that can be tolerated while still able to reconstruct the product. For a linear SDMM scheme that is decodable, that is, able to reconstruct $AB$, there exist corresponding bounds on the recovery threshold and straggler tolerance.

Instantiating a linear SDMM scheme with random codes (according to Definition~\ref{def:setuprand}) $\C_A,\C_B\leq \F_q^N$ yields with high probability a decodable scheme for sufficiently large $q$. According to \cite{SDMM2024}, such a scheme, when decodable, has a recovery threshold of $R_{\mathrm{SDMM}} = N-d( \C_A \star \C_B)+1$ and can tolerate at most $S_{\mathrm{SDMM}} = d(\C_A \star \C_B)-1$ stragglers. Consequently,
\begin{align*}
    N \geq \lim_{q\rightarrow \infty} \mathbb{E}[R_{\mathrm{SDMM}}] &\geq \min\{\dim \C_A \dim \C_B,N\}
    \intertext{and} 
    \lim_{q\rightarrow \infty} \mathbb{E}[S_{\mathrm{SDMM}}] &\leq N-\min\{\dim \C_A \dim \C_B,N\}.
\end{align*}

{
 
Thus, if $\dim\C_A \dim \C_B < N$ we get a non-maximal lower bound on the expected recovery threshold and a non-zero upper bound on the expected straggler tolerance. However, if $\dim\C_A \dim \C_B \geq N$, then the expected recovery threshold asymptotically approaches $N$ and consequently the straggler tolerance vanishes asymptotically. While this indicates that random constructions typically do not possess the desirable properties of an asymptotically low recovery threshold and an asymptotically high straggler tolerance, the empirical distribution in Table~\ref{tab:distribution_experiments} shows, similarly to the PIR setting, that over small fields there is a non-negligible probability of sampling code pairs with sub-maximal star products. This implies that there is still practical space to experimentally search for and successfully identify specific random code pairs that behave well for SDMM.
}

\paragraph{Binary CSS-T quantum error-correcting codes.}
Quantum error-correcting codes are essential for protecting quantum information against noise and decoherence. A particularly important class is the Calderbank–Shor–Steane (CSS) family, where two classical binary linear codes with specific inclusion properties allow the construction of a quantum code \cite{CSS1,CSS2}. In the setting of transversal T gates, further structural conditions are imposed on the pair of classical codes \cite{CSST}. The resulting CSS-T codes can then be characterized via the star product \cite{camps2023algebraic}. In particular, binary CSS-T codes are pairs of binary non-zero linear codes $(\C_1,\C_2)$ satisfying $\C_2 \subseteq \C_1 \cap (\C_1 \star \C_1)^\perp.$ Furthermore, the resulting binary CSS-T code will have a minimum distance $d'\geq d(\C_2^\perp)$.

For a uniformly random code $\C_1\leq \F_2^n$ with $\dim \C_1 = k$, it was shown in \cite{cascudo2015squares} that for sufficiently large $k$, 
\[
    \dim (\C_1 \star \C_1 )= \min\left\{ n, \binom{\dim \C_1+1}{2}\right\}
\]
 with high probability. Consequently,
\[
    \dim ((\C_1\star \C_1)^\perp )= n  - \min\left\{ n,\binom{\dim \C_1+1}{2}\right\},
\]
Thus, for a uniformly random code $\C_1 \le \F_2^n$ of sufficiently large dimension, we expect with high probability that no binary CSS-T code using $\C_1$ exists. 
When $\dim(\C_1 \star \C_1)$ is large but still smaller than the ambient space, as is typical for random codes when $\dim \C_1$ is not too large, the orthogonal complement $(\C_1 \star \C_1)^\perp$ becomes small. As a result, $\C_2$ is forced to have low dimension. Consequently, the resulting CSS-T code offers only a weak lower bound on the minimum distance, $d' \ge d(\C_2^\perp)$. Conversely, codes with a smaller star product leave more room for choices of $\C_2$ and may yield quantum codes with larger distance.

\section{Conclusions and future work}

We have shown that the expected dimension of the star product of two uniformly random linear codes asymptotically reaches its maximal value, unconditionally in the field size and, under appropriate assumptions, also in the code dimensions. It would be interesting to analyze the non-asymptotic cases further and to derive lower bounds for the expected dimension and its variance, thereby providing a more detailed understanding on the distribution of the probability mass. 

The behavior of star products underpins the performance of many applications such as private information retrieval and secure distributed matrix multiplication, which we also briefly discussed in this paper. One could keep exploring how the derived star product properties extend to other settings where star products play a central role. In particular, the results suggest potential applications to linear exact repair schemes for distributed storage, where componentwise operations naturally arise \cite{guruswamiwooters,matthewsrepair}, and to code-based cryptanalysis, where deviations from maximal star product dimension can serve as structural distinguishers. For instance, in \cite{Mora2022dual} it is shown that the dimension of the square of the dual of a Goppa code is significantly smaller than that of a random code, providing a rigorous upper bound for its dimension. However, the findings concern parameter ranges that do not threaten the current  McEliece Goppa code system parameters, see \cite[Table 1]{Mora2022dual}. It would be interesting to try using different codes in the star product instead of the Goppa code or its dual itself. This would enable changing the dimension of the other code while keeping the Goppa code dimension fixed, which could potentially reveal some differences versus the same code starred with a random code in new parameter ranges relevant to cryptography.

\paragraph*{Acknowledgement.} The authors would like to thank Dr. 
Makkonen and Neehar Verma for valuable discussions and for Example~\ref{exmp:mdscodes}.

\bibliographystyle{IEEEtran}
\bibliography{star-biblio}

\pagebreak 
\begin{appendices}

\section{Comparison of expected dimension and Corollary \ref{cor:lowerboundstar}} \label{app:table}

\begin{table}[!ht]
\footnotesize
\centering
\captionsetup{width=0.95\textwidth}
\caption{{Comparison between $\hat{\mathbb{E}}[\dim(\mathcal{C}_1 \star \mathcal{C}_2)]$ and the lower bound from Corollary~\ref{cor:lowerboundstar}. {Here, $\hat{\mathbb{E}}[\dim(\mathcal{C}_1 \star \mathcal{C}_2)]$ denotes the estimate of $\mathbb{E}[\dim(\mathcal{C}_1 \star \mathcal{C}_2)]$ obtained by a Monte Carlo simulation with 100,000 samples per instance.} 
}}
\label{tab:lowerboundexperiments}
\begin{tabular}{|c|c|c|c|c|c|c|c|}
\hline
$n$                & $k_1$              & $k_2$              & $q$ & $\hat{\mathbb{E}}[\dim (\C_1\star\C_2)]$ & Corollary \ref{cor:lowerboundstar} & Ratio \\
\hline
\multirow{4}{*}{7}  & \multirow{4}{*}{2} & \multirow{4}{*}{3} & 2   & 4.6300 & 4.3629 & 1.06122 \\
                    &                    &                    & 3   & 5.4378 & 5.1610 & 1.05363 \\
                    &                    &                    & 5   & 5.8495 & 5.6761 & 1.03055 \\
                    &                    &                    & 7   & 5.9413 & 5.8348 & 1.01825 \\
\hline
\multirow{4}{*}{7}  & \multirow{4}{*}{2} & \multirow{4}{*}{4} & 2   & 5.2775 & 4.9424 & 1.06780 \\
                    &                    &                    & 3   & 6.1982 & 5.8168 & 1.06557 \\
                    &                    &                    & 5   & 6.7174 & 6.4143 & 1.04725 \\
                    &                    &                    & 7   & 6.8601 & 6.6309 & 1.03457 \\
\hline
\multirow{4}{*}{7}  & \multirow{4}{*}{3} & \multirow{4}{*}{3} & 2   & 5.7126 & 5.4339 & 1.05129 \\
                    &                    &                    & 3   & 6.5438 & 6.2843 & 1.04129 \\
                    &                    &                    & 5   & 6.8996 & 6.7708 & 1.01902 \\
                    &                    &                    & 7   & 6.9643 & 6.8982 & 1.00958 \\
\hline
\multirow{4}{*}{7}  & \multirow{4}{*}{3} & \multirow{4}{*}{4} & 2   & 6.1892 & 5.9594 & 1.03856 \\
                    &                    &                    & 3   & 6.7806 & 6.6232 & 1.02377 \\
                    &                    &                    & 5   & 6.9594 & 6.9011 & 1.00845 \\
                    &                    &                    & 7   & 6.9869 & 6.9582 & 1.00412 \\
\hline
\multirow{4}{*}{11} & \multirow{4}{*}{2} & \multirow{4}{*}{3} & 2   & 5.5320 & 5.3628 & 1.03155 \\
                    &                    &                    & 3   & 5.9525 & 5.9117 & 1.00690 \\
                    &                    &                    & 5   & 5.9983 & 5.9960 & 1.00038 \\
                    &                    &                    & 7   & 5.9998 & 5.9996 & 1.00003 \\
\hline
\multirow{4}{*}{11} & \multirow{4}{*}{2} & \multirow{4}{*}{4} & 2   & 6.9100 & 6.5735 & 1.05119 \\
                    &                    &                    & 3   & 7.7995 & 7.6387 & 1.02105 \\
                    &                    &                    & 5   & 7.9860 & 7.9639 & 1.00277 \\
                    &                    &                    & 7   & 7.9978 & 7.9930 & 1.00060 \\
\hline
\multirow{4}{*}{11} & \multirow{4}{*}{3} & \multirow{4}{*}{3} & 2   & 7.6656 & 7.3205 & 1.04714 \\
                    &                    &                    & 3   & 8.7157 & 8.5237 & 1.02253 \\
                    &                    &                    & 5   & 8.9741 & 8.9360 & 1.00426 \\
                    &                    &                    & 7   & 8.9943 & 8.9822 & 1.00135 \\
\hline
\multirow{4}{*}{11} & \multirow{4}{*}{3} & \multirow{4}{*}{4} & 2   & 8.9683 & 8.5278 & 1.05165 \\
                    &                    &                    & 3   & 10.338 & 9.9850 & 1.03535 \\
                    &                    &                    & 5   & 10.861 & 10.691 & 1.01590 \\
                    &                    &                    & 7   & 10.948 & 10.851 & 1.00894 \\
\hline
\multirow{4}{*}{15} & \multirow{4}{*}{2} & \multirow{4}{*}{3} & 2   & 5.8537 & 5.7877 & 1.01140 \\
                    &                    &                    & 3   & 5.9960 & 5.9922 & 1.00063 \\
                    &                    &                    & 5   & 6.0000 & 5.9990 & 1.00017 \\
                    &                    &                    & 7   & 6.0000 & 6.0000 & 1.00000 \\
\hline
\multirow{4}{*}{15} & \multirow{4}{*}{2} & \multirow{4}{*}{4} & 2   & 7.6326 & 7.4760 & 1.02095 \\
                    &                    &                    & 3   & 7.9843 & 7.9699 & 1.00181 \\
                    &                    &                    & 5   & 7.9999 & 7.9996 & 1.00004 \\
                    &                    &                    & 7   & 8.0000 & 8.0000 & 1.00000 \\
\hline
\multirow{4}{*}{15} & \multirow{4}{*}{3} & \multirow{4}{*}{3} & 2   & 8.5634 & 8.3906 & 1.02059 \\
                    &                    &                    & 3   & 8.9803 & 8.9642 & 1.00180 \\
                    &                    &                    & 5   & 8.9998 & 8.9995 & 1.00003 \\
                    &                    &                    & 7   & 9.0000 & 9.0000 & 1.00000 \\
\hline
\multirow{4}{*}{15} & \multirow{4}{*}{3} & \multirow{4}{*}{4} & 2   & 10.850 & 10.473 & 1.03600 \\
                    &                    &                    & 3   & 11.885 & 11.793 & 1.00780 \\
                    &                    &                    & 5   & 11.996 & 11.990 & 1.00050 \\
                    &                    &                    & 7   & 11.999 & 11.998 & 1.00008 \\
\hline
\end{tabular}
\end{table}

\begin{table}[!ht]
 \footnotesize
\centering
\captionsetup{width=0.95\textwidth}
\caption{Empirical distribution and sample variance ($\hat{\sigma}^2$) of the star product dimension. This data is derived from the exact same 100,000 Monte Carlo samples per parameter instance as in Table~\ref{tab:lowerboundexperiments}. The `Range' column specifies the interval $[\kappa_{\min}, \kappa_{\max}]$ of dimensions observed at least once. The `Frequencies' column provides the tuple of exact observation counts $(f_{\kappa_{\min}}, f_{\kappa_{\min}+1}, \dots, f_{\kappa_{\max}})$.}
\label{tab:distribution_experiments}
\begin{tabular}{|c|c|c|c|c|l|c|}
\hline
$n$ & $k_1$ & $k_2$ & $q$ & Range & Frequencies & $\hat{\sigma}^2$ \\
\hline
\multirow{4}{*}{7}  & \multirow{4}{*}{2} & \multirow{4}{*}{3} 
                    & 2   & $[2, 6]$ & $(665, 8404, 32721, 43684, 14526)$ & 0.73161 \\
                    &                    &                    & 3   & $[2, 6]$ & $(18, 479, 7316, 40077, 52110)$ & 0.42336 \\
                    &                    &                    & 5   & $[3, 6]$ & $(8, 516, 13989, 85487)$ & 0.13862 \\
                    &                    &                    & 7   & $[4, 6]$ & $(74, 5725, 94201)$ & 0.05676 \\
\hline
\multirow{4}{*}{7}  & \multirow{4}{*}{2} & \multirow{4}{*}{4} 
                    & 2   & $[2, 7]$ & $(206, 3128, 16505, 37539, 34117, 8505)$ & 0.91309 \\
                    &                    &                    & 3   & $[2, 7]$ & $(4, 120, 1938, 14373, 45120, 38445)$ & 0.57787 \\
                    &                    &                    & 5   & $[3, 7]$ & $(2, 93, 2070, 23832, 74003)$ & 0.24996 \\
                    &                    &                    & 7   & $[4, 7]$ & $(10, 483, 12993, 86514)$ & 0.13058 \\
\hline
\multirow{4}{*}{7}  & \multirow{4}{*}{3} & \multirow{4}{*}{3} 
                    & 2   & $[3, 7]$ & $(572, 7584, 30128, 43447, 18269)$ & 0.75620 \\
                    &                    &                    & 3   & $[3, 7]$ & $(6, 315, 5379, 33893, 60407)$ & 0.37529 \\
                    &                    &                    & 5   & $[4, 7]$ & $(4, 257, 9517, 90222)$ & 0.09572 \\
                    &                    &                    & 7   & $[5, 7]$ & $(39, 3490, 96471)$ & 0.03518 \\
\hline
\multirow{4}{*}{7}  & \multirow{4}{*}{3} & \multirow{4}{*}{4} 
                    & 2   & $[3, 7]$ & $(120, 2286, 15548, 42645, 39401)$ & 0.61594 \\
                    &                    &                    & 3   & $[4, 7]$ & $(44, 1446, 18911, 79599)$ & 0.20280 \\
                    &                    &                    & 5   & $[5, 7]$ & $(62, 3943, 95995)$ & 0.04025 \\
                    &                    &                    & 7   & $[5, 7]$ & $(4, 1304, 98692)$ & 0.01302 \\
\hline
\multirow{4}{*}{11} & \multirow{4}{*}{2} & \multirow{4}{*}{3} 
                    & 2   & $[2, 6]$ & $(23, 537, 6271, 32552, 60617)$ & 0.40938 \\
                    &                    &                    & 3   & $[3, 6]$ & $(2, 137, 4473, 95388)$ & 0.04813 \\
                    &                    &                    & 5   & $[4, 6]$ & $(1, 166, 99833)$ & 0.00169 \\
                    &                    &                    & 7   & $[5, 6]$ & $(16, 99984)$ & 0.00016 \\
\hline
\multirow{4}{*}{11} & \multirow{4}{*}{2} & \multirow{4}{*}{4} 
                    & 2   & $[2, 8]$ & $(5, 78, 931, 6071, 22444, 41752, 28719)$ & 0.84390 \\
                    &                    &                    & 3   & $[4, 8]$ & $(2, 89, 1541, 16694, 81674)$ & 0.19671 \\
                    &                    &                    & 5   & $[5, 8]$ & $(1, 11, 1378, 98610)$ & 0.01411 \\
                    &                    &                    & 7   & $[6, 8]$ & $(1, 215, 99784)$ & 0.00218 \\
\hline
\multirow{4}{*}{11} & \multirow{4}{*}{3} & \multirow{4}{*}{3} 
                    & 2   & $[3, 9]$ & $(3, 118, 1492, 9194, 29157, 40969, 19067)$ & 0.89209 \\
                    &                    &                    & 3   & $[5, 9]$ & $(5, 126, 2394, 23244, 74231)$ & 0.25952 \\
                    &                    &                    & 5   & $[6, 9]$ & $(1, 16, 2551, 97432)$ & 0.02557 \\
                    &                    &                    & 7   & $[7, 9]$ & $(1, 574, 99425)$ & 0.00574 \\
\hline
\multirow{4}{*}{11} & \multirow{4}{*}{3} & \multirow{4}{*}{4} 
                    & 2   & $[4, 11]$ & $(8, 129, 1350, 7274, 22258, 36562, 26599, 5820)$ & 1.15550 \\
                    &                    &                    & 3   & $[6, 11]$ & $(3, 77, 1173, 10523, 41349, 46875)$ & 0.51432 \\
                    &                    &                    & 5   & $[8, 11]$ & $(11, 423, 13037, 86529)$ & 0.12892 \\
                    &                    &                    & 7   & $[8, 11]$ & $(1, 53, 5141, 94805)$ & 0.05086 \\
\hline
\multirow{4}{*}{15} & \multirow{4}{*}{2} & \multirow{4}{*}{3} 
                    & 2   & $[2, 6]$ & $(1, 20, 887, 12792, 86300)$ & 0.14396 \\
                    &                    &                    & 3   & $[4, 6]$ & $(1, 398, 99601)$ & 0.00400 \\
                    &                    &                    & 5   & $[5, 6]$ & $(1, 99999)$ & $0.00001$ \\
                    &                    &                    & 7   & $[6, 6]$ & $(100000)$ & 0.00000 \\
\hline
\multirow{4}{*}{15} & \multirow{4}{*}{2} & \multirow{4}{*}{4} 
                    & 2   & $[4, 8]$ & $(46, 580, 4628, 25563, 69183)$ & 0.36531 \\
                    &                    &                    & 3   & $[5, 8]$ & $(1, 37, 1493, 98469)$ & 0.01625 \\
                    &                    &                    & 5   & $[7, 8]$ & $(14, 99986)$ & 0.00014 \\
                    &                    &                    & 7   & $[7, 8]$ & $(1, 99999)$ & $0.00001$ \\
\hline
\multirow{4}{*}{15} & \multirow{4}{*}{3} & \multirow{4}{*}{3} 
                    & 2   & $[4, 9]$ & $(2, 34, 690, 5858, 29727, 63689)$ & 0.40902 \\
                    &                    &                    & 3   & $[7, 9]$ & $(25, 1912, 98063)$ & 0.01973 \\
                    &                    &                    & 5   & $[8, 9]$ & $(22, 99978)$ & 0.00022 \\
                    &                    &                    & 7   & $[8, 9]$ & $(1, 99999)$ & $0.00001$ \\
\hline
\multirow{4}{*}{15} & \multirow{4}{*}{3} & \multirow{4}{*}{4} 
                    & 2   & $[6, 12]$ & $(11, 151, 1370, 7238, 23331, 40317, 27582)$ & 0.92638 \\
                    &                    &                    & 3   & $[8, 12]$ & $(2, 14, 511, 10432, 89041)$ & 0.11311 \\
                    &                    &                    & 5   & $[10, 12]$ & $(1, 441, 99558)$ & 0.00443 \\
                    &                    &                    & 7   & $[11, 12]$ & $(85, 99915)$ & 0.00085 \\
\hline
\end{tabular}
\end{table}

\end{appendices}

\end{document}